\newtheorem{theorem}{Theorem}
\newtheorem{prop}{Proposition}
\newtheorem{conclusion}{Conclusion}[theorem]
\newtheorem{definition}{Definition}[theorem]
\DeclareMathOperator{\Tr}{Tr}
\newcommand{\expect}[1]{\langle #1 \rangle}
\DeclarePairedDelimiter{\ceil}{\lceil}{\rceil} 
\DeclarePairedDelimiter{\floor}{\lfloor}{\rfloor}
\DeclareMathOperator*{\argmax}{arg\,max}  
\begin{document}

\title{Efficient learning and optimizing non-Gaussian correlated noise in digitally controlled qubit systems}


\author{Wenzheng Dong}
\email{dongwz@vt.edu}
\email{wenzheng.dong.quantum@gmail.com}
\affiliation{Department of Physics, Virginia Tech, Blacksburg, Virginia 24061, USA}
\affiliation{Shenzhen International Quantum Academy, Futian District, Shenzhen, P. R. China}
\affiliation{Shenzhen Institute for Quantum Science and Engineering,
Southern University of Science and Technology, Shenzhen, P. R. China}

\author{Yuanlong Wang} 		        
\affiliation{Key Laboratory of Systems and Control, Academy of Mathematics and Systems Science, Chinese Academy of Sciences, Beijing 100190, People’s Republic of China}
\author{Muhammad Qasim Khan} 
\affiliation{Department of Physics and Astronomy, Dartmouth College, Hanover, New Hampshire 03755, USA}

\begin{abstract}
Precise qubit control in the presence of spatio-temporally correlated noise is pivotal for transitioning to fault-tolerant quantum computing. Generically, such noise can also have non-Gaussian statistics,
which hampers existing non-Markovian noise spectroscopy protocols. By utilizing frame-based characterization and a symmetry analysis, we show how to achieve higher-order spectral estimation for noise-optimized circuit design. Remarkably, we find that the digitally driven qubit dynamics can be solely determined by the complexity of the applied control, rather than the non-perturbative nature of the non-Gaussian environment. This enables us to address certain non-perturbative qubit dynamics more simply. We delineate several complexity bounds for learning such high-complexity noise and demonstrate our single and two-qubit digital characterization and control using a series of numerical simulations. Our results not only provide insights into the exact solvability of (small-sized) open quantum dynamics but also highlight a resource-efficient approach for optimal control and possible error reduction techniques for current qubit devices.
 \end{abstract}

\date{\today}

\maketitle

\section{Introduction}
Fault-tolerant quantum processors demand control precision over physical qubits beyond the threshold theorem~\cite{Aharonov_2008,Knill_science_1998,Gottesman_QEC_FT}.
However, these systems are delicate, often undermined by interactions with complex environments that introduce spatially and temporally correlated —or ``non-Markovian''~\cite{non_markovian_comment} —noise~\cite{Sung2019,Uwe2020,Dykman_2022,Wudarski_PRL_2023,Maloney_PRA_2022},  which can degrade  processor performance significantly~\cite{Paladino_PRL_2002, Lidar2019Feb, Breuer_book, Breuer2016Apr, Li2018Oct, Paz_PRL_2014, deVega2017Jan, Clerk2010Apr, Breuer2016Apr}. The ``colored'' nature of such noise usually degrades the plug-and-play dynamical decouplings and breaks the basic assumption of stochastic error models behind the error correction code~\cite{Gottesman_class}, necessitating active noise diagnosis and correction~\cite{yan2013rotating,Sung2019,Qasim_SPAM_paper,Frey_NatCom_2017,Uwe2020,Zhou_crosstalk_PRL,Romero2021,white2023unifying,Berk_npjQI_2023, Sankar2008,Burgelman_PRQ_2025}.  Optimization at the circuit level~\cite{Harper_NatPhy,White2020,Vezvaee2024Jul,Ruiz2024Feb, Zeng2023May} requires additional efforts beyond standard gate-level control methods~\cite{Ezzell2023Dec, Stefanatos2021Mar, Daems2013Jul,  Barnes_curve_QST, Tang_curve_PRApp, Nelson_curve_PRA, Calderon_Vargas2017Apr, Kestner2013Apr, Kanaar2024Apr,Wang2012Aug}, as characterizing and controlling (C\&C) non-Markovian noise~\cite{Teerawat_PRXQ_Frame,Paz_PRL_2014,Norris_PRL_2016,Frey_NatCom_2017,White2020, white2023unifying, Wang_RBL_PRL_2019, Wu_RBL_PRL_2024, Shen_RBL_PRL_2023} presents substantial challenges. As quantum processors scale up and circuit depths increase, the complexity of large-scale, non-Markovian dynamics intensifies, demanding novel C\&C strategies~\cite{Aloisio_PRQ_2023, Gambetta2002Jul, Rebentrost_2009_PRL, Trivedi2022Apr, Harper_NatPhy, Berk_npjQI_2023, Dykman_2022}.

Quantum noise spectroscopy (QNS)~\cite{Alvarez_Suter_RPL_2011,Norris_PRL_2016,Youssry_npj_2020,Ferrie_2018,Paz_multiaxis_PRA_2019,Multilevel,Vezvaee_2022,Paola_Walsh,Szankowski_2017,Frey_PRApp_2020,Frey_NatCom_2017,yan2013rotating,Sung2019,Yuge_PRL_2011,Norris_PRA_2018,Uwe2020} has become an essential tool for probing correlated noise in quantum systems. Existing QNS protocols~\cite{Paz_mutilqubit_PRA_2017,Szankowski_2017,LukeRTN,DongWZ_APL} typically truncate the qubit dynamics at the  few lowest orders (Gaussian or leading non-Gaussian) of the parameter space, focusing on learning the noise within these truncated regions. However, extending QNS to account for strong non-Gaussian noise requires higher-order truncation, which becomes increasingly complex as the dimensionality of the parameter space grows. Despite recent advancements, several challenges remain in C\&C of non-Gaussian, non-Markovian noise. Probing non-Gaussian noise~\cite{Norris_PRL_2016,Ramon_trispectrum_PRB_2019, Wang_RBL_PRL_2019,Wu_RBL_PRL_2024} often requires a sophisticated control repertoire, which is beyond the capabilities of many current quantum devices. More importantly, there is no established rule of thumb for selecting the optimal truncation order, and no formal theory exists to quantify the control resources required to learn and mitigate non-Gaussian noise. Intuitively, higher truncation orders yield better results; however, this comes at the cost of pushing the control resource overhead within a limit of the curse of dimensionality in non-perturbative noise. One recent work by some of us~\cite{Teerawat_PRXQ_Frame} sheds light on solving this by introducing a concept of control adaptation, where only specific components of the full noise information, relevant to qubit dynamics, are considered for each fixed-dimensional noise space.  This emphasizes the need for a comprehensive understanding of sampling complexity in learning non-Gaussian noise.

In this work, we address these challenges in a specific control setting --- quantum systems under \textit{digital control}. Here, digital control refers to the application of control pulses that are effectively instantaneous, meaning the time separation between consecutive pulses is significantly longer than the duration of each pulse. Our formalism is achieved by integrating frame-based C\&C with a novel control-based symmetry analysis. For digitally driven quantum systems subject to general non-Gaussian \textit{dephasing} noise, we show that resource requirements for non-Gaussian noise C\&C only depend on the complexity of the applied digital control (i.e., the size of the quantum circuit), rather than the intrinsic complexity of noise itself (i.e., the truncation order of non-Gaussianity). This insight distinguishes our approach from \textit{all} traditional QNS methods and provides a pathway for the effective C\&C of non-perturbative noise.  Based on successful noise spectral estimation from our fundamental digital QNS, we numerically showcase our method's capability to accurately predict qubit dynamics and optimize a two-qubit quantum circuit under spatially and temporally correlated noise. Our analysis offers new perspectives on the minimal control resources required for managing high-complexity non-Gaussian noise in digital quantum circuits. Although our approach is not generally scalable, it provides significant value for open-loop control-based fault tolerance, particularly in settings with minimal noise distribution assumptions. We refer readers primarily interested in applications to Fig.~\ref{fig:teaser} for a schematic illustration of the noise C\&C in our work.

The content is organized as follows. 
Sect.~\ref{sec:Hamiltonian} introduces the Hamiltonian-level setup of open quantum dynamics under non-Markovian dephasing noise and illustrates Dyson series of reduced system dynamics. Sect.~\ref{sec:NonGaussian_complexity} details  how the simultaneous presence of temporal correlations and non-Gaussianity exponentially escalates C\&C resource demands. 
In Sect.~\ref{sec:Frames}, through the lens of the \textit{frame-based filter-function formalism},  we review the concept of window frames and control-adapted spectra. Crucially, we illustrate why model reduction and resource efficiency are crucial in the C\&C of non-Gaussian noise.
Sect.~\ref{sec:Symmetry_analysis} presents novel insights into control symmetries and their impact on the noise spectra. This symmetry analysis is a groundbreaking addition to conventional methods, significantly altering the QNS sample complexity.
In Sect.~\ref{sec:Fundamental_QNS}, we present several interesting conclusions and numerical simulations on exact qubits C$\&$C, culminating in the finding that the full complexity of open quantum circuit dynamics is governed by circuit size rather than noise complexity. This conclusion empowers us to manage C\&C for single- and two-qubit dynamics that exhibit non-perturbative characteristics, which existing QNS protocols cannot address.
We conclude in Sect.~\ref{sec:Discussion}, discussing potential developments and scalability limitations. Several appendices include technical details supporting our approach and examples.

\begin{figure}
    \centering
    \includegraphics[width=1.0\linewidth]{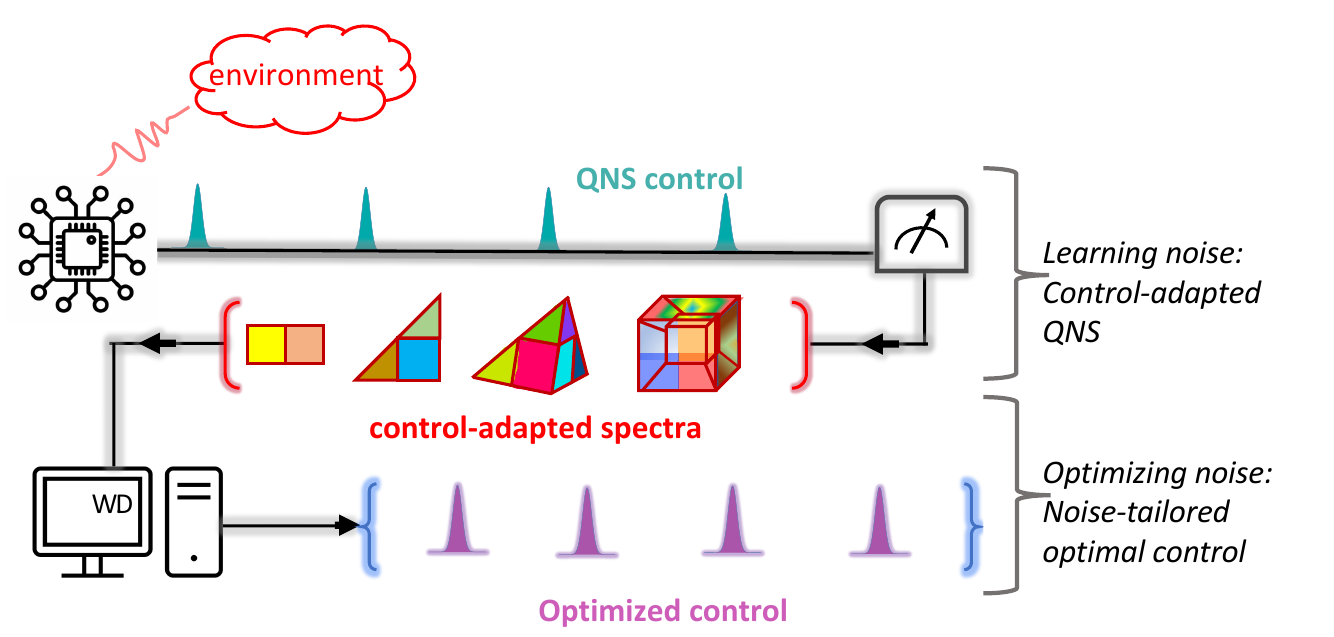}
\caption{\textbf{Illustration of the practical highlight of this work.} This figure highlights two key aspects: \textbf{Learning noise} —-- employing control-adapted quantum noise spectroscopy (QNS) to extract a model-reduced form of spectra (control-adapted spectra) relevant to the applied digital control; and \textbf{Optimizing noise} —-- designing noise-tailored optimal control based on the learned control-adapted spectra to physically mitigate their noise effects in quantum circuits.}
    \label{fig:teaser}
\end{figure}

\section{Controlled multi-qubit dynamics under general dephasing noise}
\label{sec:Hamiltonian}
\subsection{Open-system model and control assumptions}
We consider a controlled multi-qubit system $Q$ interacting with a general dephasing environment $E$, because it is more dominant in various platforms. In a rotating frame defined by the free evolution of the system and environment, the total Hamiltonian can be expressed as
\begin{equation}
    H(t)=H_{QE}(t) + H_{\text{ctrl}}(t),  \qquad 
    H_{QE}(t) = \sum_{q\in Q} \sigma_z^{[q]} \otimes B_q(t).
\end{equation}
Our analysis focuses on the interplay between the system $Q$-only control Hamiltonian $H_{\text{ctrl}}(t)$ and the system-environment interaction $H_{QE}(t)$. In $H_{QE}(t)$, the one-hot Pauli $\sigma_z^{[q]}$ denotes  $\sigma_z$ acting on qubit $q$, with the identity on the remaining qubits $q'\in Q \setminus \{q\}$; such a form arises from the fact that evolution of $Q$ preserves the dephasing. $B_q(t)$ represents the environmental operator coupling to  qubit $q$ in $Q$, whereby its strength ($||B(t)||_2$) implicitly indicating the  $QE$ coupling strength (e.g., a coupling constant $g$). Note that our setup allows different $B_q(t)$ to act on the same environment subsystem; consequently, tracing out environment degrees of freedom results in effective noisy many-body interactions within $Q$ (e.g., $ZZ$-crosstalk). 

For such a Hamiltonian, the error propagation can be perturbatively calculated in the toggling frame, which corresponds to the interaction picture with respect to $H_{\text{ctrl}}(t).$ Defining the control propagator as $U_0(t) = \mathcal{T}_+ e^{-i \int^t_0 H_{\text{ctrl}}(s)ds}$,  the effective error Hamiltonian is expressed as
\begin{equation}
\begin{aligned}
    \tilde{H}(t) &= U_0^{\dagger}(t) H_{QE}(t) U_0(t)  \\
    &\equiv \sum_{q \in Q} \tilde{H}_q(t) \equiv \sum_{q \in Q} \tilde{h}_q(t) \otimes B_q(t)|_{\tilde{h}_q(t) = U^{\dagger}_0(t) \sigma_z^{[q]}U_0(t)}  \\
    &=\sum_{q \in Q} \sum_{u=1}^{4^{|Q|}-1}y_{[q],u}(t) \Lambda_u \otimes B_q(t), 
    \label{eq:eq_1}
\end{aligned}
\end{equation}
where the switching function $y_{[q],u}(t) =\Tr[U_0^{\dagger}(t)\sigma_z^{[q]}  U_0(t) \Lambda_u]/2^{|Q|}$, and $\Lambda_u$ denotes a generator in Pauli group $\mathcal{P}_{|Q|}$, excluding identity and phase factors. That said, $\Lambda_u=\sigma_u$ ($u=\{x,y,z\}$) for $|Q|=1$, and $\Lambda_u=\{ \Lambda_1,\cdots,\Lambda_{15}\} = \{\sigma_0\otimes \sigma_x, \cdots,\sigma_z\otimes \sigma_z \}$ for $|Q|=2$. 
We focus on invertible Hermitian system observable ($O=O^{\dagger}$ and $OO^{\dagger}=\mathbb{I}_Q$) throughout this work, e.g. $O=\sigma_z\otimes\sigma_z$.  Following an approach used in \cite{Teerawat_PRXQ_Frame} and \cite{DongWZ_APL}, we define an auxiliary Hamiltonian
\begin{equation*}
\begin{aligned}
       &{H}_O(t) = 
    \begin{cases}
    \bar{H}(T-t), ~& 0\leq t \leq T,\\
   \widetilde{H}(T+t),  ~& -T\leq t <0,\\ 
    \end{cases}\\
    &\bar{H}(t)\equiv -\widetilde{O}^{-1}(T) \widetilde{H}(t) \widetilde{O}(T),
\end{aligned}
\end{equation*}
where $\tilde{O}(T)=U_0^{\dagger}(T) O U_0(T)$. 
The expectation value of $O$ after system evolution for time $T$ is
\begin{equation}
    \mathbb{E}[O(T)]_{\rho_Q\otimes \rho_E}=\Tr_Q\Big [\expect{\mathcal{T}_{+} e^{-i \int^T_{-T}{H}_O(t)dt }} \rho_Q \tilde{O}(T)\Big],
\end{equation}
where $\rho_Q\otimes \rho_E$ is the joint factorizable initial state and $\expect{\cdot}\equiv\expect{\Tr_E[\cdot \rho_E]}_{\text{C}}$ is a classical-quantum average over $E$.

\subsection{Perturbative representation of the reduced-system dynamics}

The above quantity can be computed using the Dyson series as $ \mathbb{E}[O(T)]= \Tr[\sum_{k=0} ^{\infty}   \mathcal{D}^{(k)}_O(T)/k!  \rho_Q  \tilde{O}(T) ] $.  Inspired by established techniques in \cite{Teerawat_PRXQ_Frame} [Eq.\,(A3) therein] and in \cite{DongWZ_APL} [Eq.(3) therein], the detailed Dyson term reads
\begin{widetext}
\begin{equation}
\begin{aligned}
      \mathcal{D}^{(k)}_O(T)/k! &=(-i)^k \sum_{\vec{q}_{[k]}\in Q^{\otimes k}}\sum\limits^{k}_{l=0}\sum\limits_{\pi\in \Pi_{l;k}} \int^T_{0}d_>\vec{t}_{[k]} \Big\langle \prod\limits^l_{j=1}\bar{H}_{q_j}(t_{\pi(j)}) \prod\limits^k_{j'=l+1} \tilde{H}_{q_{j'}}(t_{\pi(j')})\Big\rangle\\
      =(-i)^k&  \sum_{\substack{\vec{q}_{[k]}; ~l; \\ \pi; ~\vec{u}; ~\vec{c}}}   \int^T_0d_>\vec{t}_{[k]} 
      \Big[  \prod_{j=1}^l\prod_{j'=l+1}^k f^{c_j}_{u_{j}}y_{[q_j],c_j}(t_{\pi(j)})y_{[q_{j'}],u_{j'}}(t_{\pi(j')})\Lambda_{u_j}\Lambda_{u_{j'}}  \Big]  \\
      & \quad \quad \quad \quad \quad \times  \sum_{\vec{\mu}_{[k-1]}\in \{0,1\}^{\otimes k-1}}  (-1)^{\bar{f}^{(k)}_{\pi}(\vec{\mu}_{[k-1]})}  \expect{\mathcal{B}^{\vec{\mu}_{[k-1]}}_{ \vec{q}_{[k]}} (t_1, ..., t_k)}.
\end{aligned}
\label{eq:dyson}
\end{equation}
\end{widetext}
Here $d_{>}\vec{t}_{[k]}$ indicates time-ordered integral such that $t_1\geq \cdots \geq t_k$, and $\Pi_{l;k}$  denotes the set of all admissible permutations in the expansion. Specifically, $\pi\in \Pi_{l;k}$ imposes that $t_{\pi(j)}$ in $l$-length $\bar{H}$ string must be nondecreasing, while $t_{\pi(j')}$ in $(k-l)$-length $\tilde{H}$ string must be nonincreasing, adhering to time ordering. A conjugation factor $f^{c}_u \equiv -\frac{1}{2^{|Q|}} \Tr[\tilde{O}^{\dagger}(T) \Lambda_{u} \tilde{O}(T) \Lambda_{c}] $ is introduced to help organize the general Pauli product $\prod_{j} \Lambda_{u_j}$.
Regarding $E$ , the noise correlator is expressed in a  \textit{nested bracket} representation~\cite{Wang_RBL_PRL_2019,Gasbarri_PRA_2018}:
\begin{equation}
\begin{aligned}
    \mathcal{B}^{\vec{\mu}_{[k-1]}}_{ \vec{q}_{[k]}} (\vec{t})\equiv     \frac{1}{2^{k-1}} \widehat{P}_{\vec{q}}\big( & \llbracket ...\llbracket B_{q_1}(t_1), B_{q_2}(t_2) \rrbracket_{\mu_1}, \\
    & B_{q_3}(t_3) \rrbracket_{\mu_2}, ..,  B_{q_k}(t_k) \rrbracket_{\mu_{k-1}} \big),
\end{aligned}
\end{equation} 
where  each bracket marked by a sign $\mu\in\{0,1\}$ gives either a commutator or an anti-commutator $\llbracket X,Y \rrbracket_{\mu} = XY+(-1)^{\mu} YX$. The operator $\widehat{P}_{\vec{q}}$ is a $q$ ordering operator to ensure that the expanded operator string is consistently indexed by $\{q_1,...,q_k\}$. 
However, using nested bracket representation, which aligns all times in order, introduces a complexity: all noise correlators enter the full dynamics with a binary phase factor of $(-1)^{\bar{f}^{(k)}_{\pi}(\vec{\mu})}$ determined by $\pi$ and $\vec{\mu}$.
The above discussion of the algebraic details in Eq.~\ref{eq:dyson} is indeed high level, and we refer  readers to the Appendix~\ref{app:dyson} for  elaboration.   

In summary, the Dyson expression detailed above constitutes a superoperator that accurately captures the reduced dynamics of the system $Q$. This occurs through the convolution of control specific to the system and noise correlators from the environment. Gaining insight into and manipulating this reduced dynamics is crucial for the progress of quantum technologies, especially when the environment cannot be controlled~\cite{Khodjasteh_PRL_2009,Khodjasteh_PRL_2010,Glaser2015Dec,Wang2012Aug, Khodjasteh_PRL_2005, Kestner2013Apr, Kabytayev_PRA_2014, Barnes_curve_QST}. 

\section{Challenges in non-Gaussian non-Markovian noise Characterization}
\label{sec:NonGaussian_complexity}

Dynamical decoupling has been successfully demonstrated for noise suppression in qubit devices~\cite{Viola_PRA_1998, Viola_PRL_1999, Biercuk2009Apr, Uys2009Jul, Biercuk2011, Green2015Mar, Malinowski_NatTech_2017,Pokharel_PRL_2018, Ezzell2023Dec}. However, its efficiency diminishes when noise deviates from static and correlations become significant. In this situation, it is believed that understanding such ``colored'' noise~\cite{Paladino_PRL_2002, Paladino_RMP_2014, Alvarez_Suter_RPL_2011, Norris_PRL_2016, Paz_mutilqubit_PRA_2017} and designing optimal control strategies are essential to mitigating its impact on qubit dynamics.

\subsection{Growing complexity in non-Gaussian noise characterization}

In quantum control, gate-level optimization often assumes that the aforementioned reduced system dynamics is Markovian, which corresponds to the leading-order term ($k=1$) in the Dyson series. This assumption, based on the rapid gate speed ($T\rightarrow 0$) or a very weak coupling regime ($g\rightarrow 0$),  simplifies control design by rendering the reduced system dynamics temporally uncorrelated and memoryless. This simplification facilitates numerous exactly solvable models for robust gate design~\cite{Barnes_curve_QST, Walelign2024Dec, xiuhao_PRL}. However, as soon as the dynamics exhibit nontrivial $k=2$ contributions, the dynamics become non-Markovian, introducing temporal correlations that complicate the behavior of the qubit system. 
These temporal correlations are captured by two-point noise correlators: $\expect{\mathcal{B}^{0}_{(q_1,q_2)} (\vec{t})} = (\expect{B_{q_1}(t_1)B_{q_2}(t_2)} + \expect{B_{q_1}(t_2)B_{q_2}(t_1)})/2$ and $\expect{\mathcal{B}^{1}_{(q_1,q_2)} (\vec{t})} = (\expect{B_{q_1}(t_1)B_{q_2}(t_2)} - \expect{B_{q_1}(t_2)B_{q_2}(t_1)})/2$, which correspond to classical and quantum noise correlators, respectively~\cite{Paz_PRL_2014, Paz_mutilqubit_PRA_2017, Teerawat_PRXQ_Frame, YuanlongWang_Q_noise}. For $|Q|>1$,  we shall label the correlator as ``self'' and ``cross'' in situations of $q_1=q_2$ or $q_1\neq q_2$, respectively.   The naming of classical and quantum is evident in a single-qubit case, where noise is fully described by a classical stochastic process, say, $B(t)=\beta(t) \mathbb{I}_E$ with $\mathbb{I}_E$  the identity on $E$. In such a case, only the classical correlator $\expect{\mathcal{B}^{0} (\vec{t})}=\expect{\beta(t_1)\beta(t_2)\rho_E(t)}=\expect{\beta(t_1)\beta(t_2)}_{\text{C}}$ is non-zero; 
and its Fourier transform gives the well-known \textit{power spectral density} (also called noise spectrum) $S_1(\omega) \equiv \int^{+\infty} _{-\infty} d\tau e^{-i\omega \tau} \expect{\beta(\tau) \beta(0)}_{\text{C}}$. Similarly, the Fourier transform of $\expect{\mathcal{B}^{1} (\vec{t})}$ gives the quantum noise spectrum in the frequency domain. The aforementioned classical noise process has vanishing quantum noise spectra under regular commutation. 
If $B(t)$ at different times does not commute,  both classical and quantum noise correlators, as well as their corresponding noise spectra, remain non-zero. 

Accurate estimation of noise power spectra, \textit{\`a la} quantum noise spectroscopy (QNS), serves as a fundamental diagnostic for analyzing and mitigating noise in quantum systems, which has been a mainstay in noise C\&C~\cite{Alvarez_Suter_RPL_2011, yan2013rotating, Paz_mutilqubit_PRA_2017, Szankowski_2017, Frey_NatCom_2017, Uwe2020, Teerawat_PRXQ_Frame, Qasim_SPAM_paper}. 
In QNS protocols, a repertoire of control sequences is applied to learn the noise information from system-only readouts. The QNS \textit{sample complexity}, representing the overhead of control sequences, is equivalent to the number of spectra to be estimated. 
Recent advances in learning two-qubit \textit{cross-power spectral density} (the Fourier transformed spectrum of $\expect{\mathcal{B}^{0}_{(A,B)} (\vec{t})})$ have highlighted the environment-mediated qubit crosstalk as a significant barrier to the scalability of qubit devices~\cite{Uwe2020,tarucha_NatPhy, tarucha_PRApp}.

In a strong coupling regime and at sufficiently long evolution times, $k=3$ or higher non-Gaussian noise contributions emerge (provided noise statistics are not strictly Gaussian; otherwise, $k=2$ cumulant expansion suffices), reflecting increasingly complicated environmental correlations. At this point, higher-order multiple-time and multi-location correlators (or spectra) must be considered, leading to a richer yet more complicated representation of noise. For example, the so-called \textit{bispectrum} can be defined for the Fourier transform of $\expect{\mathcal{B}^{(0,0)}(t_1,t_2,t_3)}$~\cite{Mendel1991Mar, Norris_PRL_2016, Ramon_trispectrum_PRB_2019}.  However, the cross- and quantum versions of non-Gaussian noise remain largely unexplored because of their high complexity.

As $k$ increases further, the QNS sample complexity grows exponentially. While Gaussian noise can be systematically addressed through standard QNS, learning non-Gaussian noise requires advanced techniques that often involve higher sampling complexity and specially designed control protocols. For general noise statistics where $B_q(t)$ at different $t$ don't vanish under commutators, there is just one classical noise term ($\vec{\mu}\equiv\vec{0}$) and $2^{k-1}-1$ quantum noise terms ($\vec{\mu}\neq\vec{0}$)~\cite{PingWang_PRL_2019}. In addition, there are $|Q|$  self-spectra and  $|Q|^k-|Q|$ cross-spectra. Altogether, there are $2^{k-1}|Q|^k$ possible combinations when accounting for all locations and signs (of brackets) at $k$-th order~\cite{Wang_RBL_PRL_2019}. 
When temporal correlation details are incorporated, the dimension of the noise space, as well as the QNS sample complexity balloons significantly. In practice, QNS protocols are truncation-based, whereby only noise correlators of orders up to $k=K$ are considered, neglecting all higher-order ($k>K$) contributions. In fact, learning strong non-Gaussian noises is caught between a rock and a hard place: Higher-order truncation promises better spectral reconstruction at a cost of large throughput data to process, while low-order truncation is troubled by more severe spectral reconstruction error. The $k>K$ order spectra of strong non-Gaussian noise, if carelessly neglected, entered the reduced dynamics undesirably and the  spectral reconstruction produced  deviates by those higher-order spectra on the small learning space.  
Due to these challenges, existing QNS protocols typically consider the leading orders (up to $K=4$)~\cite{Norris_PRL_2016, Ramon_trispectrum_PRB_2019, DongWZ_APL,Wu_RBL_PRL_2024} of non-Gaussian noise.
However, truncating the perturbative expansion at $K=4$ is not always sufficient. Much higher-order non-Gaussian signatures have been observed in realistic quantum systems~\cite{McCourt_DD_learn_noise,Neder_NatPhy_2007,Galperin_PRB_2007,Kotler_PRL_2013}. For example, Ref.~\cite{McCourt_DD_learn_noise} observed non-perturbative features in non-Gaussian contributions~\cite{Santos_PRA_2005} ($k\gg 10$), which standard QNS protocols struggle to work. In summary, the shift from Gaussian to non-Gaussian noise represents a central challenge in the C\&C of general open quantum systems.

\subsection{Overview of frequency-based noise spectral estimation}
Most QNS protocols work in the frequency domain for spectral estimation. We now briefly review frequency-based noise characterization to highlight its limitations. Take the example of a single qubit dephased by stationary zero-mean classical Gaussian noise that is described by a stochastic process $\beta(t)$. In this scenario, the noise correlator is $\expect{\beta(t_1-t_2) \beta(0)}_{\text{C}}$. The frequency-based noise characterization in a standard picture (SP) starts with focusing on the noise correlator.
The power spectral density can be calculated as
\begin{equation}
S_1(\omega) \equiv \int^{+\infty} _{-\infty} d\tau e^{-i\omega \tau} \expect{\beta(\tau) \beta(0)}_{\text{C}}.
\end{equation}
In frequency or comb-based QNS, the qubit control $\tilde{\mathcal{C}}_{\text{comb}} $  consists of long sequences of dynamical decoupling pulses; the fast-flipped switching function is Fourier transformed to form the filter function $F(\omega, T)$~\cite{Alvarez_Suter_RPL_2011}. In this case, the qubit dynamics is sufficiently determined by a function $ \int^{+\infty} _{-\infty} d \omega |F(\omega,T)|^2 S_1(\omega)/4\pi $.  
When estimating non-Gaussian noise spectra, hyper-comb-based QNS~\cite{Norris_PRL_2016} is employed to probe function $\int^{+\infty} _{-\infty} d \vec{\omega} \\ \Big[ \prod^k_{j=1}  F(\omega_j,T) \Big]S_{k-1}(\vec{\omega}),$ where $S_{k-1}(\vec{\omega}) =$ $$  \int^{+\infty}_{-\infty}  d\vec{\tau}_{[k-1]}  e^{-i\vec{\omega} \cdot\vec{\tau}}\expect{\beta(\tau_{k-1})\beta({\tau_{k-2}})..\beta(\tau_1)\beta(0)}_{\text{C}}$$ represents  frequency-based \textit{polyspectrum} (bispectrum for $S_{2}(\vec{\omega})$ and trispectrum for $S_{3}(\vec{\omega})$). 

Frequency-based QNS faces two fundamental challenges.
First, the frequency comb needs a discrete sampling of a series of harmonic points in the frequency space. Using $\tilde{L}_{\omega}$ to represent how many points at which frequency is sampled between the frequency cutoffs, we can express the scaling of the sample complexity of the comb-based QNS as $N^{(k)}_{\tilde{\mathcal{C}}_{\text{comb}}}\sim \mathcal{O}(\tilde{L}_{\omega})$, where $k$ is the order of the expansion of the perturbation. 
The resource inefficiency, i.e., large sample complexity, of comb-based QNS becomes more pronounced for stronger non-Gaussianity since $\mathcal{O}(\tilde{L}_{\omega}^{k-1})$  grows significantly for large $k$, especially when $\tilde{L}_{\omega} \gg 1$ ($\tilde{L}_{\omega} \gg 1$ is essential to ensure accurate interpolation).  
Second, as we denote the comb control as $\tilde{\mathcal{C}}_{\text{comb}}$, such a control normally needs very long sequences of unequally spaced, instantaneous pulses. This typically lies outside the set of high-fidelity constrained control, denoted by $\mathcal{C}$.  We express such a control misalignment as $\tilde{\mathcal{C}}_{\text{comb}} \notin \mathcal{C}$. For example, for single-qubit gate fidelity $99.99\%$, employing a frequency comb with 100 pulses gives $10\%$ infidelity. 

\section{Resource-efficient control-adapted C\&C}
\label{sec:Frames}
In this section, we review the frame-based formalism and the control-adapted (CA) QNS to illustrate the resource efficiency of CA QNS in noise C\&C. 
We refer the reader to~\cite{Teerawat_PRXQ_Frame} for more in-depth mathematical details about the notion of frames in qubit dynamics and CA QNS. 

\subsection{Two useful structures in non-Gaussian open quantum dynamics}
Although the high-order convolution introduced in the Dyson series for non-Gaussian cases may seem so complicated to handle, two key structures can be identified to alleviate this challenge. In the following, we present two structures that serve as foundational elements for the two main pillars of our work: control-adapted QNS (discussed in this Section) and control-based symmetry analysis (discussed in Sec.~\ref{sec:Symmetry_analysis}). 

First, the time-ordered integral of all time-dependent elements in Dyson series identifies a \textit{dynamical integral} $\mathcal{I}^{(k)}_{[\vec{q}], \vec{u}}(T) = \int^T_0 d_>t_{[k]}  \big[ \prod_{j=1}^k y_{[q_j],u_j}(t_j)  \big] \expect{\mathcal{B}^{\vec{\mu}_{[k-1]}}_{ \vec{q}_{[k]}} (\vec{t})} $~\cite{Paz_PRL_2014,Teerawat_PRXQ_Frame}. The dynamical integral encapsulates the temporal ``convolution'' between control and noise, reflecting an inherent difficulty in addressing non-Markovian non-Gaussian $(k\geq2)$ open qubit dynamics. 

However, it is evident that the Pauli string, which plays a crucial role in $Q$'s dynamics, is absent from the dynamical integral structure described above. As the algebra of Pauli strings is essential in subsequent symmetry analysis, we thus define the full $Q$ relevant components in the temporal integrand in Eq.~\ref{eq:dyson}, which include both the switching function and the Pauli string, as the \textit{control tensor}.  It is expressed as 
\begin{widetext}
\begin{equation} 
\begin{aligned}
    \mathbf{T}^{(k)}_{\vec{q}; \vec{\mu}}(\vec{t}) &\equiv \lambda^{(k)}[\sum_{\vec{u}} \prod_{i=1}^k y_{[q_j],u_i}(t_i) \Lambda_{{u_i}}] = \sum_{l,\pi,\vec{u},\vec{c}}(-1)^{\bar{f}^{(k)}_{\pi}(\vec{\mu})} 
        \prod_{j=1}^l\prod_{j'=l+1}^k f^{c_j}_{u_{j}}y_{[q_j],c_j}(t_{\pi(j)})y_{[q_{j'}],u_{j'}}(t_{\pi(j')})\Lambda_{u_j}\Lambda_{u_{j'}} 
\label{eq:control_tensor}
\end{aligned}
\end{equation} 
\end{widetext}
where  $\lambda^{(k)}[...]$ is a control independent linear map as detailed above.  In the single-qubit case, the control tensor is denoted by $\mathbf{T}^{(k)}_{\vec{\mu}}(\vec{t}) $, omitting the index $q$. 
It is precisely the control tensor that convolutes and filters the noise correlators in qubit dynamics. 
Based on such a definition, we can recompose the Dyson term as 
$$\mathcal{D}^{(k)}_O(T)/k! = (-i)^k\sum_{\vec{q},\vec{\mu}} \int^T_{0} d_{>}\vec{t}_{[k]} \mathbf{T}^{(k)}_{\vec{q};\vec{\mu}}(\vec{t})  \mathcal{B}^{\vec{\mu}}_{ \vec{q}} (\vec{t}). $$
While the control tensor may reduce to a matrix of the dimension of $\Lambda$,  we emphasize that, for this study, it is treated as a tensor indexed by $\vec{q}, \vec{\mu},\vec{t}$. These three indices define a noise correlator.  For a fixed $\vec{q}$ and $\vec{\mu}$, $\mathbf{T}^{(k)}_{\vec{q}; \vec{\mu}}(\vec{t})$ is a $k$-dimensional tensor with each dimension $t$ supported continuously in $[0,T]$.  

\begin{figure}[!htbp]
    \centering
    \includegraphics[width =0.4 \textwidth]{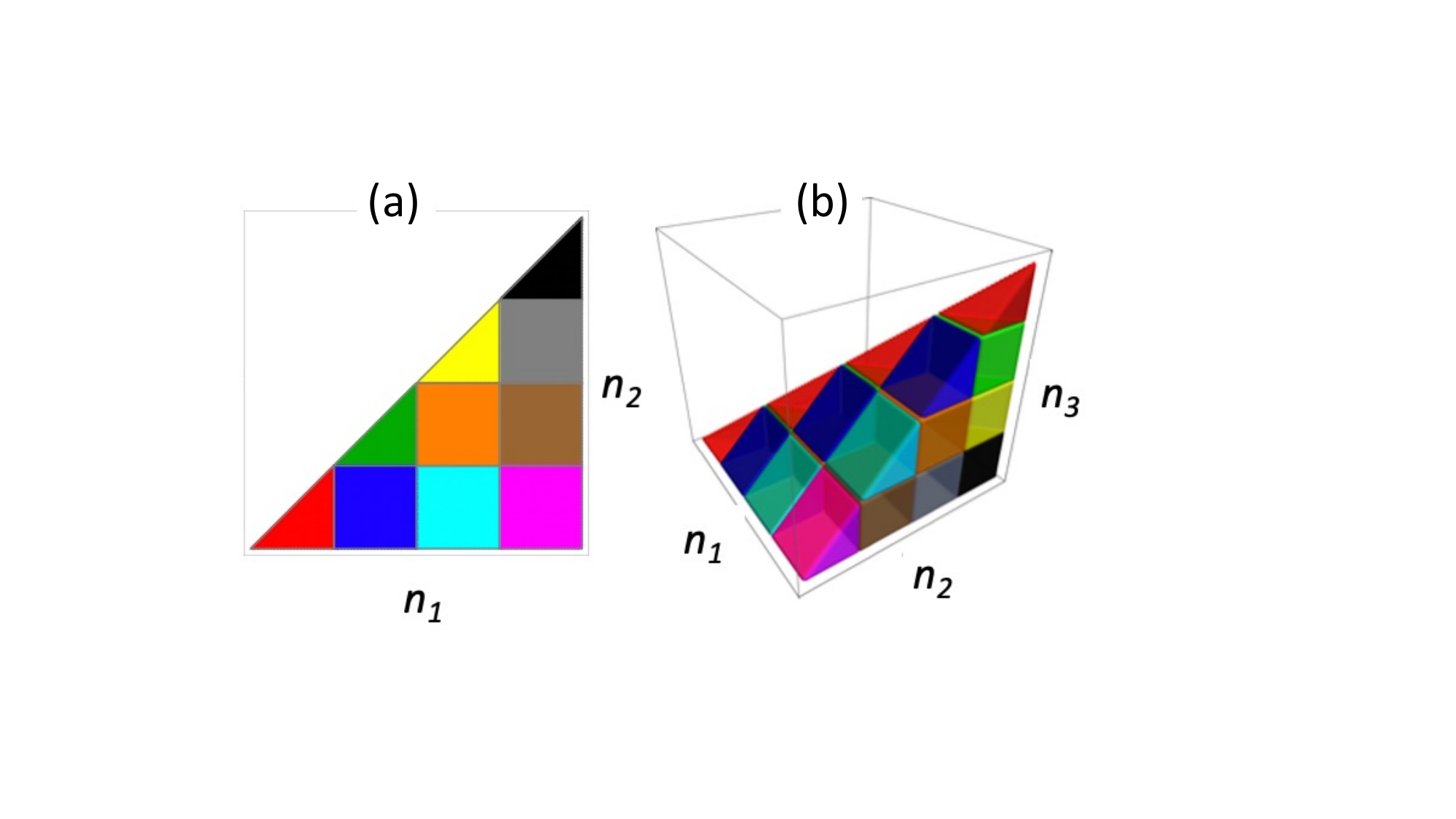}
    \caption{ \textbf{A sketch of digital CA spectra} 
     (a) Gaussian [$k=2$] CA digital spectra $\bar{S}(n_1,n_2)$ with four windows, where $L=4\geq n_1\geq n_2\geq1$.  
     (b)  Non-Gaussian [$k=3$] CA digital spectra with four windows where the noise is stationary. This feature is shown as translational invariance along diagonal direction $\bar{S}(n_1,n_2,n_3) = \bar{S}(n_1-m,n_2-m,n_3-m)$, where $L=4\geq n_1\geq \cdots n_L\geq1$. }
    \label{fig:CA_spectra_demo}
\end{figure}

\subsection{Control-adapted C\&C with frames}

Recall that in SP, Fourier functions are introduced to target the noise correlators. In the CA framework, basis or frame representations (a generalization of bases that include non-orthogonal functions) are defined in terms of the control variables rather than on the noise correlator. This focus on a control approach, as opposed to noise, is driven by the recognition that actual control operations are typically constrained by strong limitations. These constraints, which include restrictions on control waveform profiles and pulse timing, are generally well-understood and are described by some form of time-dependent functions.

We verbalize this control adaptation abstractly and then elaborate on it using detailed equations in the following paragraphs. To accommodate these constraints, we select appropriate frames (using the nomenclature from Ref~\cite{Teerawat_PRXQ_Frame}) that allow any controlled operation to be expressed using a finite and preferably minimal set of frame elements. That said, we can decompose the constrained time-dependent control (we use $\mathcal{C}$ hereafter to denote the control that respects realistic constraints and can be well implemented) as $$\text{control}(t) =\sum_n \text{coefficient}_n\times \text{frame function}_n(t),$$ where the frame functions match the constraints such that the upper bound of $n$ is small. Notice that the coefficients are time independent and encode customizable control parameters (denoted $P$ hereafter) that satisfy control constraints. For example, $P$ can include the control amplitude in $[0,2\pi]$ or the phase factor in $[0,\pi]$ along the $x-y$ plane.
Consequently, the constrained control $\mathcal{C}$ is transferred to the noise correlator as a temporal integral. 
This process implies that only certain noise components, determined by the control frames, are relevant to the restricted qubit dynamics. These are the noise components that should be identified and characterized through QNS. These components of noise that influence the qubit dynamics are compactly represented as ``control-adapted spectra,'' which are easier to analyze using fewer, or even minimal, control resources.

In this paper, we illustrate these concepts by \textit{only} focusing on digital controls where the frames correspond to an orthogonal window (or Walsh) basis. 

\paragraph{Digital Control}
We consider a scenario in which qubit control comprises a sequence of $L$ instantaneous, equidistant, perfect gates over a time interval $[0, T]$. This setup assumes that the inter-pulse time delay significantly exceeds the gate execution time.  As progress in reducing the two-qubit gate time (e.g., $40$ ns for two-qubit gate versus $20$ns for single-qubit gate in~\cite{Google_surface_code_Nature}), instantaneous control assumption would be a reasonable assumption in some, if not all, part of real-device quantum circuits. 
This assumption also indicates that the control unitary and noise impact are separated in the noisy quantum channel, which is widely presumed in many channel-based error characterization and mitigation protocols \cite{White2020, SamsonWang, Harper_NatPhy}. Under our assumption, we have $\mathcal{C}_{\text{W}}\in \mathcal{C}$.

\paragraph{Frame-based Filter Function Formalism}
In digital control, the switching function $y_{[q], u}(t)$ is a piecewise constant. The constraints and symmetries underlying digital control help identify the window function as the appropriate frames (essentially a basis set) for $Q$. For digital window frames $\mathcal{C}_{\text{W}} \equiv \{W_n(t)\}$ and $\mathcal{C}_{\text{W}} \in \mathcal{C}$, we define:
\begin{equation}
    W_n(t) = 
    \begin{cases}
        1, & \text{if } (n-1)\tau \leq t < n\tau,\\
        0, & \text{otherwise},
    \end{cases}
\end{equation}
where $\tau \equiv T/L$ and $n = \{1, \dots, L\}$. The switching function can be expressed using these window frames as $ y_{[q],u}(t) = \sum_{n=1}^L F^{(1)}_{[q],u}(n) W_n(t)$, where the expansion coefficient
\begin{equation}
    F^{(1)}_{[q],u}(n) = \frac{1}{\tau} \int_0^T ds \, y_{[q],u}(s) W_n(s)
\end{equation}
is known as the the \textit{frame-based filter function}. The way the frame-based filter function is defined guarantees that all allowed tunable degrees of freedom $P$, e.g., the amplitude or phase of each pulse, naturally respect constraints $\mathcal{C}_{\text{W}}$ because they are only encoded in $F^{(1)}_{[q],u}(n)$ rather than in $ W_n(t)$. 

This approach prioritizes constrained control and significantly diverges from Fourier-based SP methods; in the latter case, control is associated with noise-oriented bases or frames, which are potentially misaligned with actual control constraints. 
In the frame expansion approach, the noise correlator is passively associated with the control frames in the temporal convolution, resulting in the so-called CA spectra, which are easier to analyze than their complete counterparts. These CA spectra are defined as:
\begin{equation}
\begin{aligned}
   \bar{S}^{\vec{\mu}_{[k-1]}}_{\vec{q}}(\vec{n})& \equiv \int^T_0 d_>\vec{t}_{[k]} \expect{\mathcal{B}^{\vec{\mu}_{[k-1]}}_{\vec{q}} (\vec{t})}   \prod\limits^k_{j=1} W_{n_j}(t_j), \\
\end{aligned}
\end{equation}
where the $n$-string $(n_1,\cdots,n_k)$ satisfies $1\leq n_k \leq \cdots n_1 \leq L$. Any $n$-string that violates the time ordering results in a vanishing CA spectrum.   By definition, the full knowledge of the noise is temporarily convoluted with the control frames that are ``back-actioned'' onto it.  We refer the reader to a geometric illustration of the digital CA spectra for $L=4$ in Fig.~\ref{fig:CA_spectra_demo}, and note that only these ``coarse-grained'' spectra contribute to digitally controlled qubit dynamics.  The CA spectra are involved in the dynamical integral as 
\begin{equation}
    \mathcal{I}^{(k)}_{{[q],\vec{u}}}(T)|_{\text{CA}}=\sum\limits_{\vec{n}}\prod\limits_{j=1}^k F^{(1)}_{[q_j], u_j}(n_j)\bar{S}^{\vec{\mu}_{[k-1]}}_{\vec{q}}(\vec{n}).
\end{equation}
The CA spectra represent a ``model-reduced'' description of the noise, and in fact, the model reduction allows one to design QNS protocols for spectral reconstruction more easily than protocols in SP. 

For $|Q|>1$, $\bar{S}^{\mu}_{q,q}(n_1,n_2)$ represents Gaussian \textit{CA self-spectra} while $\bar{S}^{\mu}_{q,\bar{q}}(n_1,n_2)$ for $q\neq \bar{q}$ represents  Gaussian \textit{CA cross-spectra}, the latter representing a truly model-reduced depiction of spatio-temporally correlated noise.  As non-Gaussianity emerges, the complexity escalates, making computational efforts untenable without model reduction. Model reduction significantly reduces the sample complexity compared to traditional harmonic sampling in comb-based QNS, thus establishing the resource efficiency of the C\&C method with finite $L$. On the one hand, the number of single-qubit CA spectra at $k$-th order scales $N^{(k)}_{\mathcal{C}_{\text{W}}}(L)=\mathcal{O}(L^k 2^{k-1})$ [and $\mathcal{O}(L^k)$ for solely classical noise], which corresponds to the QNS {sample complexity}. For classical noise, the resource advantage of CA QNS over SP QNS can be roughly expressed by 
\begin{equation}
    N^{(k)}_{\mathcal{C}_{\text{W}}}(L)\big|_{\text{CA}} \bigg/ N^{(k)}_{\tilde{\mathcal{C}}_{\text{comb}}}(\tilde{L}_{\omega})\big|_{\text{SP}} = (L/\tilde{L}_{\omega})^k.
\end{equation}
When $L<\tilde{L}_{\omega}$, C$\&$C lead to significant model reduction, with higher $k$ yielding more resource efficiency~\cite{DongWZ_APL}. 
Additionally, $\prod_{j=1}^k F^{(1)}_{[q_j]; u_j}(n_j)$  is a product form and inherently incorporates control constraints, which allows one to design the appropriate control details $P$ by properly choosing $F^{(1)}(n)$, simplifying  C$\&$C design.

\paragraph{Dyson series in digital frames} 
Following a symbolic structure similar to the above CA dynamical integral, one can obtain a similar form for the CA Dyson expression as:
\begin{equation}
\begin{aligned}
    \mathcal{D}_{O}^{(k)}(T)/k! |_{\text{CA}} &= (-i)^k\sum\limits_{\vec{n},\vec{\mu},\vec{q}} \lambda^{(k)}\big[\sum_{\vec{u}} \bm{F}_{[\vec{q}], \vec{u}}(\vec{n})\Lambda_{u_1}...\Lambda_{u_k} \big] \\
    & \quad \quad \quad \times \bar{S}^{\vec{\mu}_{[k-1]}}_{\vec{q}}(\vec{n})\\
    &=(-i)^k\sum_{\vec{n},\vec{q},\vec{\mu}}  \mathbf{T}^{(k)}_{\vec{q};\vec{\mu}}(\vec{n})  \bar{{S}}^{\vec{\mu}_{[k-1]}}_{ \vec{q}} (\vec{n}),
    \end{aligned}
\end{equation}
where $\bm{F}_{[\vec{q}], \vec{u}}(\vec{u})\equiv \prod^k_{j=1}F^{(1)}_{[q_j],u_j}(n_j)$. It is evident that the frame transformation applies exclusively to $y(t)$ within the control tensor, as this is the only time-dependent component. The CA representation of the Dyson series also allows us to express the observable $\mathbb{E}[O(T)]_{\rho_Q\otimes \rho_E}$ in the CA picture for all $O$ and $\rho_Q$, which is detailed in the Appendix.~\ref{app:ugly_D1_D4}.

\paragraph{Control-adapted QNS} The CA QNS protocols work on the level of $F^{(1)}_{[q],u}(n)$ to reconstruct CA spectra.  They follow the conventional steps of other QNS methods: configuring various control setups (including pulse parameters,  measurement bases, and initial states) and using the measured expectation values of observables $\mathbb{E}[O(T)]$ to deduce the CA spectra. By establishing the appropriate CA QNS controls, the CA spectra are subsequently derived by solving a linear system. 

We note that for non-Gaussian dynamics, the noise spectra exhibit higher nonlinearity, making designing a QNS protocol much harder than for non-Gaussian noise. This complication is discussed in the Appendix~\ref{app:qns_ctrl_design}.

\paragraph{Control-adapted optimal control design}
Once the control-adapted spectra are identified via CA QNS, this information helps to design appropriate frame-based filter functions tailored to specific objectives.
For example, one can devise an optimal noise-tailored control $P_{\text{opt}}$ by optimizing $F^{(1)}_{[q],u}(n)$ which is constrained within the persistent control symmetries $\mathcal{C}_{\text{W}}$. 

\section{Digital Symmetry analysis}
\label{sec:Symmetry_analysis}

As demonstrated in the preceding section, control adaptation enables resource-efficient C\&C of the CA spectra. In this section, we extend the concept of control adaptation by introducing a symmetry analysis on digital control, which in turn induces symmetry in the CA spectra. This induced symmetry ultimately makes C\&C even more resource-efficient, and eventually changes the CA digital QNS sample complexity.

Symmetry analysis traditionally applies to noise spectra in frequency-based QNS, especially to simplify the spectral sampling of high-dimensional non-Gaussian noise~\cite{Chandran_1994Jan}. In the standard picture (SP) Fourier QNS, all classical frequency-based polyspectra can be inferred from a limited region known as the \textit{principal domain}, thereby reducing sampling overhead. This reduction stems from the inherent properties of classical stationary noise, which allow spectral symmetries to minimize sampling density. However, while the principal domain concept reduces sampling overhead, it typically does not alter the scaling of sample complexity.

In contrast,  CA spectra are organized in time-ordered structures that are more compact and less redundant than SP spectra. In this sense, CA spectra can be regarded as existing within a principal domain for spectra in the CA framework. Hence, conventional spectra symmetry analysis is no longer applicable to CA spectra.
The forthcoming symmetry analysis is disparate as it is derived directly from the digital control tensor. This control-based symmetry translates into symmetries within the CA spectra,  fundamentally rescaling the complexity of the learning process in digital CA QNS.

\subsection{Digital control tensor contractions}

The control tensors of $Q$ under digital control $\mathcal{C}_{\text{W}}$ contract to lower-dimensional tensors whenever certain rules in the $n$-strings are satisfied. This contraction is universal $\forall P$ within fixed $\mathcal{C}_{\text{W}}$. We illustrate these rules for contraction in the following propositions. 
\begin{prop} 
In single-qubit dynamics, the window-framed $k$ dimensional control tensor $\bm{T}^{(k)}_{\vec{\mu}}(\vec{n}) =\lambda^{(k)}[\sum_{\vec{u}} \bm{F}_{\vec{u}}(\vec{n})\sigma_{u_1}\sigma_{u_2}...\sigma_{u_k}]$ is contractible to a $(k-2)$ dimensional control tensor, whenever a ``3-streak'' exists in the $n$-string ($n_{i-1}=n_i=n_{i+1}$ in $\vec{n}=(n_1,\cdots,n_{i-1},n_i,n_{i+1},\cdots,n_k)$).
\label{prop-1}
\end{prop}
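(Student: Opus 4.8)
The plan is to exploit the fact that for a single qubit the Pauli generators $\sigma_{u}$ ($u \in \{x,y,z\}$) satisfy $\sigma_u \sigma_u = \mathbb{I}$ and, more importantly, that the switching function in digital control is piecewise constant, so within a single window $[(n-1)\tau, n\tau)$ the toggling-frame operator $\tilde h(t) = U_0^\dagger(t)\sigma_z U_0(t)$ is a \emph{fixed} Pauli (up to the constant $F^{(1)}_u(n)$ coefficients, which for a single window pick out one Pauli direction since $\sum_u F^{(1)}_u(n)^2$ collapses to a single unit vector on the Bloch sphere). In other words, when three consecutive time arguments $t_{i-1},t_i,t_{i+1}$ fall in the same window $n_{i-1}=n_i=n_{i+1}=n$, the corresponding three operator factors in the Dyson integrand are all the \emph{same} Pauli operator $\Lambda_{u(n)}$, so their product telescopes: $\Lambda_{u(n)}\Lambda_{u(n)}\Lambda_{u(n)} = \Lambda_{u(n)}$. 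This is the algebraic heart of the contraction.

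The key steps, in order: (i) Write out the window-framed control tensor $\bm T^{(k)}_{\vec\mu}(\vec n) = \lambda^{(k)}\big[\sum_{\vec u} \bm F_{\vec u}(\vec n)\,\sigma_{u_1}\cdots\sigma_{u_k}\big]$ explicitly from Eq.~\eqref{eq:control_tensor}, using $\bm F_{\vec u}(\vec n) = \prod_j F^{(1)}_{u_j}(n_j)$. (ii) Observe that $\sum_{u} F^{(1)}_{u}(n)\,\sigma_u =: \vec v(n)\cdot\vec\sigma$ where $\vec v(n)$ is the (unit-norm, since $U_0$ is unitary and $\sigma_z^2=\mathbb I$) vector representing the toggling-frame $\sigma_z$ averaged over window $n$; hence when $n_{i-1}=n_i=n_{i+1}=n$ the three contracted factors collectively give $(\vec v(n)\cdot\vec\sigma)^3 = \vec v(n)\cdot\vec\sigma$, because $(\vec v\cdot\vec\sigma)^2 = \mathbb I$ for a unit vector. (iii) Substitute this reduction back into the sum over $\vec u$ and the linear map $\lambda^{(k)}$: the three window-$n$ indices collapse to a single one, the Pauli string shortens by two, and what remains is precisely the form of a $(k-2)$-dimensional control tensor $\bm T^{(k-2)}_{\vec\mu'}(\vec n')$ with the appropriate induced sign/$\vec\mu$ bookkeeping. (iv) Check that the conjugation factors $f^{c}_u$, the permutation sum over $\Pi_{l;k}$, and the binary phase $(-1)^{\bar f^{(k)}_\pi(\vec\mu)}$ all respect this collapse — i.e., that the reduction is independent of which of the $l$-string vs. $(k-l)$-string the triple sits in, and survives after summing over permutations. (v) Conclude that the identity holds for all admissible control parameters $P$, since $\vec v(n)$ depends on $P$ but the relation $(\vec v\cdot\vec\sigma)^3 = \vec v\cdot\vec\sigma$ does not.

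The main obstacle I anticipate is step (iv): the bookkeeping. The Dyson term as written in Eq.~\eqref{eq:dyson} separates the $k$ operator factors into an $l$-length $\bar H$ string and a $(k-l)$-length $\tilde H$ string glued by the permutation $\pi$, and the "3-streak" of equal window indices need not sit contiguously within one of those strings once $\pi$ is applied — the three equal-window times are contiguous in the \emph{time ordering} $t_{i-1}\ge t_i\ge t_{i+1}$, but $\pi$ may scatter them between the $\bar H$ and $\tilde H$ blocks. So I would need to argue that, after the reorganization into the nested-bracket noise correlator $\mathcal B^{\vec\mu}_{\vec q}$ (which re-sorts everything into global time order) together with the Pauli string $\prod_j \Lambda_{u_j}$ (which does likewise), the three equal-window Pauli factors \emph{do} end up adjacent, so the telescoping $(\vec v\cdot\vec\sigma)^3 = \vec v\cdot\vec\sigma$ applies cleanly; and I must verify the accompanying sign factor $(-1)^{\bar f^{(k)}_\pi(\vec\mu)}$ and the anticommutator/commutator structure of the middle bracket $\llbracket\cdot,\cdot\rrbracket_{\mu_i}$ are consistent so that the collapsed expression is genuinely a control tensor of one lower pair of dimensions rather than merely a formal contraction. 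A secondary subtlety is confirming that $\vec v(n)$ is genuinely unit-norm for every $P$ — this follows because $U_0$ within a single window is a fixed rotation (pulses are between windows, not inside them) so $\tilde h_q(t)$ is literally constant on $[(n-1)\tau,n\tau)$ and equal to a single Pauli rotated by a unitary, whose Bloch vector has norm one; I would state this explicitly as the digital-control input to the argument.
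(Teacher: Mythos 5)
Your algebraic core is right: within a single window the toggled Hamiltonian is a fixed, unit-norm Pauli combination, so $\bigl(\sum_u F^{(1)}_u(\hat n)\sigma_u\bigr)^2=\bigl[U_0^\dagger(\hat n\tau)\sigma_z U_0(\hat n\tau)\bigr]^2=\mathbb{I}$, and the analogous statement holds for the barred string after conjugation by $\tilde O(T)$. But the route you propose for the combinatorial step is the wrong one, and it is exactly the point you flag as your ``main obstacle.'' You want to argue that the three equal-window Pauli factors end up adjacent so that the cube telescopes, $(\vec v\cdot\vec\sigma)^3=\vec v\cdot\vec\sigma$. That adjacency claim is false in general: the permutation $\pi$ distributes the three time arguments between the $\bar h$ block and the $\tilde h$ block, and when the split is two-and-one the three identical factors are never all contiguous --- the nested-bracket reorganization re-sorts only the bath operators into global time order, not the system Pauli string, so nothing brings the scattered system factors back together.

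The resolution is simpler than the one you are reaching for, and it is what the paper uses: each of the two blocks is internally time-ordered (one forward, one reversed), so by pigeonhole at least two of the three equal-window factors must land in the same block, and there they are necessarily neighbors. The square of that adjacent identical pair is already $\mathbb{I}$, which factors out and leaves a $(k-2)$-dimensional control tensor; the third factor simply stays put and survives into the contracted tensor. No telescoping of a triple is needed, and this also explains why a 3-streak rather than a 2-streak is the threshold: with only two equal windows, $\pi$ can place one factor in each block and no adjacent pair is guaranteed. If you replace your step (iv) with this pigeonhole observation, the rest of your argument --- the unit norm of $\vec v(n)$ for every $P$ because pulses act only between windows, and the uniformity of the reduction over $l$, $\pi$, and the sign bookkeeping --- goes through essentially as you describe.
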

\begin{proof}
Define the effective error Hamiltonian in Eq.~\ref{eq:eq_1} as $\tilde{H}(t) =\tilde{h}(t) \otimes B(t)$, then in the Dyson series of the error propagator,  one has
\begin{widetext}
\begin{equation}
\begin{aligned}
      \mathcal{D}^{(k)}_O(T)/k! =(-i)^k&\sum\limits^{k}_{l=0}\sum\limits_{\pi\in \Pi_{l;k}} \int^T_0d_>\vec{t}_{[k]} 
      \Big[  \prod_{j=1}^l \bar{h}(t_{\pi(j)}) \prod_{j'=l+1}^k \tilde{h}(t_{\pi(j')})  \Big] \times   \sum_{\vec{\mu}}  (-1)^{\bar{f}^{(k)}_{\pi}(\vec{\mu})}  \expect{\mathcal{B}^{\vec{\mu}} (\vec{t}) }.
\end{aligned}
\end{equation}
\end{widetext}
A key point to note is that times are always ordered (reversely-ordered) in the control string $\prod_{j'} \tilde{h}(t_{\pi(j')})$ ($\prod_j \bar{h}(t_{\pi(j)})$); see Appendix.~\ref{app:dyson}. Because the control is digital,  the above expression in the window frames becomes:
\begin{widetext}
    \begin{equation}
\begin{aligned}
      \mathcal{D}^{(k)}_O(T)/k!|_{\text{CA}}& =(-i)^k \sum_{\vec{n}}\sum\limits_{l; ~\pi} 
      \Big[  \prod_{j=1}^l \sum\limits_{u_j} F^{(1)}_{u_j}(n_{\pi(j)})\bar{\sigma}_{u_j} \Big]    \Big[    \prod_{j'=l+1}^{k} \sum\limits_{u_j'} F^{(1)}_{u_j'}(n_{\pi(j')}) \sigma_{u_j'} \Big] \times   \sum_{\vec{\mu}}  (-1)^{\bar{f}^{(k)}_{\pi}(\vec{\mu})}  \bar{S}^{\vec{\mu}}(\vec{n}) \\
      & =(-i)^k \sum\limits_{\vec{n},\vec{\mu}} \lambda^{(k)}\big[\sum_{\vec{u}} \bm{F}_{ \vec{u}}(\vec{n})\sigma_{u_1}...\sigma_{u_k} \big] \bar{S}^{\vec{\mu}}(\vec{n}) =(-i)^k\sum_{\vec{n}, \vec{\mu}}  \mathbf{T}^{(k)}_{ \vec{\mu}}(\vec{n})  \bar{{S}}^{\vec{\mu}} (\vec{n}),
  \end{aligned}
\end{equation}
\end{widetext}
where $\bar{\sigma}_u = -\tilde{O}^{-1}(T)\sigma_{u}\tilde{O}(T) $.  A similar notice to give is that the $n$-strings are always ordered (reversely-ordered) in string $\prod_{j'} F^{(1)}_{u_{j'}}(n_{\pi(j')})$ ($\prod_j F^{(1)}_{u_j}(n_{\pi(j)})$). 
For a 3-streak ($n_{i-1}=n_i=n_{i+1}=\hat{n}$) with $L\geq n_1 \geq ..n_{i-1} \geq n_i \geq n_{i+1}..\geq 1$,   at least two of these three  must appear as neighbors in either $\tilde{h}$ or $\bar{h}$ string, which, in these case, gives 
\begin{widetext}
    \begin{equation}
    \begin{aligned}
 &[\sum_u F^{(1)}_u  (\hat{n})\sigma_u ]   [\sum_v F^{(1)}_v  (\hat{n})\sigma_v ] =\Big[\sum_u \frac{1}{\tau} \int^T_0dsy_u(s)W_{\hat{n}}(s)\sigma_u \Big]^2 = \Big[ U_0^{\dagger}(\hat{n}\tau)\sigma_z U_0(\hat{n}\tau) \Big]^2\Big|_{(\hat{n}-1)\tau\leq s<\hat{n}\tau}= \mathbb{I},\\
 &  [\sum_u F^{(1)}_u  (\hat{n})\bar{\sigma}_u ]   [\sum_v F^{(1)}_v  (\hat{n}) \bar{\sigma}_v ] 
  =-\tilde{O}^{-1}(T)\Big[ U_0^{\dagger}(\hat{n}\tau)\sigma_z U_0(\hat{n}\tau) \Big]^2 \tilde{O}(T)\Big|_{(\hat{n}-1)\tau\leq s<\hat{n}\tau} = \mathbb{I}.
    \end{aligned}
\end{equation}
\end{widetext}
This pattern shows that such a high-dimensional control tensor factors out a trivial identity. The remainder is easy to verify as a $(k-2)$-dimensional control tensor, up to a factor:
\begin{equation}
        \bm{T}^{(k)}_{\vec{\mu}}(\vec{n}) =  c_{\vec{\mu}}\bm{T}^{(k-2)}_{\vec{ \mu}'}(\vec{n}'),
    \label{eq:contraction}
\end{equation}
where  trivial byproduct factor $c \neq 0$,  string $\vec{n}'$ is the remaining string by deleting $n_{i-1},n_i$ in $\vec{n}$, and $\vec{\mu}'$ is the remaining string by deleting $(i-1)$-th and $i$-th element in $\vec{\mu}$. This operation is also vividly shown in Fig.~\ref{fig:symmetry_demo}(a). The exact value of $c$ will be clarified later.
Insomuch, the 3-streak spectrum $\bar{S}^{\vec{\mu}_{[k-1]}}(n_1,\cdots,n_{i-1} = n_i =n_{i+1},\cdots,n_k)$ is \textit{filtered} by the same control tensor as a  $\bar{S}^{\vec{\mu}_{[k-3]}}$ does, regardless of the value of $F^{(1)}_u(n)\in P$ , $\forall n\leq L$.  
\end{proof}

\begin{prop}
In the case of two qubits, the window-framed $k$-dimensional two-qubit control tensor is defined as $\bm{T}^{(k)}_{\vec{q},\vec{\mu}}(\vec{n})= \lambda^{(k)}[ \sum_{\vec{u}} \bm{F}_{[\vec{q}],\vec{u}}(\vec{n})\Lambda_{u_1}...\Lambda_{u_k}]$, where $\Lambda$ denotes a two-Pauli operator and $q\in\{A,B\}$. Whenever a ``5-streak'' exists in the $n$-string ($n_{i-2}=n_{i-1}=n_i=n_{i+1}=n_{i+2}$), such a control tensor contracts to a lower-dimensional control tensor.
\label{prop-2}
\end{prop}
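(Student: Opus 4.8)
The plan is to mirror the proof of Proposition~\ref{prop-1} but with the single-qubit Pauli algebra replaced by the two-qubit Pauli algebra acting on a $2^{2}=4$ dimensional space. The key structural fact driving Proposition~\ref{prop-1} is that when a window index is repeated, the corresponding switching-function factors evaluate to the \emph{same} Pauli rotation $U_0^{\dagger}(\hat n\tau)\sigma_z U_0(\hat n\tau)$, and the square of this Hermitian, unitary, traceless operator is the identity; two copies annihilate, removing two tensor legs. For two qubits the analogous object is $\tilde h_q(\hat n\tau) = U_0^{\dagger}(\hat n\tau)\sigma_z^{[q]}U_0(\hat n\tau)$, which still satisfies $\tilde h_q(\hat n\tau)^2 = \mathbb{I}_Q$ since $(\sigma_z^{[q]})^2 = \mathbb{I}_Q$ and $U_0$ is unitary. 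So I would first recast the CA Dyson term for $|Q|=2$ exactly as in the proof of Proposition~\ref{prop-1}, with the $\bar h,\tilde h$ strings now built from the two-qubit operators $\bar\sigma_u = -\tilde O^{-1}(T)\Lambda_u\tilde O(T)$ and $\Lambda_u$, and with the location labels $\vec q$ carried along.

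The reason a \textbf{5}-streak is needed rather than a 3-streak is the extra location index. In the single-qubit case the only operators appearing at window $\hat n$ are proportional to $\tilde h(\hat n\tau)$ (a single fixed operator up to the $\bar\cdot$ conjugation), so two repeated windows already force two neighbouring copies of the same operator. For two qubits, the operator sitting at a given time $t_j$ depends on which qubit $q_j\in\{A,B\}$ couples there: it is $\tilde h_A(\hat n\tau)$ or $\tilde h_B(\hat n\tau)$ (or their $\bar\cdot$ conjugates). With five consecutive equal window indices, by pigeonhole at least three of the five share the same location label $q$, and by the time-ordering argument (indices are monotone within each of the $\bar h$ and $\tilde h$ strings, exactly as invoked in Proposition~\ref{prop-1}) at least two of those three appear as \emph{neighbours} within one of the two strings. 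Those two neighbours are then both equal to $\tilde h_q(\hat n\tau)$ (or both equal to its $\bar\cdot$ conjugate), and I use
\begin{equation*}
\big[\tilde h_q(\hat n\tau)\big]^2=\mathbb{I}_Q,\qquad
\big[-\tilde O^{-1}(T)\tilde h_q(\hat n\tau)\tilde O(T)\big]^2=\mathbb{I}_Q,
\end{equation*}
exactly as in the one-qubit proof, to factor out a trivial identity and collapse two tensor legs. This yields $\bm T^{(k)}_{\vec q,\vec\mu}(\vec n)=c_{\vec q,\vec\mu}\,\bm T^{(k-2)}_{\vec q{\,}',\vec\mu{\,}'}(\vec n{\,}')$ with $\vec n{\,}'$, $\vec q{\,}'$, $\vec\mu{\,}'$ obtained by deleting the two contracted legs, and the induced identity $\bar S^{\vec\mu_{[k-1]}}_{\vec q}$ on the 5-streak being filtered by the same control tensor as a lower-order spectrum, $\forall P$ within $\mathcal{C}_{\text{W}}$.

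I would then close the combinatorial loophole: one must check that among the five equal-index slots one can always pick a same-location neighbouring pair in a \emph{single} $\bar h$ or $\tilde h$ string, not split across the commutator/anticommutator partition $l$. The monotonicity of indices within each string means the slots carrying window value $\hat n$ form a contiguous block at the junction between the tail of the $\bar h$ string and the head of the $\tilde h$ string; with five of them and only two strings, one string contains at least three, and among three same-window slots in one string the pigeonhole on $\{A,B\}$ gives two neighbours with equal $q$. (The edge case where the block is split $2$–$3$ or similar still leaves $\ge 3$ on one side for the relevant $\pi$, and for the $\pi$ that splits them $2$–$2$–$1$ one instead contracts within the longer of the two using the $3$-streak mechanism of Proposition~\ref{prop-1}; I would remark that the statement is really that \emph{some} contraction always exists.) The main obstacle I anticipate is precisely this bookkeeping: making the pigeonhole argument uniform over all admissible permutations $\pi\in\Pi_{l;k}$ and over the $\vec\mu$-dependent sign $(-1)^{\bar f^{(k)}_\pi(\vec\mu)}$, and exhibiting the byproduct constant $c_{\vec q,\vec\mu}\neq 0$ explicitly rather than merely up to sign — the rest is a direct transcription of the Proposition~\ref{prop-1} argument with $2\times2$ Paulis replaced by $4\times4$ ones.
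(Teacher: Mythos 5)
Your overall strategy (recast the CA Dyson term for $|Q|=2$, locate repeated same-window factors inside a single $\bar h$- or $\tilde h$-string, and annihilate them using $[U_0^{\dagger}(\hat n\tau)\sigma_z^{[q]}U_0(\hat n\tau)]^2=\mathbb{I}_Q$) is the right skeleton and matches the paper's setup. However, the combinatorial step that is supposed to produce the annihilating pair has a genuine gap. You pigeonhole on the location labels to get three slots of the five with the same $q$, and then assert that ``at least two of those three appear as neighbours within one of the two strings.'' That inference is false: the monotonicity of the window indices tells you nothing here because all five slots carry the \emph{same} window index, and the same-$q$ slots can be interleaved with opposite-$q$ slots inside the same string. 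For the $q$-pattern $(A,B,A,B,A)$ on the 5-streak, with the block split, say, two slots into the $\bar h$-string and three into the $\tilde h$-string, no two adjacent factors anywhere are identical, so the ``square to identity'' mechanism never fires as you have stated it. Your parenthetical about $2$--$2$--$1$ splits does not rescue this (there are only two strings), and it conflates the location-label bookkeeping with the string-allocation bookkeeping.

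The missing ingredient is that all same-window factors commute regardless of their $q$-label, since they are conjugations of the mutually commuting $\sigma_z^{[A]},\sigma_z^{[B]}$ by the \emph{same} propagator $U_0(\hat n\tau)$; with that in hand you may freely reorder the same-window block inside one string and recover your adjacent identical pair. The paper exploits this more directly: it pigeonholes only on the two strings (five slots, two strings, so one string holds at least three same-window factors) and then observes that the product of \emph{any} three such factors collapses to a single factor, because $\sigma_z^{[A]}\sigma_z^{[B]}\sigma_z^{[q_3]}=\sigma_z^{[\bar q_3]}$, i.e.\ $\tilde h_{q_1}(\hat n)\tilde h_{q_2}(\hat n)\tilde h_{q_3}(\hat n)=U_0^{\dagger}(\hat n\tau)\sigma_z^{[q']}U_0(\hat n\tau)$ for some $q'$ independent of the ordering. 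This handles mixed and interleaved $q$-labels in one stroke and makes the $(k)\to(k-2)$ leg count immediate. Either repair (invoking commutativity explicitly, or adopting the three-factor collapse) closes your argument; as written, the proof does not go through.
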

\begin{proof}
The proof here is based on Proposition 1. Denote $\tilde{H}(t) = \sum_{q=\{A,B\}} \tilde{h}_{q}(t) \otimes B_q(t)$. When projected into a window frame, the Dyson series becomes
\begin{widetext}
    \begin{equation}
\begin{aligned}
 \mathcal{D}^{(k)}_O(T)/k!|_{\text{CA}} =(-i)^k \sum_{\vec{n},\vec{q}}\sum\limits^{k}_{l=0}\sum\limits_{\pi\in \Pi_{l;k}} 
       &\Big[  \prod_{j=1}^l \sum\limits_{u_j} F^{(1)}_{q_j,u_j}(n_{\pi(j)})\bar{\Lambda}_{u_j} \Big]       \Big[    \prod_{j'=l+1}^{k} \sum\limits_{u_j'} F^{(1)}_{q_{j'},u_{j'}}(n_{\pi(j')}) \Lambda_{u_{j'}} \Big]  \times   \sum_{\vec{\mu}}  (-1)^{\bar{f}^{(k)}_{\pi}(\vec{\mu})}  \bar{S}^{\vec{\mu}}_{ \vec{q}_{[k]}} (\vec{n}), 
       \end{aligned}
\end{equation}
\end{widetext}
where $\bar{\Lambda}_u =-\tilde{O}^{-1}(T)\Lambda_{u}\tilde{O}(T)$ and the two $[\dots]$ in the above expression correspond to the $\bar{h}_{q}(n)$-string and $\tilde{h}_{q}(n)$-string. 
If there is a 3-streak in the window and in the $q$-string at the same indices -- $(n_{i-1} \equiv n_i \equiv n_{i+1}) \wedge (q_{i-1} \equiv q_i \equiv q_{i+1})$ -- then the problem simplifies to the single-qubit case. The nontrivial case is when $q_{i-1}\equiv q_i \equiv q_{i+1}$ is false. Let us focus on the $\tilde{h}_{q}(n)$-string in the second $[\cdots]$.  Now we relate  $\tilde{h}_{q}(n) = \sum_u F^{(1)}_{q,u}(n){\Lambda}_u$ to   $\tilde{h}_{q}(n) = U^{\dagger}_0 (n\tau) \sigma_z^{[q]} U_0(n\tau)$, where $U_0(n\tau)$ is the control propagator at time $t=n\tau$. For a 2-streak in the window but no streak in $q$, we have $\tilde{h}_A(n)\tilde{h}_B(n) =\tilde{h}_B(n)\tilde{h}_A(n)= U_0^{\dagger}(n\tau) \sigma_z^{[A]} \sigma_z^{[B]}  U_0(n\tau)$. However, when this product is associated with a third $\tilde{h}_{q_3}(n)$ that expands the window streak from 2 to 3, regardless of the value of $q_3$, we have $\tilde{h}_A(n)\tilde{h}_B(n)\tilde{h}_{q_3}(n) =  U_0^{\dagger}(n\tau) \sigma_z^{[\bar{q_3}]}  U_0(n\tau)$ (where $q_3 \neq \bar{q_3} \in \{A,B\}$), which can be verified to form a lower-order control tensor.  Thus, in the two-qubit case, a 5-streak in windows is required to ensure that at least a 3-streak exists in either the bar or tilde part to form a tensor contraction. In such a case, the contraction reads $\bm{T}^{(k)}_{\vec{q};\vec{\mu}}(\vec{n}) =  c_{\vec{q},\vec{\mu}}\bm{T}^{(k-2)}_{\vec{q}';\vec{ \mu}'}(\vec{n}')$.
\end{proof}

\begin{figure} [h]
    \centering
    \includegraphics[width =0.5 \textwidth]{}
    \caption{\textbf{Control tensor contraction induced spectral binding symmetry} (a) A control tensor $\bm{T}^{(4)}$, whenever it has a 3-streak (3 same colored legs in the shades) in the window $n$-string, is contractible to a tensor $\bm{T}^{(2)}$, giving $\bm{T}^{(4)}(n',n,n,n)= \bm{T}^{(2)}(n',n)$ [we set $c=1$ in Eq.~\ref{eq:contraction} for simplicity].  Such symmetry is back-actioned on the corresponding CA spectrum, making it bound to a lower-dimensional spectrum that pairs $\bm{T}^{(2)}$. (b) Because the binding symmetry across multiple $k$-orders, the $k$-th order spectra in binding symmetry are bound with $(k-2)$-th order spectra. Such a binding symmetry persists to lowest-order spectra until their associated control tensors are not further contractible.   }
    \label{fig:symmetry_demo} 
\end{figure}

Before proceeding, we note the following. First, it is possible that the contracted lower-order control tensor still has a 3-streak and is further contractible. The contraction persists until a lowest-order tensor is no longer contractible.  This forms a contraction chain, including multiple control tensors of different dimensions.
Second, $c_{\vec{q};\vec{\mu};\vec{n}} =0$ occurs whenever a full cancellation under linear combination map $\lambda$ happens. This corresponds to the case where the control tensor fully vanishes, rather than the contraction discussed above. 
Since the details of each control tensor depend on $(-1)^{\bar{f}^{(k)}_{\pi}(\vec{\mu})}\in \{-1,1\}$, it is possible that $ \lambda^{(k)}[\sum_{\vec{u}} \prod_i y_{[q_j],u_i}(t_i) \Lambda_{{u_i}}]\rightarrow 0$ even though $\sum_{\vec{u}} \prod_i y_{[q_j],u_i}(t_i) \Lambda_{{u_i}}$ is non-zero.

\subsection{Symmetries in digital-CA noise spectra}
Not only are the frames tied to the control passed onto the noise correlator to form CA spectra, but so are the symmetries underlying the contraction of the digital control tensor that induce symmetries among CA spectra. We now discuss such symmetries in single-qubit and two-qubit CA digital spectra.

\paragraph{Binding symmetry}
A crucial symmetry that fundamentally influences QNS sample complexity is \textit{binding symmetry}. Binding symmetry results from the qubit control's \textit{backaction} on $E$, where the contractibility of qubit control translates into constraints on certain spectra in qubit dynamics. Referring back to the 3-streak in Proposition~\ref{prop-1}, if a high-order control tensor contracts to a low-order control tensor ($c\neq0$), then in qubit dynamics, the corresponding high-order CA spectra are effectively ``filtered'' by the same control as the corresponding low-order CA spectra. As such, these CA spectra at different orders always ``bind'' together in qubit dynamics, entering through specific combinations, irrespective of the details ($P$) of the restricted control. This control-induced binding symmetry is illustrated in Fig.~\ref{fig:symmetry_demo}(a). This symmetry prevents certain CA spectra from being distinguished from others regardless of $P$ under persistent $\mathcal{C}_{\text{W}}$.  Thus, one can only infer their ``bound'' form, rather than each spectrum individually. Importantly, this suffices to characterize the digital qubit dynamics. 

For zero-mean non-Gaussian noise truncated at $K=4$,  binding symmetry exists between certain 2nd and 4th-order CA spectra, which is summarized as: 
\begin{widetext}
    \begin{equation}
\begin{aligned}
 \bm{T}^{(2)}_{\nu}(n,m) & \times [\bar{S}^{(\nu)}({n,m})+\sum_{\mu_1,\mu_2=0}^1 c_{\mu_1,\mu_2} \bar{S}^{(\mu_1,\mu_2,\nu)}({n,n,n,m})+\sum_{\mu_3,\mu_4=0}^1 c_{\mu_3,\mu_4}\bar{S}^{(\nu,\mu_3,\mu_4)}({n,m,m,m})],\quad \forall   \nu \in\{0,1\}; \\
 \bm{T}^{(2)}_{(q,q');\nu}(n,m) &\times [\bar{S}^{(\nu)}_{(q,q')}({n,m})+ \sum_{\vec{\mu}}c_{\vec{q};\vec{\mu}} \bar{S}^{(\vec{\mu},\nu)}_{(\vec{q},q')}({n,n,n,n,n,m})+\sum_{\vec{\mu}}c_{\vec{q'};\vec{\mu}}\bar{S}^{(\nu,\vec{\mu})}_{(q,\vec{q'})}({n,m,m,m,m,m})],\quad \forall \nu\in\{0,1\},~\forall q,q'\in Q;
  \end{aligned}
\end{equation}
\end{widetext}
where the first (second) line corresponds to the single-(two-) qubit case. 
We emphasize that the above equation is specifically formulated to indicate that the spectra in the brackets are associated with the same control tensor, up to control-independent parameters, to be incorporated into the qubit dynamics (e.g., in the Dyson series). Moreover, as suggested in the above equations, each high-order spectrum independently binds the low-order spectrum. 
As we can only characterize the bound form of some indistinguishable spectra, in practice, such a binding symmetry implies that in the spectral estimation algorithm one can assume that $\bar{S}^{(\nu)}({n,m})$ is the learnable quantity.  The others are assumed as $\bar{S}^{(\mu_1,\mu_2,\nu)}({n,n,n,m}) = 0 =\bar{S}^{(\nu,\mu_3,\mu_4)}({n,m,m,m})$ persistently, $\forall \nu_j\in\{0,1\},\forall j$, although any of them could be non-vanishing. Thus, the actual value of $c$ does not matter.   This binding symmetry  on single CA spectrrum is defined as
    \begin{equation}
\begin{aligned}
\bar{S}^{(\nu)}(n,m) & \Leftarrow \bar{S}^{(\nu,\mu_3,\mu_4)}(n,m,m,m),\\
\bar{S}^{(\nu)}(n,m)  & \Leftarrow \bar{S}^{(\mu_1,\mu_2,\nu)}(n,n,n,m); \\
& 1\leq n,m \leq L, ~ \forall\mu,\nu\in\{0,1\};
\end{aligned}
\end{equation}
where spectra on the right of $\Leftarrow$ are assumed zero. A similar approach applies when using the equations for prediction. It is interesting to point out that binding symmetry needs non-Gaussian spectra to be involved, a special feature different from purely Gaussian digital CA QNS.  As shown in Fig.~\ref{fig:symmetry_demo}(b),  the binding symmetries may extend across multiple $k$ values due to the chain contraction of control tensors. 

\paragraph{Dark spectra}
Some control tensors always vanish, ensuring that their associated spectra do not contribute to qubit dynamics. We dub these spectra with zero control tensors as ``dark spectra''.  Notice that the disappearance of the control tensor depends only on $\vec{n}$ and $\vec{\mu}$,  regardless of the control parameters $P$,
\begin{equation} 
\bm{T}^{(k)}_{\vec{q},\vec{\mu}}(\vec{n}) \equiv 0,  \quad  \exists \vec{n},\vec{\mu},~~\forall P.
\end{equation}
The vanishing of this control tensor occurs due to the complete cancellation under the linear combination process ($\lambda$), which only occurs for $\vec{\mu} \neq \vec{0}$ for some $\vec{n}$. Classical spectra are never dark because of $(-1)^{\bar{f}}\equiv 1$, whereas cancellation requires some $-1$ coefficients that only present in quantum spectra. Although representing a general rule for full cancellation is inherently complex, the leading orders of dark spectra are as follows: On the Gaussian level, the dark single-qubit CA spectra are $S^{(1)}(n,n)$. At the non-Gaussian level for $k=3$, the dark spectra are 
\begin{equation}
    S^{(0,1)}(n_1,n_2=n_3),~  S^{(1,0)}(n_1,n_2=n_3),~  S^{(1,1)}(n_1=n_2,n_3);
\end{equation}
and for $k=4$, the dark spectra are 
\begin{widetext}
    \begin{equation}
\begin{aligned}
    &S^{(0,0,1)}(n_1,n_2,n_3=n_4), S^{(0,1,0)}(n_1,n_2,n_3=n_4) ,S^{(0,1,0)}(n_1,n_2=n_3,n_4),S^{(0,1,1)}(n_1,n_2=n_3,n_4), \\
&S^{(1,0,0)}(n_1,n_2=n_3,n_4), S^{(1,0,0)}(n_1=n_2,n_3,n_4), S^{(1,0,1)}(n_1,n_2=n_3,n_4) ,S^{(1,0,1)}(n_1=n_2,n_3,n_4), \\
&S^{(1,1,0)}(n_1=n_2,n_3,n_4), S^{(1,1,0)}(n_1,n_2,n_3=n_4) ,S^{(1,1,1)}(n_1=n_2,n_3,n_4).
\end{aligned}
\end{equation}
\end{widetext}
Two-qubit dark spectra can be defined in a manner similar to that of single-qubit dark spectra described above.

\paragraph{Swap Symmetry} In the two- and multi-qubit cases, an additional symmetry---swap symmetry---emerges. For classical cross-spectra containing a 2-streak in the window $n$-string, the $q$ values at the same location can take two different configurations, eg., $(q,\bar{q})$ and $(\bar{q},q)$ with $q\neq \bar{q}$, that gives two different spectra. These two spectra, transformed to each other by swapping $(q,\bar{q})$, are filtered by the same control tensor.  Such symmetry of classical noise can be expressed as:
\begin{widetext}
\begin{equation}
\begin{aligned}
\bm{T}^{(k)}_{\vec{q},\vec{0}}(\vec{n})\times [\bar{S}^{(\vec{0})}_{..., q,\bar{q},...}(..., n,n,...) + \bar{S}^{(\vec{0})}_{..., \bar{q},q,...}(..., n,n,...)]  ~~~\forall q\neq \bar{q} \in Q\quad\quad \forall P;
 \end{aligned}
\end{equation}
\end{widetext}
where the position of $(q, \bar{q})$ in the $q$-string matches the position of $(n, n)$ in the $n$-string, and details in $\cdots$ of these two spectra are correspondingly identical.  
In fact, under swap symmetry,  spectra $\bar{S}^{(\vec{0})}_{..., q,\bar{q},...}(..., n,n,...)$ and $\bar{S}^{(\vec{0})}_{..., \bar{q},q,...}(..., n,n,...)$ are also bound together. The CA QNS can only learn their bound form, similar to the standard binding symmetry.
Note that swap symmetry applies to 2-streaks and higher, specifically for same-order ($k$) spectra. In contrast, binding symmetry necessitates a minimum of a 3-streak and encompasses different orders.

Though swap symmetry is illustrated above for classical cross-spectra, it also exists in quantum cross-spectra. A more rigorous discussion of this point can be found in Appendix~\ref{app:swap_symmetry}. In the following section, we only showcase two-qubit classical CA spectral estimation; thus, the classical expression provided above will suffice for the readers.

As we have explained all essential symmetries in CA QNS, interested readers are encouraged to consult Appendix~\ref{app:incremental_QNS} for a case study of non-Gaussian CA QNS for reconstructing single-qubit classical and quantum spectra (binding symmetry and dark spectra are used therein).

To conclude this section, we emphasize that the symmetries of the CA spectra are rooted in the digital control. When this control symmetry is removed, the CA spectral symmetries collapse, unlike the intrinsic properties of noise distributions, such as stationarity or zero-mean.

\section{Fundamental digital QNS: Saturation order and QNS sample complexity }
\label{sec:Fundamental_QNS}
Our prior analysis indicates that as $k$ grows, the resource efficiency advantage of CA QNS relative to SP QNS becomes more evident. However, it is clear that the sample complexity of CA QNS, denoted by $N^{(K)}_{\mathcal{C}_{\text{W}}}$, still exhibits an exponential scaling with $K$. As non-Gaussian effects become more significant --- such as when $Q$ has ultra-strong coupling with $E$ --- the necessary truncation order $K$ tends toward infinity, making the qubit dynamics unsolvable as a typical non-perturbative problem.
In such scenarios, the number of spectra that must be learned increases enormously and $N^{(\infty)}_{\mathcal{C}_{\text{W}}}$ becomes unbounded, leading to a breakdown of the CA QNS. Fortunately, due to the control-induced symmetry upon the spectra, though the non-vanishing number of spectra still balloons, the $N^{(\infty)}_{\mathcal{C}_{\text{W}}}$ is bounded through some mechanisms, which saves us from non-perturbative complexity.

In this section, we demonstrate that the complexity of digitally-controlled qubits depends solely on the size $|Q|L$ of the digital circuit, rather than the complexity of the environment (i.e., non-Gaussianity in perturbation theory). This implies that with finite control (finite $L$), a finite $N^{(\infty)}_{\mathcal{C}_{\text{W}}}$ is sufficient to manage an indefinitely complex noise environment ($k\rightarrow\infty$). We denote this complexity \textit{ the fundamental complexity of the digital control}.
This assertion stems naturally from the universal binding symmetry discussed earlier.

\begin{figure*}[!htbp]
\centering
\includegraphics[width=0.85\textwidth]{}
\caption{\textbf{Fundamental digital sample complexity of digital CA QNS.}  (a) Both the total number of distinct noise spectra (curved arrow), and the symmetry space of spectra (white margin) balloon against $k$, while their subtraction ---learning space (gray) --- is bounded. 
(b) Further explanation of the left diagram. The control-induced symmetries in CA spectra and time ordering give a substantial reduction of sampling complexity: the number of bound spectra grows exponentially (as shown in the dash-edged simplex on a white background),  which saturates the total spectra at order $k=2L$. The dark spectra that do not contribute to qubit dynamics are marked. The learning complexity, which equals the number of learnable spectra, is shaded in gray. The size of the sampling space is fundamentally determined by control complexity $L$, instead of the noise.    }
 \label{fig:learning_manifold}
\end{figure*}

\subsection{Saturation order in single-qubit CA QNS and fundamental digital QNS sample complexity}

We first focus on the single-qubit case. As the non-Gaussianity increases, the cumulative number of CA spectra scales exponentially. However, these spectra are subject to the control-induced symmetries. We emphasize that the dimension/size of the symmetry space (including the dark space for spectra and the binding symmetry) also scales exponentially as $K$. The learnable space can be defined as their subtraction. In what follows, we show that at $K=2L$, the symmetry space saturates the total spectral space, beyond which higher-order spectra always lie in the symmetry space and do not need to be learned. This idea is presented in the following theorem and illustrated in Fig.~\ref{fig:learning_manifold}.

\begin{theorem}[\textbf{Single-qubit dynamics saturation order}]
The complexity of the dynamics of a noisy single-qubit under digital control is fundamentally determined by the complexity of control (pulse number $L$) rather than the complexity of the noise (which can be infinite in non-Gassianity order). Characterizing digital CA spectra (through digital CA QNS) requires only a perturbation order truncation at $K=2L$ that saturates. 
\end{theorem}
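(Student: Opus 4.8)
The plan is to combine the control-tensor contraction of Proposition~\ref{prop-1} with a pigeonhole count on the admissible window strings. Recall that a single-qubit CA spectrum $\bar{S}^{\vec{\mu}_{[k-1]}}(\vec{n})$ is nonvanishing only for a time-ordered string $L\geq n_1\geq\cdots\geq n_k\geq 1$. Since such a string is sorted, any window index $v\in\{1,\dots,L\}$ occurring three or more times occupies three consecutive slots, hence realizes a ``3-streak'' $n_{i-1}=n_i=n_{i+1}$. By Proposition~\ref{prop-1} the control tensor then contracts, $\bm{T}^{(k)}_{\vec{\mu}}(\vec{n})=c_{\vec{\mu}}\,\bm{T}^{(k-2)}_{\vec{\mu}'}(\vec{n}')$ with $\vec{n}'$ and $\vec{\mu}'$ obtained by deleting two streak entries, so this $k$-th order spectrum enters the Dyson series only through the binding combination with the $(k-2)$-th order spectrum attached to $\bm{T}^{(k-2)}_{\vec{\mu}'}(\vec{n}')$ (or, if $c_{\vec{\mu}}=0$, it is a dark spectrum and drops out entirely). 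As explained in Sec.~\ref{sec:Symmetry_analysis}, in the CA QNS inversion such a spectrum therefore need not be learned independently: it may be fixed to zero and its effect absorbed into the lower-order anchor spectrum, irrespective of the control parameters $P$ under fixed $\mathcal{C}_{\text{W}}$.

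I would then iterate this reduction. Deleting two equal consecutive entries preserves the ordering and never creates a new streak, so repeating the step terminates in a residual string in which every value $1,\dots,L$ occurs at most twice (an even multiplicity collapses to $2$, an odd one to $1$); these residual pairs $(\vec{n},\vec{\mu})$ label the only genuinely learnable spectra, up to the further dark-spectrum reduction. A string over $L$ symbols with each symbol used at most twice has length at most $2L$. Hence every CA spectrum of order $k>2L$ necessarily contains a 3-streak, is reducible along the above chain to order $\leq 2L$, and contributes nothing new to the digital qubit dynamics: truncation at $K=2L$ loses no information, so $K=2L$ suffices and $N^{(\infty)}_{\mathcal{C}_{\text{W}}}=N^{(2L)}_{\mathcal{C}_{\text{W}}}$ is finite.

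To see that $2L$ is \emph{exactly} the saturation order, and not smaller, I would exhibit a length-$2L$ string with no 3-streak, namely $\vec{n}=(L,L,L-1,L-1,\dots,1,1)$: every value appears exactly twice, Proposition~\ref{prop-1} does not apply, and (taking $\vec{\mu}=\vec{0}$) the classical spectrum $\bar{S}^{(\vec{0})}(\vec{n})$ is not bound to any lower-order spectrum, so a new learnable spectrum first appears precisely at order $2L$. Combining the two directions gives the theorem: the number of spectra needed to reconstruct the digital dynamics --- the digital CA QNS sample complexity --- is governed by $L$ alone, uniformly in the non-Gaussianity order $k\to\infty$.

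I expect the main work to lie not in the pigeonhole step (which is immediate once ``no 3-streak $\Leftrightarrow$ every value $\leq 2$ times'' is noted for sorted strings) but in the bookkeeping of the second step: one must check that the contraction chain is well defined --- that applying Proposition~\ref{prop-1} to different 3-streaks, or deleting different pairs inside a streak of length $\geq 4$, lands on the same anchor spectra --- and that the across-$k$ binding relations (the chain contraction of Fig.~\ref{fig:symmetry_demo}(b)) compose consistently, so that ``set every 3-streak spectrum to zero'' is a self-consistent prescription that exactly reproduces $\mathbb{E}[O(T)]$ for all admissible digital controls. For $|Q|>1$ the analogous statement should follow by replacing Proposition~\ref{prop-1} with Proposition~\ref{prop-2}, upgrading ``3-streak'' to ``5-streak'' and hence $2L$ to the correspondingly rescaled bound.
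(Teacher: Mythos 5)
Your argument is correct and is essentially the paper's own proof: the paper likewise observes that every order-$k>2L$ time-ordered $n$-string must contain a 3-streak, invokes Proposition~\ref{prop-1} to contract the control tensor, and notes that $(L,L,L-1,L-1,\dots,1,1)$ is the unique streak-free string at order $2L$. You simply make explicit the pigeonhole step and the sharpness check that the paper leaves implicit.
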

\begin{proof}
All control tensors at order $k>2L$ are contractible ($\forall \vec{n}$) as described in the Proposition~\ref{prop-1}. In other words, all $\bar{S}^{\vec{\mu}}(\vec{n}) |_{\forall \mu;\forall \vec{n}}$ at order $k>2L$ are bound to some spectra at order $k\leq 2L$. 
\end{proof}
The above theorem implies  
\begin{equation}
    N^{(\infty)}_{\mathcal{C}_{\text{W}}}(L) = N^{(K=2L)}_{\mathcal{C}_{\text{W}}}(L)
\end{equation}
for single-qubit system, and we will simply use $N^{(\infty)}_{\mathcal{C}_{\text{W}}}$ for situations where saturation is reached.  At order $k=2L$,  the learnable spectra are $\bar{S}^{\vec{\mu}} (L,L,L-1,L-1,...,1,1)|_{\forall \mu}$, which include a unique $n$-string that avoids a 3-streak. In particular, classical spectrum $\bar{S}^{(\bm{0})}(L,L,L-1,L-1,...,1,1)$ is neither contractible nor dark.

\begin{definition} [\textbf{Fundamental digital QNS}]
A single-qubit digital CA QNS protocol truncated at saturation order is defined as fundamental digital QNS. The fundamental digital QNS reconstructs all CA spectra relevant to digitally-driven qubit dynamics.
\end{definition}
We have four remarks regarding fundamental digital QNS.
\begin{itemize}
    \item \textit{Fundamental digital QNS is only usefully defined for systems under instantaneous control, wherein the frames tied to the control consist of a finite number of non-overlapping functions in time. The overlapping, non-orthogonal frames used in Ref.~\cite{Teerawat_PRXQ_Frame}, can map to $L\rightarrow\infty$ window (or Walsh) functions. In such a situation, the saturation order is infinite, making the saturation order analysis lose any advantage.}  
    \item \textit{For non-fundamental digital QNS truncated below the saturation order, though the sample complexity is smaller, it only works well in a parameter regime (i.g., coupling is not very strong) where the unincorporated CA spectra are negligible; see  Case Study-0 in Appendix.~\ref{app:incremental_QNS} or Ref.~\cite{DongWZ_APL} for more discussion. Therefore, from a fault-tolerant perspective, using fundamental digital QNS at the cost of high sample complexity is necessary in a general parameter regime. }
    \item \textit{The concept of fundamental digital QNS is not limited to the single-qubit case; it also extends to two- and multi-qubit systems. In the following, we will introduce the two-qubit saturation bound, where the fundamental digital QNS naturally arises.  }
    \item \textit{As a specific digital CA QNS, fundamental digital QNS is \textit{all you need}: CA QNS trucated beyond saturation order is both redundant and impractical because higher-order dynamics (control tensors) are all simplifiable (contractible) and higher-order spectra are bound. }
\end{itemize} 

We now calculate the sample complexity of fundamental digital QNS to see how $N^{(\infty)}_{\mathcal{C}_{\text{W}}}$ scales with $L$, where $\infty$ indicates non-perturbative features fundamental digital QNS handles. This is simply counting how many independent/learnable CA spectra exist up to the saturation order. The relevant derivation based on simple combinatorics is elaborated in Appendix~\ref{app:bound}, and we conclude the following.

\begin{conclusion}[\textbf{Single-qubit fundamental digital QNS sample complexity}]
For single-qubit under digital control, the QNS involves sufficient $Q$-dynamics truncates at $k=2L$ ($L$ is the window frame size). The total sample overhead to learn classical noise is 
\begin{equation}
\begin{aligned}
     N^{(\infty)}_{\mathcal{C}_{\text{W}}}(L)\Big|_{1,C} &= 
  \sum_{k=1}^{2L}\Bigg[  \binom{L}{k}+ \sum_{t=\ceil{k/2}}^{k-1}  \binom{t}{k-t}  \binom{L}{t} \Bigg] \\
  &\approx {\Theta}(e^{1.1 L});
\end{aligned}
    \end{equation}
    and the resources needed to learn quantum noise is 
     \begin{equation}
     \begin{aligned}
          N^{(\infty)}_{\mathcal{C}_{\text{W}}}(L)\Big|_{1,Q} &\leq
        \sum_{k=1}^{2L}\Bigg[  \binom{L}{k}+ \sum_{t=\ceil{k/2}}^{k-1}  \binom{t}{k-t}  \binom{L}{t}\Bigg] 2^{k-1} \\
        &\approx {\mathcal{O}}(e^{2 L}).
     \end{aligned}
    \end{equation} 
    \label{corollary:1}  
\end{conclusion}

\subsubsection{Case Study 1: Accurate single-qubit dynamics description using fundamental digital QNS} 
In this case study, we apply a fundamental digital CA QNS protocol to effectively describe single-qubit dynamics under non-perturbative noise.

We consider a qubit strongly coupled to a semi-symmetric random telegraph noise (RTN) fluctuator, thus $H_{QE} =  g\sigma_z \beta(t)\mathbb{I}_E$ with $\beta(0)\equiv+1$. This fluctuator, a two-level system (TLS), randomly switches between $\pm1$ at a rate $\gamma$, where the number of switches in a time interval $(t_0, t_0+t)$ follows a Poisson distribution with mean $\gamma t$~\cite{Bergli_NJP_2009,Galperin_PRL_2006, Paladino_RMP_2014,Chantasri_2022_RTN, Balandin_APL_2024}.
The qubit coherence element $|\expect{\rho_{01}(t)}|$ ($|\expect{\rho_{01}(0)}|\equiv1$) decays rapidly under free evolution, which can be solved exactly~\cite{Santos_PRA_2005}, and is plotted in Fig.~\ref{fig:steps_QNS_recon} (c). Notice that the multiple ``steps'' in the coherence decay are a direct signature of strong non-Gaussianity, with the number of steps determined by the coupling strength $g/\gamma\gg 1$ and the long time $\gamma t \gtrsim 1$. 
Such a qubit decoherence is paradigmatically non-perturbative. As we have shown in Fig.~\ref{fig:steps_QNS_recon} (a,b), including the true spectra up to $K=14$ in the Dyson series still fails to accurately describe the qubit decoherence.

With the saturation bound in place, we conduct a series of numerical fundamental digital QNS simulations (see Table~\ref{tab:case_1} in Appendix~\ref{app:case_1} for protocol specifics). We choose $L=4$ digital control, thus necessitating the truncation only up to $K=8$, where $N^{(\infty)}_{\mathcal{C}_{\text{W}}}(L=4)=80$.  Information from all higher-order CA spectra influences is comprehensively captured within the well-considered CA spectra. 
Notice that for a fixed measurement time $T$, CA spectral information can only well describe qubit dynamics at time $mT/L$, where $m\in\{1,2,..., L\}$. To reconstruct the temporarily fine-grained ``steps'', we conduct measurements over multiple values of $T$ up to a maximum time $T_M$, thus allowing predictions of qubit dynamics at more frequent intervals as desired.
The reconstructed digital spectra are then plugged into the free evolution to accurately predict the coherence decay, as expected. 
As Fig.~\ref{fig:steps_QNS_recon} (c) shows, our fundamental digital QNS successfully describes the non-perturbative behavior of qubit coherence decay.
 \begin{figure*}[!htpb]
    \centering
    \includegraphics[width =0.8 \textwidth]{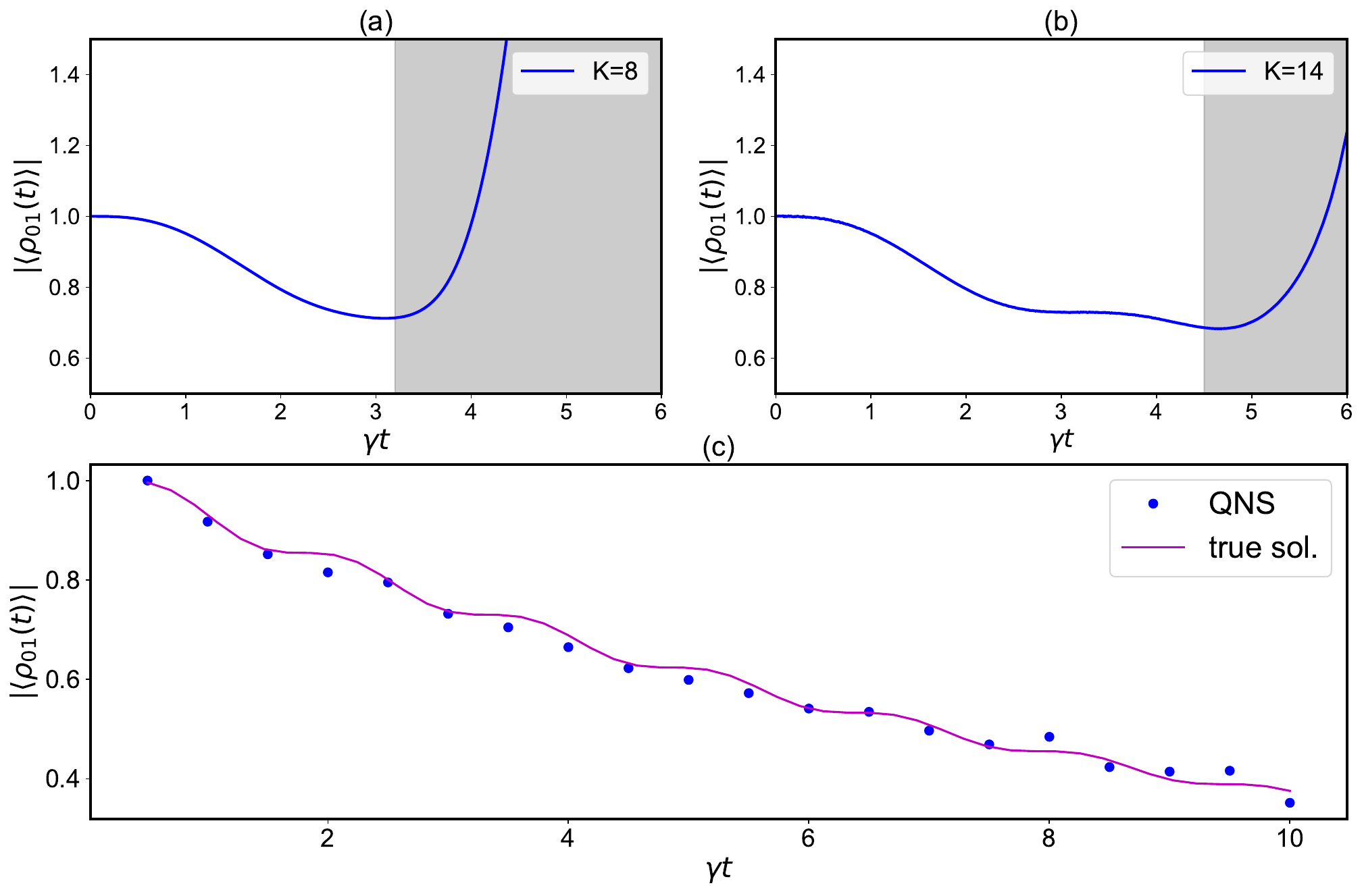}
    \caption{\textbf{Strong qubit decoherence dynamics description using fundamental digital QNS}
    (\textbf{a,b}) Qubit coherence decays over time, as demonstrated by incorporating different orders of the true spectra into the Dyson series with truncation. The noisy dynamics is inherently non-perturbative: truncation at (a) $K = 8$  is insufficient to predict even the first step. Truncation at  (b) $K = 14$  achieves the prediction of the first step.
    Prediction errors due to insufficient truncation are highlighted by the shaded areas, where coherence unphysically revives and even exceeds 1. Given the strong coupling $g\gamma \gg1$ and long evolution time $T\gamma \gtrsim1$, the principal contribution order far surpasses $K=14$; however, identifying the exact order is challenging due to the non-perturbative nature of the problem.
(\textbf{c}) Plot of the qubit coherence decay using exact full solution (without using perturbation theory) [purple line]. 
The coherence decay prediction uses the reconstructed spectra from fundamental digital QNS (a total of 80 spectra, not shown) with $L=4$ and $K=8$ [blue dots] and aligns closely with the exact solution, as justified by the saturation bound and binding symmetry analysis from the main text.
Fundamental QNS is fully performed at each of the 20 measurement times $T\in [1/20 \, T_M, 2/20 \, T_M, \dots, T_M]$, enabling dynamics predictions at times finer than $t=[1/4 \,T,1/2 \,T,3/4 \,T,T]$ for one fixed $T$. In simulation, we set  $T_M=10^{-5}$s,  $T_M\gamma=10$ and $g/\gamma=20$. 
}
    \label{fig:steps_QNS_recon}
\end{figure*}

\subsection{Saturation order in two-qubit CA QNS and fundamental digital QNS sample complexity}
 
The saturation order and fundamental QNS sample complexity in the two-qubit case are a generalization of the single-qubit result, using Proposition~\ref{prop-2}.
\begin{theorem} [\textbf{Two-qubit  saturation order}]
 The complexity of noisy two-qubit dynamics under digital control is fundamentally determined by the complexity of control (pulse number $L$) rather than the complexity of the environment. Characterizing noise only requires a truncation of the perturbation order at $K=4L$ that saturates. 
\end{theorem}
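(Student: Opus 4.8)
The plan is to mirror the single-qubit saturation argument (Theorem on single-qubit saturation order, proved via Proposition~\ref{prop-1}) but driven by the two-qubit contraction rule of Proposition~\ref{prop-2}, in which a \emph{5-streak} in the window $n$-string --- rather than a 3-streak --- is the pattern that forces the $k$-dimensional two-qubit control tensor $\bm{T}^{(k)}_{\vec{q};\vec{\mu}}(\vec{n})$ to contract to a $(k-2)$-dimensional one, with a control-independent byproduct factor. The central claim I would establish is: once $k>4L$, \emph{every} admissible time-ordered $n$-string $(n_1\ge n_2\ge\cdots\ge n_k)$ with entries in $\{1,\dots,L\}$ necessarily contains at least one 5-streak, hence every order-$k$ control tensor is contractible, hence every order-$k$ CA spectrum $\bar S^{\vec\mu}_{\vec q}(\vec n)$ is bound (via the two-qubit binding symmetry) to a spectrum of order $\le k-2$, and by induction down to order $\le 4L$. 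This yields $N^{(\infty)}_{\mathcal{C}_{\text{W}}}(L)\big|_{2} = N^{(K=4L)}_{\mathcal{C}_{\text{W}}}(L)\big|_{2}$, which is the content of the theorem.

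First I would set up the pigeonhole count carefully. In a nonincreasing string of length $k$ with values drawn from the $L$ windows, group consecutive equal entries into maximal runs; the absence of a 5-streak means every run has length at most $4$, so $k\le 4L$. Contrapositively, $k>4L$ forces a run of length $\ge 5$, i.e.\ a 5-streak, somewhere in $\vec n$. Because the bar-string and tilde-string in the Dyson expansion each keep their times (and hence their window indices) monotonically ordered (the ``always ordered/reversely-ordered'' observation reused in both Propositions~\ref{prop-1} and \ref{prop-2}), a 5-streak in $\vec n$ guarantees that at least three of those five equal-window factors land adjacently inside one of the two strings --- reducing the situation, exactly as in the proof of Proposition~\ref{prop-2}, to the single-qubit 3-streak mechanism of Proposition~\ref{prop-1} (possibly after using the intermediate identities $\tilde h_A(n)\tilde h_B(n)=U_0^\dagger(n\tau)\sigma_z^{[A]}\sigma_z^{[B]}U_0(n\tau)$ and $\tilde h_A(n)\tilde h_B(n)\tilde h_{q_3}(n)=U_0^\dagger(n\tau)\sigma_z^{[\bar q_3]}U_0(n\tau)$ to collapse the non-$q$-streak configurations). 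Then I would invoke Proposition~\ref{prop-2}'s conclusion $\bm{T}^{(k)}_{\vec q;\vec\mu}(\vec n)=c_{\vec q,\vec\mu}\,\bm{T}^{(k-2)}_{\vec q';\vec\mu'}(\vec n')$ and the (two-qubit line of the) binding-symmetry statement to conclude that this CA spectrum enters the qubit dynamics only through the same control tensor as a lower-order spectrum, so it need not be learned separately.

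The inductive wrap-up is routine: any order-$k$ spectrum with $k>4L$ binds to order $k-2$; if that one still has a 5-streak it binds further; the chain terminates at some order $\le 4L$ (at worst at order $4L$ with the unique "no-5-streak" string $(L,L,L,L,L-1,\dots,1,1,1,1)$, the direct analogue of the single-qubit $(L,L,\dots,1,1)$), which by the same streak count is not contractible. Hence all dynamically relevant spectra are captured at truncation $K=4L$, and truncating higher is redundant --- establishing ``saturation at $K=4L$.'' I would close by noting (as in the single-qubit remarks) that this is meaningful precisely because the digital frames are finitely many non-overlapping windows; overlapping frames correspond to $L\to\infty$ and the bound becomes vacuous.

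\textbf{Main obstacle.} The delicate point is \emph{not} the pigeonhole count but verifying that a 5-streak in the \emph{global} ordered $n$-string always produces a genuine 3-streak inside \emph{one} of the two (bar/tilde) substrings for \emph{every} permutation $\pi\in\Pi_{l;k}$ and every location-string $\vec q$ --- including the mixed cases where the five equal-window slots are split between the bar-string and the tilde-string and carry different qubit labels. Proposition~\ref{prop-2} handles the representative 2-streak-plus-one configuration, but a clean argument for the theorem needs the worst case: five slots could split as (bar: $2$, tilde: $3$) or (bar: $3$, tilde: $2$) or (bar: $1$, tilde: $4$), etc., and within the three-on-one-side piece the $q$-labels may not form a 3-streak, so one must show the auxiliary collapse identities still drive it to a lower-order tensor. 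Making this exhaustive (ideally via a short lemma: ``any product of $\ge 3$ factors $\tilde h_{q_i}(n)=U_0^\dagger(n\tau)\sigma_z^{[q_i]}U_0(n\tau)$ at a common window $n$ reduces, up to $\pm1$ and a possible single leftover factor, to the identity or to one such factor'') is where the real work lies; the rest is bookkeeping that parallels the single-qubit theorem verbatim.
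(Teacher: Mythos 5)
Your proposal matches the paper's own argument: the paper proves this theorem exactly by combining the pigeonhole observation (any nonincreasing $n$-string of length $k>4L$ over $L$ windows must contain a 5-streak) with Proposition~\ref{prop-2}'s contraction, concluding that all spectra at $k>4L$ bind to lower-order ones and that the unique learnable $n$-string at saturation is $(L,L,L,L,\dots,1,1,1,1)$. The "main obstacle" you flag — that a global 5-streak must yield a 3-in-one-substring collapse for every permutation and $q$-string — is precisely what the paper delegates to the proof of Proposition~\ref{prop-2} (via the identities $\tilde h_A(n)\tilde h_B(n)\tilde h_{q_3}(n)=U_0^{\dagger}(n\tau)\sigma_z^{[\bar q_3]}U_0(n\tau)$ and $\tilde h_q(n)^2=\mathbb{I}$), so citing that proposition as you do is exactly the paper's route.
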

For $|Q|=2$, we have $N^{(\infty)}_{\mathcal{C}_{\text{W}}}(L) = N^{(K=4L)}_{\mathcal{C}_{\text{W}}}(L)$. Based on Proposition~\ref{prop-2}, we know that the learnable spectra at $K=4L$ are identified as $\bar{S}^{(\vec{\mu})}_{\vec{q}} (L,L,L,L,...,1,1,1,1)|_{\forall \mu}$, which is the only $n$-string configuration that avoids a 5-streak. All spectra at $k>4L$ are bound to lower-order spectra. The two-qubit fundamental digital QNS sample complexity is concluded below, whose details are in Appendix~\ref{app:bound}.

\begin{conclusion}[\textbf{Two-qubit fundamental digital QNS sample complexity}]
For two qubits under digital control, the QNS sample complexity is determined by the complexity of control rather than noise; the bound saturates at $k=4L$. The learning sample overhead to learn classical spectra is
\begin{equation}
\begin{aligned}
      N^{(\infty)}_{\mathcal{C}_{\text{W}}}(L)\Big|_{2,C} &= 
        \sum_{k=1}^{4L} \sum^k_{t=\ceil{k/4}}\binom{L}{t}  \sum_{p_4=0} ^{\floor{(k-t)/3}} \binom{t}{p_4} \sum_{p_3=0}^{\floor{k-t-3p_4}/2} \\
        &\binom{t-p_4}{p_3} 2^{p_3}  \binom{t-p_4-p_3}{k-t-2p_3-3p_4} 3^{k-t-2p_3-3p_4} \\
        &\approx \mathcal{O}(e^{5.1L})
\end{aligned}
    \end{equation}
\end{conclusion}

\subsubsection{Case Study 2: Two-qubit  noise C$\&$C using fundamental digital QNS }
To illustrate two-qubit fundamental digital QNS, we consider two qubits, $A$ and $B$, simultaneously coupled to a symmetric RTN. The RTN is modeled in the same manner as in Case Study 1. Both qubits are coupled equally to the fluctuator with a coupling strength $g$, resulting in an error Hamiltonian given by $H_{QE}(t) = g \sum_{q\in\{A,B\}}\sigma^{[q]}_z \beta(t)\mathbb{I}_E$, where $\beta(t)$ describes the RTN.  We consider $L=2$ for the digital control and $K=8$ for saturation, in which case there are a total of 80 independent (self and cross) spectra.  

In the two-qubit case, not only does each qubit undergo standard decoherence, but also, mediated by the fluctuator, these two qubits experience noisy spatial correlation, which is captured by the CA cross-spectra. This correlation can lead to non-unitary, non-local decoherence. We illustrate the spectral estimation of noise parameterized as $g/\gamma =5$ and show the $k=2,4$ order CA spectra in Fig.~\ref{fig:2qb_QNS_k2},~\ref{fig:2qb_QNS_k4} respectively. Although spectra at higher order up to $K=8$ are constructed, they are not shown here; however, they are involved in noise-tailored control design hereafter.
The control details for this two-qubit fundamental digital QNS  are given in Table~\ref{tab:2qb_fundamental_QNS} in Appendix~\ref{app:case_2}. As Figs.~\ref{fig:2qb_QNS_k2},~\ref{fig:2qb_QNS_k4} show, reconstructed spectra and spectral true values [labeled as ``raw''] do not match. This is due to tensor contraction and swap symmetry: certain spectra are bound by these symmetries, and the QNS protocol only reconstructs the bound form.  To properly illustrate the performance of spectral estimation, we apply the binding symmetry to the ``raw'' true spectral values, labeling the results as ``bound'' in the figures for reconstruction comparison.  For Gaussian spectral reconstruction in Fig.~\ref{fig:2qb_QNS_k2}, under swap symmetry, the reconstructed spectrum satisfies
$\widehat{\bar{S}}^{(0)}_{A,B}(1,1) \approx {\bar{S}}^{(0)}_{A,B}(1,1)|_{\text{bound}}  = \bar{S}^{(0)}_{A,B}(1,1)+ \bar{S}^{(0)}_{B,A}(1,1)|_{\text{raw}}$. For $k=4$ spectral reconstruction, only about a quarter of the raw $k=4$ CA spectra require explicit reconstruction, as our symmetry analysis enables model reduction.

\begin{figure*}
    \centering
    \includegraphics[width =1.0 \textwidth]{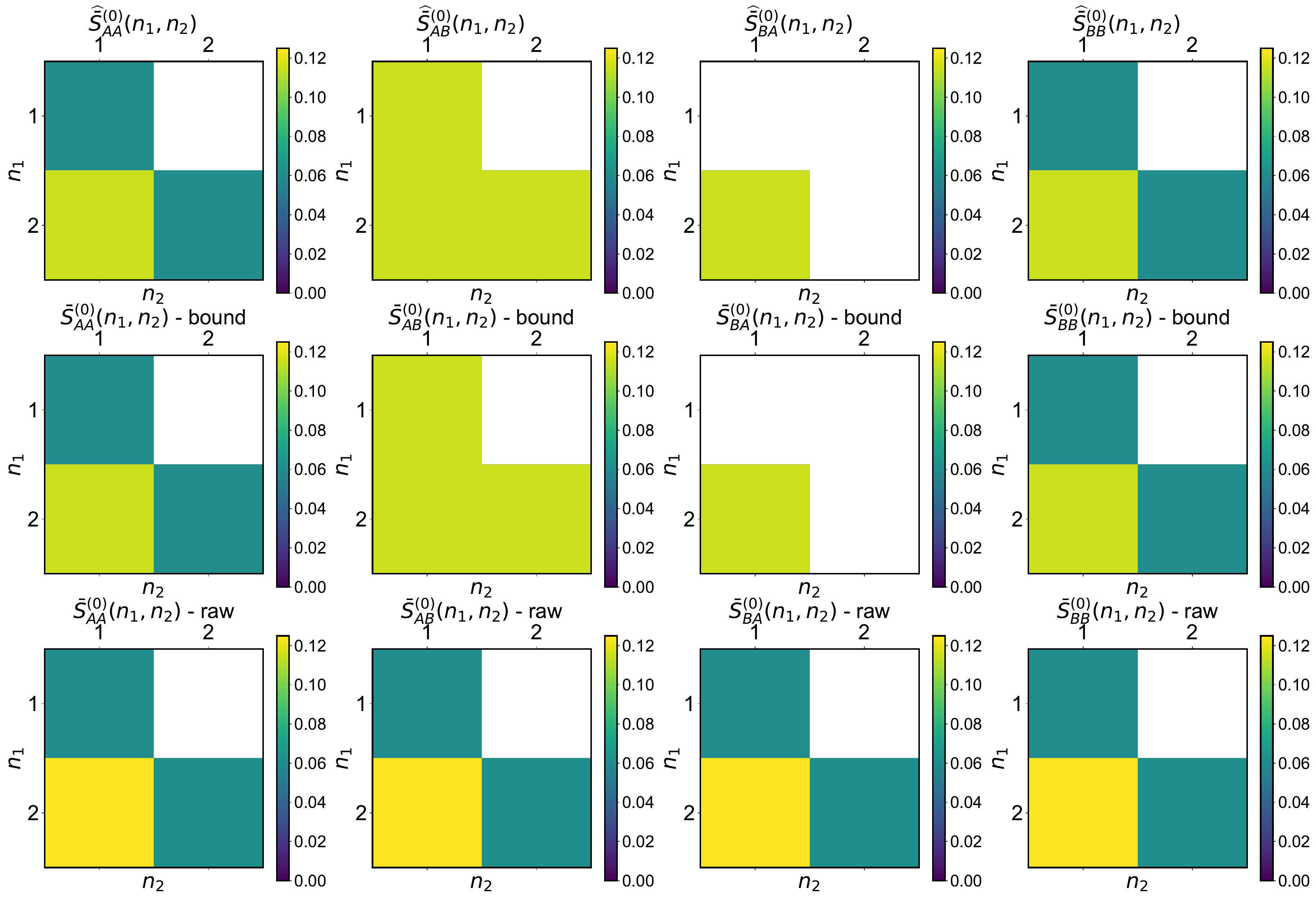}
    \caption{\textbf{Visualization of Gaussian components in two-qubit spectra from fundamental digital QNS}
    (\textbf{top}) The reconstructed two-qubit Gaussian ($k=2$) CA spectra $\{\widehat{\bar{S}}^{(0)}_{q_1,q_2}(n_1,n_2) \}$ ($q\in \{A,B\}, n\in\{1,2\}$) using  $K=8~L=2$ fundamental digital QNS, where each panel representing a fixed $(q_1,q_2)$ [labeled on-site]. In the numerical simulation, we consider a symmetric RTN with the parameters $\gamma =0.02$ MHz, and $\gamma T=0.08$. The RTN interacts with two qubits, both coupled equally with $g_A=g_B =5\gamma$. 
    (\textbf{middle}) The binding-symmetry form of true values [labeled as ``bound''] of the noise used in the simulation above. The binding symmetry incorporates swap symmetry and higher-order spectral binding symmetry. Such a bound form helps direct comparison with spectral reconstruction. 
    (\textbf{bottom}) The raw true values [labeled as ``raw''] of CA spectra without applying symmetry analysis. The middle plot is the product after applying symmetry analysis on the raw spectra.    Data plaquettes presented here but absent in the top and middle plots are subject to swap symmetry.      }
    \label{fig:2qb_QNS_k2}
\end{figure*}

\begin{figure*}
    \centering
    \includegraphics[width =1.0 \textwidth]{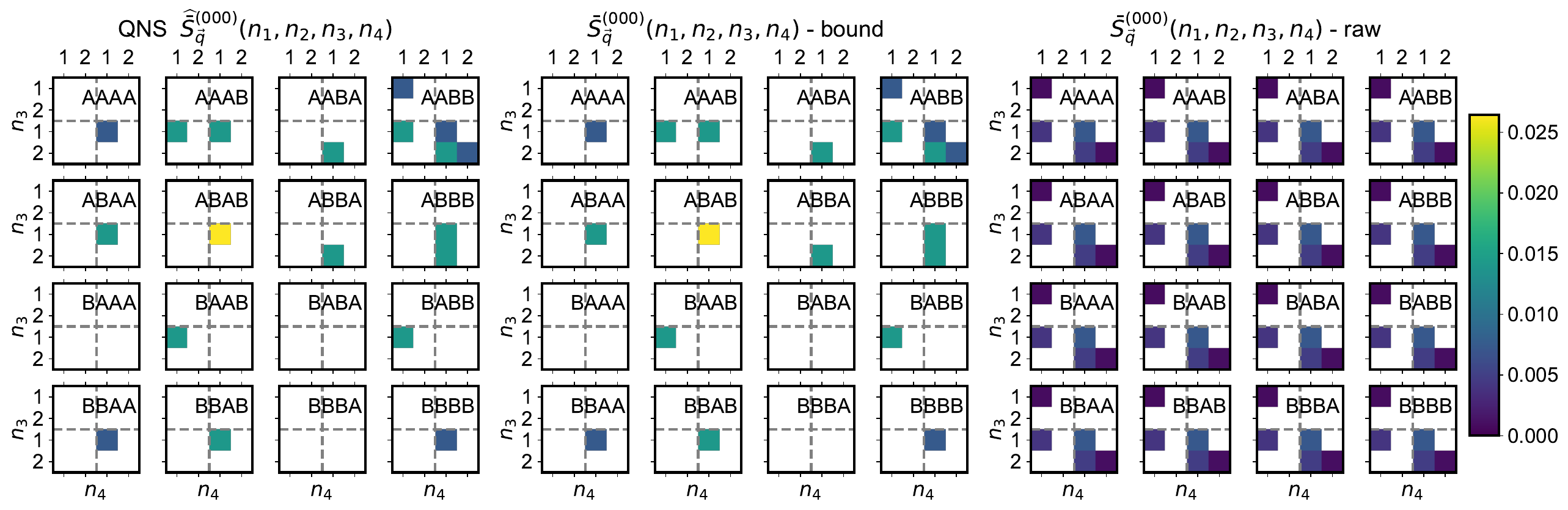}
    \caption{\textbf{Visualization of $4$-th order non-Gaussian components in two-qubit spectra from fundamental digital QNS}
    (\textbf{left}) The reconstructed two-qubit CA spectra $\{\bar{S}^{(0,0,0)}_{q_1,q_2,q_3, q_4}(n_1,n_2,n_3,n_4) \}$ ($q\in \{A,B\},~n\leq L=2$), with each panel representing a fixed $(q_1,q_2,q_3,q_4)$ [labeled on-site].  Each panel as a $(4,4)$ grid is further partitioned by the dashed lines into (2,2) blocks, representing fixing the $(n_1,n_2)$ values at $(1,1)$ [upper-left], $(1,2)$ [upper-right], $(2,1)$ [lower-left], and $(2,2)$ [lower-right].  Each panel's axes are labeled $n_3$ and $n_4$ (on-site) and are both marked $(1,2,1,2)$ due to the stride across different values of $n_1,n_2$. 
    (\textbf{middle}) The bound form of true values with binding and swap symmetry [labeled as ``bound''].
    (\textbf{right}). The raw form of true values [labeled as ``raw'']. 
    The noise parameters are the same as in Fig.~\ref{fig:2qb_QNS_k2}, as these two figures show two subsets of all spectral reconstruction with $K=8,~L=2$ fundamental digital QNS.  }
    \label{fig:2qb_QNS_k4}
\end{figure*}

We now present the results of the noise reduction. Given the two-qubit CA spectra, our control objective is to craft a noise-optimized \textit{idle circuit} (i.e., a two-qubit identity gate) over a fixed evolution period $T$~\cite{multi_time_opt_comment}. Notice that under the digital control assumption, setting control objectives to other actions (say, a CNOT gate) is trivial: one can always apply such a gate right before measurement without changing the noise propagator.
For every initial state $\rho_Q(0)$, the time-evolved state of the qubit can be expressed as $\rho_Q(T) = \sum_{u,v=0}^{15}\chi_{u,v}(T)\Lambda_u\rho_Q(0)\Lambda_v$, where $\rho_Q(T)$  departs $\rho_Q(0)$ by an amount that can be quantified by a cost function, e.g. process fidelity.~\cite{multi_time_opt_comment} 
The reconstructed spectra give us tomographically complete information to solve $\chi(T)$. To obtain the optimized idle circuit, we maximize the fidelity of the process $F(T)|^K_{\mathcal{F}} = \Tr[{\chi}(T) \tilde{\chi}(T)]^K_{\mathcal{F}}$, where ${\chi}(T)~[\tilde{\chi}(T)]$ is the actual (ideal) $\chi$ matrix, and the subscript $\mathcal{F}$ stands for the $L=2$ window frame $\mathcal{C}_\text{W}$ we are working on, and the truncation order $K$ indicates that only spectral information up to the $K$-th order is used in the numerical optimization. 
Concretely, the optimized digital control in such a frame, denoted  $P_{\text{opt}}|^{K}_{\mathcal{F}}$, is determined by: 
\begin{equation*}
        P_{\text{opt}}|^{K}_{\mathcal{F}} \equiv \argmax_{P} F \Big(T; P|_{\mathcal{F}},\widehat{\bar{S}}_{\vec{q}}^{(k\leq K)}(\vec{n}) \Big)\big|^{K}_{\mathcal{F}},
\end{equation*}
where $\mathcal{F} = \mathcal{F}|^{L=2}_{\mathcal{C}_\mathcal{W}}$ and we saturate truncation at $K=4L=8$.
The optimized control parameters are
\begin{widetext}
    \begin{equation}
\begin{aligned}
P = & \{ \theta_A(1) ,  \alpha_A(1),   \phi_A(1) ,  \theta_B(1) ,  \alpha_B(1) ,  \phi_B(1),   \Theta(1),   \rho(1) ,  \omega(1) ,  \theta^{\prime}_A(1) ,  \alpha^{\prime}_A(1) ,  \phi^{\prime}_A(1) ,  \theta^{\prime}_B(1)  , \alpha^{\prime}_B(1)   ,\phi^{\prime}_B(1) , \\
& \theta_A(2) ,  \alpha_A(2) ,  \phi_A(2)  , \theta_B(2)  , \alpha_B(2)  , \phi_B(2)  , \Theta(2)  , \rho(2)  , \omega(2)  , \theta^{\prime}_A(2)  , \alpha^{\prime}_A(2)  , \phi^{\prime}_A(2)  , \theta^{\prime}_B(2) ,  \alpha^{\prime}_B(2)  , \phi^{\prime}_B(2) \}
 \end{aligned}
 \end{equation}
\end{widetext}
which constitute the digital control propagator via $KAK$ decomposition:
\begin{widetext}
    \begin{equation}
\begin{aligned}
    U_0(n\tau)|_P  = \prod_{q\in Q} &e^{-i \theta_q(n) [\cos \phi_q(n) \sin \beta_q(n) \sigma_x^{[q]} + \cos \phi_q(n) \cos \beta_q(n) \sigma_y^{[q]} + \sin  \phi_q(n)  \sigma_z^{[q]}] } \\
    	        \times  &e^{-i \Theta(n) [\cos \varrho(n) \sin \omega(n) \sigma_x\sigma_x + \cos \varrho(n) \cos \omega(n) \sigma_y\sigma_y +    \sin \varrho(n)  \sigma_z\sigma_z ] } \times  \\ 
	        \prod_{q\in Q} &e^{-i \theta'_q(n)  [\cos \phi'_q(n) \sin \beta'_q(n) \sigma_x^{[q]} + \cos \phi'_q(n) \cos \beta'_q(n) \sigma_y^{[q]} + \sin  \phi'_q(n)  \sigma_z^{[q]}]  }, \\
\end{aligned}
\end{equation}
\end{widetext}
where $n=\{1,2\}$ and $Q=\{A,B\}$. The optimal control $P_{\text{opt}}$ is optimized by associating the above two equations. 

The maximal fidelity $F(T; P_{\text{opt}}|^{K}_{\mathcal{F}},\widehat{\bar{S}}{}^{(k\leq K)}(\vec{n}))|^{K}_{\mathcal{F}}$ obtained by the optimizer already accounts for the convolution between control and CA spectra beyond the truncation order ($\bar{S}^{(k > K)}(\vec{n})$) because the truncation is saturated; therefore, it represents the best fidelity one can obtain at infinite truncation order $ F(T; P_{\text{opt}}|^{K}_{\mathcal{F}},\widehat{\bar{S}}{}^{(k\leq K)}(\vec{n}))|^{K}_{\mathcal{F}} = F(T; P_{\text{opt}}|^{\infty}_{\mathcal{F}},\bar{S}{}^{(k < \infty)}(\vec{n}))|^{\infty}_{\mathcal{F}}.$

\begin{table*}
\centering
\resizebox{1.8\columnwidth}{!}{%
\begin{tabular}{| r|  rrrrrrrrrrrrrrr| }
\toprule
$g/\gamma$ & $\theta_A(1)$ &  $\alpha_A(1)$ &  $\phi_A(1)$ &  $\theta_B(1)$ &  $\alpha_B(1)$ &  $\phi_B(1)$ &  $\Theta(1)$ &  $\varrho(1)$ &  $\omega(1)$ &  $\theta^{\prime}_A(1)$ &  $\alpha^{\prime}_A(1)$ &  $\phi^{\prime}_A(1)$ &  $\theta^{\prime}_B(1)$ &  $\alpha^{\prime}_B(1)$ &  $\phi^{\prime}_B(1)$ \\
 &  $\theta_A(2)$ &  $\alpha_A(2)$ &  $\phi_A(2)$ &  $\theta_B(2)$ &  $\alpha_B(2)$ &  $\phi_B(2)$ &  $\Theta(2)$ &  $\varrho(2)$ &  $\omega(2)$ &  $\theta^{\prime}_A(2)$ &  $\alpha^{\prime}_A(2)$ &  $\phi^{\prime}_A(2)$ &  $\theta^{\prime}_B(2)$ &  $\alpha^{\prime}_B(2)$ &  $\phi^{\prime}_B(2)$ \\
\midrule
0.5& 0.0066 &0.6705 &-0.1233 &  -0.0377 &  -0.0083 & 0.0185 &-1.0224 &  0.0550 &-0.0519 &  -0.0167 &  -0.1566 &-0.2403 &  -0.0096 &0.2382 &-0.0556 \\
& -0.0128 & 0.2538 &-0.4564 &0.0175 &0.0659 &-0.0733 & 0.6807 &  0.0919 & 0.1550 &0.0005 &  -0.0194 &-0.1576 &  -0.0644 &0.0332 & 0.1568 \\
\hline
 1.0 &0.0101 &1.1962 &-0.4446 &  -0.0071 &  -0.0079 & 0.0207 &-0.9817 &  0.0298 & 0.1066 &  -0.0126 &  -0.1276 &-0.1780 &  -0.0187 &0.2379 &-0.0221 \\  &  -0.0143 &0.6568 &-0.4681 &0.0039 &  -0.1240 &-0.0523 & 0.6954 &  0.0963 & 0.0558 &0.0008 &  -0.0398 &-0.5486 &  -0.0295 &0.0804 & 0.2762 \\
 \hline
5&0.0133 &1.0146 &-0.1298 &  -0.0217 &  -0.0253 &-0.0412 &-0.9876 &  0.0199 & 0.0154 &  -0.0204 &  -0.0862 &-0.6216 &  -0.0061 &0.2046 &-0.0152 \\  & -0.0142 &0.8457 &-0.0754 &0.0018 &  -0.3661 &-0.1664 & 0.6770 &  0.1035 & 0.0311 &0.0009 &  -0.2459 & 2.1519 &  -0.0317 &0.0181 &-0.3409 \\
 \hline
 8& 0.0158 &1.0873 &-1.6006 &  -0.0281 &0.0412 &-0.1430 &-1.0099 &  0.0193 & 0.0319 &  -0.0216 &  -0.2141 &-0.6901 &0.0019 &0.5420 &-0.0126 \\  & 
 -0.0181 &0.7962 &-0.0479 &0.0014 &0.2880 & 0.2829 & 0.6595 &  0.1107 & 0.0709 &  -0.0039 &  -0.4244 & 0.4873 &  -0.0345 &  -0.1208 &-0.6822 \\
 \hline
10&  0.0238 &0.4380 &-2.7380 &  -0.0464 &  -0.2419 & 0.0593 &-1.1091 & -0.0094 &-0.0114 &  -0.0292 &  -0.0316 &-1.6617 &0.0017 &0.9232 & 0.1785 \\  
&-0.0199 &0.4429 & 1.5993 &0.0112 &4.1219 & 7.6854 & 0.5508 &  0.1534 &-0.1744 &  -0.0412 &0.0780 & 5.1569 &  -0.0645 &0.0077 &-0.8491 \\
  \bottomrule
\end{tabular}%
}
\caption{Optimized control $P_{\text{opt}}|^{K}_{\mathcal{F}}$ at different values of $g/\gamma$ (using the Nelder-Mead algorithm).  }
\label{tab:C_opt}
\end{table*}

In Figure~\ref{fig:2qb_optimal_control}, we show the performance of noise-optimized control (optimized) versus uncorrected (bare) control (the corresponding optimized controls are in Table.~\ref{tab:C_opt}). The significant fidelity margin of optimized control over bare control shows the usefulness of noise C$\&$C. This margin is more pronounced with the increase of coupling strength. We note that even though we know all essential spectral information relevant to such an $L=2$ digital control, optimized control is still not perfect and its performance will degrade with the increase of coupling strength (see Figure's insect ). In fact we note that there is a limit to the  ``correctability'' of digital control: precise knowledge of all necessary spectra does not mean one can design a control to fully correct the error.  Such correctability is control-limited. The most simple scenario illustrating this argument is for $L=1$ digital control. Even with precise knowledge of the CA spectra, the optimized control, in general, cannot change the noise evolution trajectory to reduce decoherence; it can only correct the unitary/coherent error.  

\begin{figure}[!htbp]
    \centering
    \includegraphics[width =0.45 \textwidth]{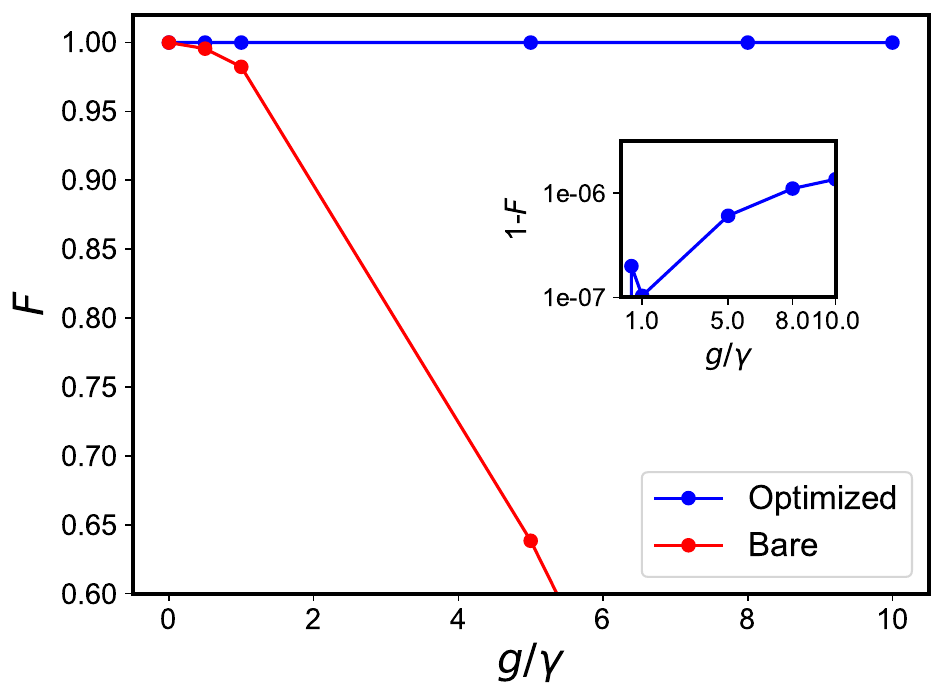}
    \caption{\textbf{Performance  of  noise-optimized  two-qubit control}.   The two-qubit process fidelity is plotted against the RTN coupling strength.  The ideal unitary is a two-qubit memory gate. Insect shows a zoomed-in view of optimized control.  }
    \label{fig:2qb_optimal_control}
\end{figure}

\subsection{Multi-qubit bound}
 
Though all physical unitaries are implemented by single and two-qubit gates, a study of multi-qubit bound is still useful for two reasons. First, error propagation within the quantum circuit leads to spatially non-local errors requiring characterization. Second, at the logical level, it is usually several physical qubits entangled with several ancillae that form the correction code to produce one clean logical qubit. In such a process, studying multi-qubit C\&C is instrumental for integrating a better logical qubit. 

For spatial noise beyond the two-qubit case, we identify that for spatially fully-correlated $|Q|$-qubit system, the corresponding  \textit{ |Q|-qubit saturation bound is} $K=2|Q|L$. 

\section{Discussion}
\label{sec:Discussion}
In this work, we have developed a comprehensive framework for characterizing and controlling digitally-driven quantum systems subject to general non-Gaussian dephasing noise. Supported by two pillars---control adaption~\cite{Teerawat_PRXQ_Frame} and digital control symmetry analysis---we have demonstrated how open quantum dynamics can be effectively managed using a resource-efficient approach that aligns with constrained controllability. Our findings reveal that even in the presence of highly complex, non-perturbative noise, the effective complexity of qubit dynamics is governed not by the noise but by the control. This insight underscores the power of control-driven approaches for mitigating noise and optimizing open quantum dynamics.

A central outcome of this study is the identification of saturation orders in the perturbative expansion for single- and two-qubit systems (and beyond) under digital control. By demonstrating that noise characterization saturates at specific truncation levels, our framework offers a bounded and tractable method for addressing non-perturbative noise. This result has significant implications for QNS, particularly in reducing the sample (computational and experimental)  overhead associated with learning high-order, non-Gaussian noise spectra. Through numerical simulations, we have illustrated the practical utility of our framework. Moreover, the framework facilitates the design of noise-adapted circuits, offering a pathway to improving the performance of near-term quantum devices without excessive resource demands.

Despite these advances, several challenges persist. The scalability of the framework is a notable limitation, as the learning complexity grows exponentially with the number of qubits and the size of the digital control. This growth is fundamentally tied to the non-Clifford nature of universal quantum circuits~\cite{Gottesman_Knill} and the intrinsic properties of non-Gaussian noise, highlighting the necessity of further innovation in control design and noise mitigation strategies. Moreover, the framework is currently restricted to digital controls, where noise C\&C can be done alternatively by non-Markovian process tomography. Our present results do not address the complexities introduced by non-instantaneous controls, which are increasingly relevant in fast-clock quantum circuits. Extending the principles of our formalism to these scenarios, along with incorporating relaxation (in this case our central results still hold) and other types of errors~\cite{Qasim_SPAM_paper}, represents an important avenue for future research~\cite{forthcoming_comment}.


In conclusion, this study demonstrated that aligning noise characterization with control capabilities offers a powerful strategy for addressing the complexities of open quantum systems. Our framework may open new avenues for resource-efficient quantum error mitigation and optimization. While challenges remain, the insights gained here contribute to advancing the control and characterization of quantum devices in the presence of high-complexity noise.

\section*{Acknowledgments}
W.D. conducted this work independently with self-support. Y.W. was supported by the Innovation Program for Quantum Science and Technology (2023ZD0301400) and the National Natural Science Foundation of China (grant 12288201). The authors would like to express their gratitude to Gerardo Paz-Silva and Lorenza Viola for their insightful discussions on frames and non-Gaussian quantum noise spectroscopy.  W.D. thanks Ren-Bao Liu, Gerardo Adesso, and Madalin Guta for discussions. 
W.D. is also grateful to Xiu-Hao Deng and his Quantum Dynamics \& Control Group for providing a collaborative environment where open-minded intellectuals can engage in productive exchanges. Additionally, W.D. thanks Arian Vezvaee for his valuable comments on the manuscript, and Sophia Economou for her unwavering guidance over the years, teaching not only how to be an excellent researcher, but also, and more importantly, how to be a better human being.

\onecolumngrid
\appendix
\section{Dyson series of qubit dynamics }
\label{app:dyson}
In this Appendix, we give the algebraic details to derive Eq.~\ref{eq:dyson} in the main text. 
\subsection{Derivation of Dyson series}
Recall that 
\begin{equation}
\begin{aligned}
  \mathbb{E}[O(T)]_{\rho_Q\otimes \rho_E}=\Tr_Q\Big [\expect{\mathcal{T}_{+} e^{-i \int^T_{-T}{H}_O(t)dt }} \rho_Q \tilde{O}(T)\Big]= \Tr[\sum_{k=0} ^{\infty}   \mathcal{D}^{(k)}_O(T)/k!  \rho_Q  \tilde{O}(T) ] .
  \end{aligned}
\end{equation}
The Dyson term is defined as
\begin{equation}
\begin{aligned}
\mathcal{D}^{(k)}_O(T)/k! &= (-i)^k \int^T_{-T} d_> \vec{t}_{[k]} \expect{H_O(t_1) ... H_O(t_k) } \\
&= (-i)^k \sum_{\vec{q}}\sum\limits^{k}_{l=0}\sum\limits_{\pi\in \Pi_{l;k}} \int^T_{0}d_>\vec{t}_{[k]} \Big\langle \prod\limits^l_{j=1}\bar{H}_{q_j} (t_{\pi(j)}) \prod\limits^k_{j'=l+1} \tilde{H}_{q_{j'}} (t_{\pi(j')})\Big\rangle \\
& = (-i)^k \sum_{\vec{q}}\sum\limits^{k}_{l=0}\sum\limits_{\pi\in \Pi_{l;k}} \int^T_{0}d_>\vec{t}_{[k]}  \prod\limits^l_{j=1}\bar{h}_{q_j} (t_{\pi(j)}) \prod\limits^k_{j'=l+1} \tilde{h}_{q_{j'}} (t_{\pi(j')}) \Big\langle B_{q_1}(t_{\pi(1)})\cdots B_{q_k}(t_{\pi(k)})\Big\rangle,  \\
\end{aligned}
\end{equation}
where we have used the binary expansion of  $H_O(t)$ supported on $[-T,0]$ and $[0,T]$. The last equality is simply the separation of system and environment. The $\pi \in \Pi_{l;k}$ therein organize all admissible permutations, and $ \text{Card}( \{\Pi_{l;k}\}_{l=0}^k) = 2^k$ equals the number of  Hamiltonian string combinations.  The definition of Dyson series and the extension of the time integral domain lower bound to $-T$ limit that the times are always ordered in the control string $\prod_j \tilde{h}_q(t_{\pi(j)})$ and reversely-ordered in $\prod_j \bar{h}_q(t_{\pi(j)})$ respectively. For example, for $k=2$, one has 
\begin{equation}\label{eq:A3}
\mathcal{D}^{(2)}_O(T)/2! = - \sum_{\vec{q}} \int^T_{0} d_> \vec{t}_{[2]} \Big[  \expect{\tilde{H}_{q_1}(t_1)\tilde{H}_{q_2}(t_2)} +  \expect{\bar{H}_{q_1}(t_1)\tilde{H}_{q_2}(t_2)} + \expect{\bar{H}_{q_1}(t_2)\tilde{H}_{q_2}(t_1)} +  \expect{\bar{H}_{q_1}(t_2)\bar{H}_{q_2}(t_1)}   	 \Big].
\end{equation} 
To proceed, notice that we can write $\bar{H}(t)$ as
\begin{equation}
\begin{aligned}
    \bar{H}(t) =\sum_{q\in Q} \bar{H}_q(t) &=  \sum_{q \in Q} \sum_{u=1}^{4^{|Q|}-1}y_{[q],u}(t) \big[\sum^{4^{|Q|}-1}_{c=1} f^{u}_c \Lambda_c \big]\otimes B_q(t)\\
    &=\sum_{q \in Q} \sum_{u;c}^{4^{|Q|}-1}y_{[q],c}(t) f^{c}_u \Lambda_u \otimes B_q(t)
\end{aligned}
\end{equation} 
where the conjugation factor $f^{c}_u \equiv -\frac{1}{2^{|Q|}} \Tr[\tilde{O}^{-1}(T) \Lambda_{c}\tilde{O}(T) \Lambda_{u}] $ is symmetric $f^u_c = f^c_u$ for  unitary and Hermitian $O(T)$.  Though there are $2^k$ distinct Hamiltonian strings, there are only $2^{k-1}$ noise correlators, that are represented as $\expect{B_{q_1}(t_{\pi(1)}) ... B_{q_k}(t_{\pi(k)})}$. For example, $\mathcal{D}^{(2)}_{O}(T)$ only admits $\expect{B_{q_1}(t_1)B_{q_2}(t_2)}$ and $\expect{B_{q_1}(t_2)B_{q_2}(t_1)}$ from the four Hamiltonian strings in Eq.~\ref{eq:A3}.  This is because the conjugation only applies to the system $Q$ rather than the environment $E$; more details are explained in Appendix~\ref{app:dyson} 2. Applying the above decomposition, the Dyson term can be further expressed as
\begin{equation}
\begin{aligned}
      \mathcal{D}^{(k)}_O(T)/k! 
      =(-i)^k&  \sum_{\substack{\vec{q}; ~l; \\ \pi; ~\vec{u}; ~\vec{c}}}   \int^T_0d_>\vec{t}_{[k]} 
      \Big[  \prod_{j=1}^l\prod_{j'=l+1}^k f^{c_j}_{u_{j}}y_{[q_j],c_j}(t_{\pi(j)})y_{[q_{j'}],u_{j'}}(t_{\pi(j')})\Lambda_{u_j}\Lambda_{u_{j'}}  \Big] \times  \Big\langle B_{q_1}(t_{\pi(1)})\cdots B_{q_k}(t_{\pi(k)})\Big\rangle.
\end{aligned}
\end{equation}
Notice that the noise correlators are generally out-of-time-ordered due to the action by $\pi$, and we will switch to a time-ordered nested bracket representation through some linear transformation. Let us define 
\begin{equation}
\mathcal{B}^{\vec{\mu}_{[k-1]}}_{ \vec{q}_{[k]}} (\vec{t})\equiv     \frac{1}{2^{k-1}} \widehat{P}_{\vec{q}}\big(\llbracket ...\llbracket B_{q_1}(t_1), B_{q_2}(t_2) \rrbracket_{\mu_1},  B_{q_3}(t_3) \rrbracket_{\mu_2}, ..,  B_{q_k}(t_k) \rrbracket_{\mu_{k-1}} \big),
\end{equation}
where  $ \llbracket X,Y \rrbracket_{\mu} = XY+(-1)^{\mu} YX$, and the operator $\widehat{P}_{\vec{q}}$ is a $q$-ordering operator such that the expanded $k$-length bath operator string is always indexed by $\{q_1,...,q_k\}$ from left to right.  As  in the  out-of-time-ordered noise correlator, the $q$-string is always ordered. Indeed, $\widehat{P}_{\vec{q}}$ helps all $q$  avoid forming faulty permutation by bracket. 

Now we claim that the above nested bracket representations are complete: any admissible noise correlator can be written as their combination as  
\begin{equation}
     B_{q_1}(t_{\pi(1)}) \cdots  B_{q_k}(t_{\pi(k)}) = \sum_{\vec{\mu}_{[k-1]}\in \{0,1\}^{\otimes k-1}}  (-1)^{\bar{f}^{(k)}_{\pi}(\vec{\mu})}  \mathcal{B}^{\vec{\mu}_{[k-1]}}_{ \vec{q}_{[k]}} (t_1, ..., t_k),
\end{equation}
that also trivially validates
\begin{equation}
    \Big\langle B_{q_1}(t_{\pi(1)}) \cdots B_{q_k}(t_{\pi(k)})\Big\rangle = \sum_{\vec{\mu}_{[k-1]}\in \{0,1\}^{\otimes k-1}}  (-1)^{\bar{f}^{(k)}_{\pi}(\vec{\mu})}  \Big\langle\mathcal{B}^{\vec{\mu}_{[k-1]}}_{ \vec{q}_{[k]}} (t_1, ..., t_k)\Big\rangle,
\end{equation}
where $\bar{f}^{(k)}_{\pi}(\vec{\mu}) \in \{0,1\}$.  To clarify the punchline, we wish the readers to refer to the next subsection for its justification.   As for now,  we elaborate on $k=2$ as a pedagogic example,  in which case the conventional noise correlator includes $\expect{B_{q_1}(t_1)B_{q_2}(t_2)}$ and $\expect{B_{q_1}(t_2)B_{q_2}(t_1)}$ for $q_1,q_2 \in Q$. While in the bracketor representation, one has  
\begin{equation}
\begin{aligned}
        \expect{\mathcal{B}^{\mu}_{ \vec{q}_{[2]}} (t_1,t_2)} = &\frac{1}{2}  \Big[\widehat{P}_{\vec{q}}\expect{B_{q_1}(t_1)B_{q_2}(t_2)} +(-1)^{\mu} \widehat{P}_{\vec{q}}\expect{ B_{q_2}(t_2)B_{q_1}(t_1)} \Big] \\
        =& \frac{1}{2} \Big[\expect{B_{q_1}(t_1)B_{q_2}(t_2)} +(-1)^{\mu} \expect{B_{q_1}(t_2)B_{q_2}(t_1)} \Big],
\end{aligned}
\end{equation}
where $\mu\in \{0,1\}$ and in the second equality, the $q$-ordering is regularized by $\widehat{P}$. It is ready to check the relation between these two representations: for $k=2$, under triviality $\pi = e$ and two-points permutation $\pi =(1,2)$ gives
\begin{equation}
\begin{aligned}
\expect{B_{q_1}(t_1)B_{q_2}(t_2)}= \expect{\mathcal{B}_{q_1,q_2}^{(0)}} +\expect{ \mathcal{B}_{q_1,q_2}^{(1)}}, \quad \expect{B_{q_1}(t_2)B_{q_2}(t_1)}= \expect{\mathcal{B}_{q_1,q_2}^{(0)}} - \expect{\mathcal{B}_{q_1,q_2}^{(1)}}.
\end{aligned}
\end{equation}
from which one can identify the  sign function $\bar{f}^{(2)}_{\pi =e}(\mu=0) = \bar{f}^{(2)}_{\pi =e}(\mu=1) =0 $ and  $\bar{f}^{(2)}_{\pi =(1,2)}(\mu=0) =0, \bar{f}^{(2)}_{\pi =(1,2)}(\mu=1) =1 $. 

To conclude, using the representations introduced above, we finally obtain the final form in Eq.~\ref{eq:dyson}.  

\subsection{Completeness of nested bracket representation}
\label{app:nest_comm_complete}

This section is fully dedicated to the details of nested bracket representation introduced above. 

To begin with, recall that $ \text{Card}(\{\Pi_{l;k}\}_{l=0}^k) = 2^k$ on the $Q$ side, and we explain why there are only $2^{k-1}$ distinct noise correlators in $E$. We already know that in any $\tilde{H}$ ($\bar{H}$) string the times are ordered (reversely-ordered), and therefore the only effect of $\pi$ is just to allocate each $t_{\pi(i)}$ in $\tilde{H}$ or $\bar{H}$. Once this allocation is completed, the two strings are fully determined. Crucially, $t_1$ is always the largest time point and whether it is allocated to $\tilde{H}$ or $\bar{H}$ does not change its position in the whole $k$-length string, which thus results in the extra factor of 2.


To proceed, let us introduce a binary representation as follows:
\begin{equation}
    \bm{\pi}_Q := \text{binary form }(\pi),
\end{equation}
where $\bm{\pi}_Q$ is essentially a $k$-length binary string with each element in $\{0,1\}$. Specifically, the $i$-th element of $\bm{\pi}_Q $  is 1 if $t_{i}|_{i=\pi(j)}$ is in $\bar{H}$ string, and 0 if in $\tilde{H}$.  For any $k$, all $\bm{\pi}_Q$ form a group of order $2^k$ under bitwise  XOR operation. The group $\{\bm{\pi}_Q\}$ includes complete information of permutations on $Q$.
The $2^{k-1}$ distinct noise correlators can be identified using the following quotient  group
\begin{equation}
    \{\bm{\pi}_E \}:= \{\bm{\pi}_Q \}/\mathbb{Z}_2,
\end{equation}
where $\bm{\pi}_E$ is defined as permutations on $E$. Interestingly, $\bm{\pi}_E$ is a $(k-1)$-length binary string after deleting the first element of $\bm{\pi}_Q$. This is due to the two-fold degeneracy of $t_1$'s location.  We remind the readers that in $k$-th order Dyson series, $\bm{\pi}_E$ and $\vec{\mu}_{[k-1]}$ are both binary vectors with length $k-1$. For example, at $k=2$, $\{\pi_Q\} =\{(00),~(10),~(01),~(11)\}$ (corresponding to the four terms in Eq.~\ref{eq:A3}) and $\{\pi_E\} = \{(0),~(1)\}$. 

Next we demonstrate that the nested bracket representation is complete: each correlator in the restricted permutation subset is a linear combination of them as we claimed.  Notice that due to the fact that  $q$'s are always ordered in both representations, they will be dropped for convenience.
\begin{theorem}
Any $k$-point bath correlator $B(t_{\pi(1)})\cdots B(t_{\pi(k)})$ is a linear combination of $(k-1)$-level nested bracket set of $\mathcal{B}^{(\mu_1,\cdots,\mu_{k-1})}(t_1,\cdots,t_k)$, $\forall \mu_j\in\{0,1\}$, as
\begin{equation}
    B(t_{\pi(1)})\cdots B(t_{\pi(k)})|_{\pi\sim \bm{\pi}_E} = \sum_{\bm{\mu}\in \{0,1\}^{\otimes k-1}} (-1)^{\bm{\mu}\cdot\bm{\pi}_E} \mathcal{B}^{\bm{\mu}}(\vec{t}),
\end{equation}
where $\bm{\mu}=\vec{\mu}_{[k-1]} = (\mu_1,\cdots,\mu_{k-1})$ and  $\bm{\pi}_E$ is the $(k-1)$-length binary vector aforementioned. 
\end{theorem}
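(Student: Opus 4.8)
The plan is induction on the number of times $k$, at each step peeling off the \emph{smallest} time $t_k$ --- which is the outermost operator appended in the nested bracket and, since $d_{>}\vec{t}_{[k]}$ enforces $t_1\ge\cdots\ge t_k$, genuinely the smallest argument. The base case $k=1$ is trivial ($B(t_1)=\mathcal{B}(t_1)$, with empty $\bm{\pi}_E$ and empty $\vec{\mu}$), and $k=2$ was verified explicitly above; assume the identity holds for every restricted $(k-1)$-point correlator.

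First I would isolate the structural fact used at the inductive step. In any admissible term the times are nondecreasing within the $\bar{H}$-block and nonincreasing within the $\tilde{H}$-block, so the smallest time $t_k$ can occupy only one of the two ends of the operator string: it is the leftmost factor precisely when $t_k$ is allocated to $\bar{H}$ --- i.e.\ when the last bit $b_k$ of $\bm{\pi}_Q=(b_1,\dots,b_k)$ (equivalently, the last bit of $\bm{\pi}_E=(b_2,\dots,b_k)$) equals $1$ --- and the rightmost factor when $b_k=0$. Deleting that factor yields an admissible restricted $(k-1)$-point correlator $X':=B(t_{\pi'(1)})\cdots B(t_{\pi'(k-1)})$ on $t_1,\dots,t_{k-1}$ whose binary label is exactly $\bm{\pi}_E'=(b_2,\dots,b_{k-1})$, because removing one operator from either block preserves the monotonicity of the remaining times; one checks that this reduction is well defined modulo the two-fold $t_1$-ambiguity and that it puts the $2^{k-1}$ restricted $k$-point correlators in bijection with pairs $(\bm{\pi}_E',b_k)$.

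Next I would record the corresponding recursion on the brackets. Writing $X:=\mathcal{B}^{(\mu_1,\dots,\mu_{k-2})}(t_1,\dots,t_{k-1})$ and dropping the $q$-ordering $\widehat{P}_{\vec{q}}$ (inert, since the $q$-string is already ordered), the definition together with the $1/2^{k-1}$ versus $1/2^{k-2}$ normalizations gives
\[
\mathcal{B}^{(\mu_1,\dots,\mu_{k-1})}(\vec{t})=\tfrac{1}{2}\big(X\,B(t_k)+(-1)^{\mu_{k-1}}B(t_k)\,X\big),
\]
whence $X\,B(t_k)=\sum_{\mu_{k-1}}\mathcal{B}^{(\dots,\mu_{k-1})}(\vec{t})$ and $B(t_k)\,X=\sum_{\mu_{k-1}}(-1)^{\mu_{k-1}}\mathcal{B}^{(\dots,\mu_{k-1})}(\vec{t})$; with $b_k=1$ flagging the left placement and $b_k=0$ the right one, both read uniformly as the weight $(-1)^{\mu_{k-1}b_k}$. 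Combining the two ingredients, I would write the $k$-point correlator as $X'B(t_k)$ or $B(t_k)X'$ according to $b_k$, expand $X'=\sum_{\bm{\mu}'}(-1)^{\bm{\mu}'\cdot\bm{\pi}_E'}\mathcal{B}^{\bm{\mu}'}(t_1,\dots,t_{k-1})$ by the inductive hypothesis, and push $B(t_k)$ through the recursion; since $\bm{\pi}_E=(\bm{\pi}_E',b_k)$, $\bm{\mu}=(\bm{\mu}',\mu_{k-1})$, and the two sign factors multiply to $(-1)^{\bm{\mu}'\cdot\bm{\pi}_E'+\mu_{k-1}b_k}=(-1)^{\bm{\mu}\cdot\bm{\pi}_E}$, this is exactly the claimed identity.

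I expect the main obstacle to be the bookkeeping of the positional lemma rather than the algebra: one must check that the two-fold $t_1$ degeneracy (its sitting at the $\bar{H}/\tilde{H}$ junction) is a genuine redundancy, so that a length-$(k-1)$ label $\bm{\pi}_E$ enumerates the restricted correlators without over- or undercounting; that the reduction $\pi\mapsto\pi'$ is compatible with the truncation $\bm{\pi}_E\mapsto\bm{\pi}_E'$; and that the outermost bracket argument in $\mathcal{B}^{\vec{\mu}}$ is always the same time as the extreme factor being peeled. Once that combinatorial scaffold is pinned down, the sign tracking is a one-liner, and the span (``completeness'') claim follows automatically, since the transition matrix $(-1)^{\bm{\mu}\cdot\bm{\pi}_E}$ on the $2^{k-1}$-dimensional space of restricted correlators is, up to a scalar, a Hadamard matrix and hence invertible.
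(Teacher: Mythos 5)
Your proposal is correct and is essentially the paper's own proof: both run an induction whose key step exploits that the smallest time can sit only at the leftmost or rightmost end of the operator string (according to its $\bar H$/$\tilde H$ allocation, i.e.\ the last bit of $\bm{\pi}_E$), and both use the same two-term bracket recursion $X\,B = \sum_{\mu}\mathcal{B}^{\mu}$, $B\,X=\sum_{\mu}(-1)^{\mu}\mathcal{B}^{\mu}$ to propagate the sign $(-1)^{\bm{\mu}\cdot\bm{\pi}_E}$; you merely phrase the step as peeling $t_k$ off rather than appending $t_{k+1}$. The closing remark that the $2^{k-1}\times 2^{k-1}$ sign matrix is Hadamard (hence invertible) is a nice bonus the paper does not state explicitly.
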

\begin{proof} 
We prove it by induction. The basic case is $k=2$. It is easy to check that 
\begin{equation}
\begin{aligned}
    {B(t_1)B(t_2)}|_{\bm{\pi}_E=(0)}=  \mathcal{B}^{{0}}({t})+ \mathcal{B}^{{1}}({t}) = \sum_{\mu=0,1}(-1)^{\mu\cdot 0} \mathcal{B}^{{\mu}}({t}), \\
     {B(t_2)B(t_1)}|_{\bm{\pi}_E=(1)}  =  \mathcal{B}^{{0}}({t})- \mathcal{B}^{{1}}({t})= \sum_{\mu=0,1}(-1)^{\mu\cdot 1} \mathcal{B}^{{\mu}}({t}).\\    
\end{aligned}
\end{equation}
We assume our inductive hypothesis, that is for $k$, one has 
\begin{equation}
    {B(t_{\pi(1)})\cdots B(t_{\pi(k)})}|_{\pi\sim \bm{\pi}_E} = \sum_{\bm{\mu}} (-1)^{\bm{\mu}\cdot\bm{\pi}_E} \mathcal{B}^{\bm{\mu}}(\vec{t}),
\label{eq:induction_B}
\end{equation}
as true. 
For $k+1$, as $t_{k+1}$ is the smallest time point, there are only two types of viable configurations, namely $(B(t_{\pi(1)})\cdots  B(t_{\pi(k)}))B(t_{k+1})$ and $B(t_{k+1})(B(t_{\pi(1)})\cdots  B(t_{\pi(k)}))$, which correspond to $\bm{\pi}_E\rightarrow\bm{\pi}_E\oplus0$ and $\bm{\pi}_E\rightarrow\bm{\pi}_E\oplus1$ in the extended structure, respectively, where $\pi$ and $\bm{\pi}_E$ are consistent with Eq.~\ref{eq:induction_B}. We focus on the first case, and we have
\begin{equation}
    \begin{aligned}
     &(B(t_{\pi(1)})\cdots  B(t_{\pi(k)}))B(t_{k+1})|_{\pi\sim \bm{\pi}_E}  =  \sum_{\bm{\mu}} (-1)^{\bm{\mu}\cdot\bm{\pi}_E} \mathcal{B}^{\bm{\mu}}(\vec{t})B(t_{k+1})  \\
     &= \sum_{\bm{\mu}} (-1)^{\bm{\mu}\cdot\bm{\pi}_E} \big[ \mathcal{B}^{\bm{\mu}}(\vec{t}) B(t_{k+1}) + B(t_{k+1}) \mathcal{B}^{\bm{\mu}}(\vec{t}) +  (-1)^0[\mathcal{B}^{\bm{\mu}}(\vec{t}) B(t_{k+1}) - B(t_{k+1})\mathcal{B}^{\bm{\mu}}(\vec{t})] \big]/2 \\
     &= \sum_{\bm{\mu}} (-1)^{\bm{\mu}\cdot\bm{\pi}_E} [\mathcal{B}^{\bm{\mu}\oplus0}(\vec{t}) + (-1)^0 \mathcal{B}^{\bm{\mu}\oplus1}(\vec{t})] \\
     & = \sum_{\bm{\mu}} (-1)^{\bm{\mu}\cdot\bm{\pi}_E + 0\cdot0} \mathcal{B}^{\bm{\mu}\oplus0}(\vec{t}) +  \sum_{\bm{\mu}} (-1)^{\bm{\mu}\cdot\bm{\pi}_E + 1\cdot0}\mathcal{B}^{\bm{\mu}\oplus1}(\vec{t}) \\
     &= \sum_{\bm{\mu}\oplus0} (-1)^{(\bm{\mu}\oplus0)\cdot(\bm{\pi}_E \oplus0)} \mathcal{B}^{\bm{\mu}\oplus0}(\vec{t}) +  \sum_{\bm{\mu}\oplus1} (-1)^{(\bm{\mu}\oplus1)\cdot (\bm{\pi}_E \oplus0)}\mathcal{B}^{\bm{\mu}\oplus1}(\vec{t}) \\
     & = \sum_{\bm{\mu}\oplus\mu}(-1)^{(\bm{\mu}\oplus\mu)\cdot(\bm{\pi}_E\oplus0)} \mathcal{B}^{\bm{\mu}\oplus\mu}(\vec{t}),
    \end{aligned}
\end{equation}
where $\bm{\mu}\oplus\mu =(\mu_1, \cdots,\mu_k)$,
which validates our hypothesis. The second case can be resolved similarly. Thus the statement holds for all $k\geq2$. 
\end{proof}

The sign function $\bar{f}$ in the main text and in this section is detailed as $\bar{f}^{(k)}_{\pi}(\bm{\mu}) = \bm{\mu}\cdot \bm{\pi}_E$. We thus finished all the technical details of  Eq.~\ref{eq:dyson} in the main text.

\section{Control-adapted frame representation of Dyson series}
\label{app:ugly_D1_D4}
In this Appendix,  we expand and obtain the full form of single-qubit Dyson series in the CA picture for single-qubit. We notice that the frame $\mathcal{C}$ used hereafter in this single section is not limited to digital $\mathcal{C}_{\text{W}}$, but also allows non-overlapping yet unequal-distance or overlapping functions as frames. 

The first two orders' expressions match the results in ~\cite{Teerawat_PRXQ_Frame}:
\begin{equation}
    \begin{aligned}
      \mathcal{D}^{(1)}_O(T) &= (-i)\Big(\int^T_0ds\langle\bar{H}(s)\rangle+\int^T_0ds\langle\tilde{H}(s)\rangle \Big)\\
     &= (-i) \sum_u\sum_n \sigma_u F^{(1,+)}_u(n)\bar{{S}}(n)
    \end{aligned}
\end{equation}

\begin{equation}
\begin{aligned}
\Tr[\mathcal{D}^{(1)}_O(T)\sigma_0\sigma_{\gamma}]&=0  \\
     \Tr[\mathcal{D}^{(1)}_O(T)\sigma_{\gamma}\sigma_{\gamma}]&=0 \\
     \Tr[\mathcal{D}^{(1)}_O(T)\sigma_r\sigma_{\gamma}]&= 4\sum_{\beta}\sum_n \epsilon(\beta,r,\gamma)F^{(1)}_{\beta}(n)\bar{{S}}(n)
     \end{aligned}
\end{equation}

\begin{equation}
    \begin{aligned}
     \frac{ \mathcal{D}^{(2)}_O(T)}{2!} &= (-i)^2 \int^T_0 d_>\vec{t}_{[2]} \Big[ \langle \tilde{H}(t_1)\tilde{H}(t_2) \rangle+ \langle\bar{H}(t_1)\tilde{H}(t_2) \rangle+\langle\bar{H}(t_2)\tilde{H}(t_1) \rangle+\langle \bar{H}(t_2)\bar{H}(t_1)\rangle  \Big] \\
      &=(-i)^2 \Big[\frac{\sigma_0}{2} \sum_{\mu}\sum_{\vec{n}}\sum_u\bar{{S}}^{(\mu)}(n_1,n_2)F^{(1,0)}_u(n_1)F^{(1,\mu)}_u(n_2) \\ 
&\quad\quad\quad\quad+\sum_{\mu}\sum_{\vec{n}}\sum_{u\neq v} \frac{\sigma_u\sigma_v}{2}\bar{{S}}^{(\mu)}(n_1,n_2) F^{(1,0)}_u(n_1)F^{(1,\bar{\mu})}_v(n_2) \Big]
    \end{aligned}
\end{equation}

\begin{equation}
    \begin{aligned}
      \Tr[\frac{ \mathcal{D}^{(2)}_O(T)}{2!}\sigma_0\sigma_{\gamma}] &= -4i \sum_{\vec{n}} \sum_{u<v\neq \gamma} \epsilon(u,v,\gamma) F^{(1)}_u(n_1)F^{(1)}_v(n_2) [\bar{S}^{(1)}(n_1,n_2)-\bar{{S}}^{(1)}(n_2,n_1)], \\
      \Tr[\frac{ \mathcal{D}^{(2)}_O(T)}{2!}\sigma_{\gamma}\sigma_{\gamma}] &= -4 \sum_{\vec{n}} \sum_{u\neq\gamma}F^{(1)}_u(n_1) F^{(1)}_u(n_2) \bar{S}^{(0)}(n_1,n_2), \\
      \Tr[\frac{ \mathcal{D}^{(2)}_O(T)}{2!}\sigma_r\sigma_{\gamma}] &= 4 \sum_{\vec{n}} F^{(1)}_r(n_1) F^{(1)}_{\gamma}(n_2) \bar{{S}}^{(0)}(n_1,n_2).
    \end{aligned}
\end{equation}

Starting  from the third order, expressions are complicated, and it is helpful to introduce a shorthand notation for frame-based ``CA filter strings'' as
$$ \bm{F}^{(1;\vec{p})}_{\vec{u}} \equiv \prod_{j=1}^k F^{(1,p_j)}_{u_j}(n_j),$$
$$ F^{(1,\pm)}_{u}(n) =  F^{(1)}_{u}(n)\pm \sum_c f^c_u F^{(1)}_{c}(n),$$
\begin{equation}
    \begin{aligned}
      \textcolor{black}{\frac{\mathcal{D}^{(3)}_O(T)}{3!}}&= (-i)^3 \int^T_0 d_>\vec{t}_{[3]} \Big[  \langle\tilde{H}(t_1)\tilde{H}(t_2)\tilde{H}(t_3) \rangle+ \langle \bar{H}(t_2)\bar{H}(t_1)\tilde{H}(t_3)\rangle+\langle \bar{H}(t_3)\bar{H}(t_1)\tilde{H}(t_2)\rangle+\langle \bar{H}(t_3)\bar{H}(t_2)\tilde{H}(t_1) \rangle\\
        &\quad \quad\quad \quad\quad \quad\quad \quad+ \langle\bar{H}(t_2)\tilde{H}(t_1)\tilde{H}(t_3) \rangle+\langle\bar{H}(t_3)\tilde{H}(t_1)\tilde{H}(t_2) \rangle+\langle \bar{H}(t_1)\tilde{H}(t_2)\tilde{H}(t_3)\rangle+\langle \bar{H}(t_3)\bar{H}(t_2)\bar{H}(t_1)\rangle \Big] \\
        & = \frac{(-i)^3}{4}\Big\{ \sum_{\vec{n}}\sum_{\mu,\nu}\sum_{u\neq v\neq w} \textcolor{black}{i \epsilon_{u,v,w}}\sigma_0 F^{(1,0)}_u(n_1)F^{(1,\bar{\nu})}_v(n_2)F^{(1,\mu)}_w(n_3) \bar{{S}}^{(\nu,\mu)}(n_1,n_2,n_3) \\
        &\quad\quad\quad\quad +\sum_{\vec{n}}\sum_{\mu,\nu} \sum_{u\neq v} \sigma_v [ F^{(1,0)}_u(n_1)F^{(1,\nu)}_u(n_2)F^{(1,\mu)}_v(n_3) -F^{(1,0)}_u(n_1)F^{(1,\bar{\nu})}_v(n_2)F^{(1,\bar{\mu})}_u(n_3) \\    &\quad\quad\quad\quad\quad\quad\quad\quad\quad\quad   + F^{(1,0)}_v(n_1)F^{(1,\bar{\nu})}_u(n_2)F^{(1,\bar{\mu})}_u(n_3)+F^{(1,0)}_v(n_1)F^{(1,{\nu})}_v(n_2)F^{(1,{\mu})}_v(n_3)]\bar{{S}}^{(\nu,\mu)}(n_1,n_2,n_3)\Big\} \\
        &= \frac{(-i)^3}{4}\Big\{\sum_{\vec{n}}\sum_{\mu,\nu}\sum_{u\neq v\neq w} \textcolor{black}{i \epsilon_{u,v,w}}\sigma_0 \bm{F}^{(1;0,\bar{\nu},\mu)}_{u,v,w} \bar{{S}}^{(\nu,\mu)}(n_1,n_2,n_3)+ \\
        &\quad \quad \quad \quad \sum_{\vec{n}}\sum_{\mu,\nu} \sum_{u\neq v} \sigma_v [ \bm{F}^{(1;0,\nu,\mu)}_{u,u,v}-\bm{F}^{(1;0,\bar{\nu},\bar{\mu})}_{u,v,u}+ \bm{F}^{(1;0,\bar{\nu},\bar{\mu})}_{v,u,u}+\bm{F}^{(1;0,{\nu},{\mu})}_{v,v,v}    
        ]\bar{{S}}^{(\nu,\mu)}(n_1,n_2,n_3) \Big\}
    \end{aligned}
\end{equation}

\begin{equation}
    \begin{aligned}
      \Tr&[\textcolor{black}{\frac{\mathcal{D}^{(3)}_O(T)}{3!}}\sigma_0\sigma_{\gamma}] \\
       &=+4i\sum\limits_{\vec{n}}\sum\limits_{u\neq \gamma}  F^{(1)}_u(n_1)F^{(1)}_u(n_2)F^{(1)}_{\gamma}(n_3) \Big[\bar{{S}}^{\textcolor{black}{(0,1)}}(n_1,n_2,n_3)-\bar{S}^{\textcolor{black}{(0,1)}}(n_1,n_3,n_2)\Big], \\
    \Tr&[\textcolor{black}{\frac{\mathcal{D}^{(3)}_O(T)}{3!}}\sigma_{\gamma}\sigma_{\gamma}] \\
      &= \textcolor{black}{-4 \sum\limits_{\vec{n}}\sum\limits_{u\neq v\neq\gamma}\epsilon_{u\gamma v}F^{(1)}_{u}(n_1)F^{(1)}_{\gamma}(n_2)F^{(1)}_{v}(n_3)\Big[\bar{{S}}^{(0,0)}(n_1,n_2,n_3)-\bar{{S}}^{(1,1)}(n_1,n_2,n_3)\Big];} \\
            &\Tr[\textcolor{black}{\frac{\mathcal{D}^{(3)}_O(T)}{3!}}\sigma_{r}\sigma_{\gamma}] =\\
      &-4 \epsilon (\beta,r,\gamma ) \sum\limits_{\vec{n}} \Big[
      F^{(1)}_r(n_1)F^{(1)}_r(n_2)F^{(1)}_{\beta}(n_3)\bar{{S}}^{(0,0)}(n_1,n_2,n_3) \\
      &\quad \quad\quad \quad\quad \quad   + F^{(1)}_{\beta}(n_1)F^{(1)}_{\beta}(n_2)F^{(1)}_{\beta}(n_3) \bar{{S}}^{(0,0)}(n_1,n_2,n_3)  +F^{(1)}_{\beta}(n_1)F^{(1)}_{\gamma}(n_2)F^{(1)}_{\gamma}(n_3)\bar{{S}}^{(0,0)}(n_1,n_2,n_3) \\
      &\quad \quad \quad \quad\quad \quad         +F^{(1)}_{\beta}(n_1)F^{(1)}_{r}(n_2)F^{(1)}_{r}(n_3)\bar{{S}}^{(1,1)}(n_1,n_2,n_3)
       - F^{(1)}_{r}(n_1)F^{(1)}_{\beta}(n_2)F^{(1)}_{r}(n_3)\bar{{S}}^{(1,1)}(n_1,n_2,n_3)\Big]
 \end{aligned}
\end{equation}

\begin{equation}
\begin{aligned}
    \mathcal{D}_O^{(4)}(t)/4! &=\frac{1}{8} \sum\limits_{\vec{n}} \sum\limits_{\mu,\nu,\xi} \bar{{S}}^{(\xi,\nu,\mu)}(n_1,n_2,n_3,n_4)\Big\{ \Big[\sigma_0\sum\limits_{v \neq w}\big( \bm{F}^{(1;0,\xi,\nu,\mu)}_{w,w,v,v}+ \bm{F}^{(1;0,\xi,\nu,\mu)}_{w,w,w,w} + \bm{F}^{(1;0,\bar{\xi},\bar{\nu},\mu)}_{w,v,v,w} - \bm{F}^{(1;0,\bar{\xi},\bar{\nu},\mu)}_{w,v,w,v})\Big] \\
   + &\sum\limits_{u\neq v \neq w } \sigma_v\sigma_w[-\bm{F}^{(1;0,\bar{\xi},\bar{\nu},\bar{\mu})}_{w,u,u,v} +\bm{F}^{(1;0,\bar{\xi},\nu,\mu)}_{w,u,v,u}
 -\bm{F}^{(1;0,\bar{\xi},\nu,\mu)}_{w,v,u,u}(n_1)+\bm{F}^{(1;0,\bar{\xi},\bar{\nu},\bar{\mu})}_{u,w,u,v}  -\bm{F}^{(1;0,\bar{\xi},\nu,\mu)}_{u,w,v,u}\\
 &\quad\quad\quad\quad\quad -\bm{F}^{(1;0,\xi,\nu,\bar{\mu})}_{u,u,w,v} - \bm{F}^{(;,0,\bar{\xi},\bar{\nu},\bar{\mu})}_{w,v,v,v} +\bm{F}^{(1;0,\bar{\xi},\bar{\nu},\bar{\mu})}_{v,w,v,v}+\bm{F}^{(1;0,\xi,\nu,\bar{\mu})}_{v,v,v,w} -\bm{F}^{(1;0,\xi,\nu,\bar{\mu})}_{v,v,w,v}]
    \Big\}  \\
\end{aligned}
\label{eq:dyson4}
\end{equation}

\begin{equation}
 \begin{aligned}
   \Tr&[\mathcal{D}^{(4)}_O(T)\sigma_{0}\sigma_{\gamma}] \\
   &= \textcolor{black}{4}i \sum\limits_{\vec{n}}\sum\limits_{v\neq w}\epsilon(v,w,\gamma) \Big( F^{(1)}_v(n_1)F^{(1)}_{\gamma}(n_2)[F^{(1)}_{\gamma}(n_3)F^{(1)}_{w}(n_4)- F^{(1)}_{\gamma}(n_4)F^{(1)}_{w}(n_3)] \bar{S}^{(1,0,0)}(n_1,n_2,n_3,n_4)  \\
   &\quad\quad + F^{(1)}_v(n_1)F^{(1)}_w(n_2) F^{(1)}_{\gamma}(n_3) F^{(1)}_{\gamma}(n_4) \bar{S}^{(1,1,1)}(n_1,n_2,n_3,n_4) \\
   &\quad \quad + F^{(1)}_{v}(n_1) F^{(1)}_{v}(n_2)[F^{(1)}_{v}(n_3)F^{(1)}_{w}(n_4)-F^{(1)}_{w}(n_4)F^{(1)}_{v}(n_3)]\bar{S}^{(1,0,0)}(n_1,n_2,n_3,n_4)\\
  & \quad\quad + [F^{(1)}_{v}(n_1)F^{(1)}_{w}(n_2)-F^{(1)}_{w}(n_1)F^{(1)}_{v}(n_2)]F^{(1)}_{v}(n_3)F^{(1)}_{v}(n_4)]\bar{S}^{(1,1,1)}(n_1,n_2,n_3,n_4)
   \Big)
 \end{aligned}    
\end{equation}

\begin{equation}
 \begin{aligned}
   \Tr&[\mathcal{D}^{(4)}_O(T)\sigma_{\gamma}\sigma_{\gamma}] \\
   =& \textcolor{black}{4}\sum\limits_{\vec{n}}\sum\limits_{\{v, w\}\neq \gamma} F^{(1)}_{v}(n_1)F^{(1)}_{v}(n_2)F^{(1)}_{w}(n_3) F^{(1)}_{w}(n_4) \bar{S}^{(0,0,0)}(n_1,n_2,n_3,n_4)  \\
   & +\textcolor{black}{4}\sum\limits_{\vec{n}}\sum\limits_{w\neq\gamma} F^{(1)}_{w}(n_1)F^{(1)}_{\gamma}(n_2)F^{(1)}_{\gamma}(n_3)F^{(1)}_{w}(n_4) \bar{S}^{(0,0,0)}(n_1,n_2,n_3,n_4) \\
   &+\textcolor{black}{4}\sum\limits_{\vec{n}}\sum\limits_{v\neq w\neq \gamma} F^{(1)}_{w}(n_1)F^{(1)}_{v}(n_2) [F^{(1)}_{v}(n_3)F^{(1)}_{w}(n_4)-F^{(1)}_{w}(n_3)F^{(1)}_{v}(n_4)]\bar{S}^{(0,1,1)}(n_1,n_2,n_3,n_4) \\
   & + \textcolor{black}{4}\sum\limits_{\vec{n}}\sum\limits_{w\neq\gamma} F^{(1)}_{w}(n_1)[F^{(1)}_{w}(n_2)F^{(1)}_{\gamma}(n_3)-F^{(1)}_{\gamma}(n_2)F^{(1)}_{w}(n_3)]F^{(1)}_{\gamma}(n_4) \bar{S}^{(1,1,0)}(n_1,n_2,n_3,n_4)
 \end{aligned}    
\end{equation}

\begin{equation}
 \begin{aligned}
   &\Tr[\mathcal{D}^{(4)}_O(T)\sigma_{r}\sigma_{\gamma}] \\
   =& \textcolor{black}{4}\sum\limits_{\vec{n}}
   \Big\{\bar{S}^{(1,1,0)}_{\vec{n}}\resizebox{0.8\linewidth}{!}{$  F_r^{(1)}({n_4}) \left[F_r^{(1)}({n_1}) \left(F_r^{(1)}({n_2}) F_{\gamma }^{(1)}({n_3})-F_{\gamma }^{(1)}({n_2}) F_r^{(1)}({n_3})\right)+F_{\beta }^{(1)}({n_1}) \left(F_{\beta }^{(1)}({n_2}) F_{\gamma }^{(1)}({n_3})-F_{\gamma }^{(1)}({n_2}) F_{\beta }^{(1)}({n_3})\right)\right]$}\\
   -&\bar{S}^{(0,1,1)}_{\vec{n}} \resizebox{0.6\linewidth}{!}{$\left[F_{\beta }^{(1)}({n_1}) F_r^{(1)}({n_2})-F_r^{(1)}({n_1}) F_{\beta }^{(1)}({n_2})\right] \left(F_{\gamma }^{(1)}({n_3}) F_{\beta }^{(1)}({n_4})-F_{\beta }^{(1)}({n_3}) F_{\gamma }^{(1)}({n_4})\right)$}\\
   +& \bar{S}^{(0,0,0)}_{\vec{n}}\resizebox{0.9\linewidth}{!}{ $ \left[F_r^{(1)}({n_3}) \left(F_{\beta }^{(1)}({n_1}) \left(F_{\gamma }^{(1)}({n_2}) F_{\beta }^{(1)}({n_4})-F_{\beta }^{(1)}({n_2}) F_{\gamma }^{(1)}({n_4})\right)-F_r^{(1)}({n_1}) F_r^{(1)}({n_2}) F_{\gamma }^{(1)}({n_4})\right)-F_r^{(1)}({n_1}) F_{\gamma }^{(1)}({n_2}) \left(F_{\beta }^{(1)}({n_3}) F_{\beta }^{(1)}({n_4})+F_{\gamma }^{(1)}({n_3}) F_{\gamma }^{(1)}({n_4})\right)\right]$}
   \Big\},
 \end{aligned}    
\end{equation}
where set $\{\gamma,r,\beta\}=\{x,y,z\}$.

\section{Principles of  QNS protocols design}
\label{app:qns_ctrl_design}
We delineate the principles underlying the design of the CA QNS protocols employed in our study and discuss potential challenges when scaling to larger systems.

Despite the ``deconvolution'' between control and noise in the CA framework, crafting an appropriate control $F^{(1)} $ to establish a solvable linear system of CA spectra is far from trivial. At non-Gaussian orders, $F^{(1)} $ (it represents $F^{(1)}_{u}(n)~\forall u,\forall n\leq L$, for simplicity) forms a higher-order, non-linear algebraic structure in $Q$ dynamics, rendering the simple analytical designs, such as those used in Gaussian cases \cite{Teerawat_PRXQ_Frame}, ineffective. It becomes imperative to conduct a numerical search for different  $F^{(1)} $ configurations to adequately shape the linear system of the CA spectra. To this end, we prepare $m$ sets of  $F^{(1)} $ sequences (each sequence has length $L$), where $m\gg N[\mathcal{C}_{\text{W}}]$, which equals the number of independent/learnable CA spectra. We then numerically search the first $N_{\mathcal{C}_{\text{W}}}$ independent subsets out of $m$ to form the $N_{\mathcal{C}_{\text{W}}}$-by-$N_{\mathcal{C}_{\text{W}}}$ matrix that defines our CA QNS protocol.

It is noteworthy that for higher orders (\( k>4 \)), the resulting matrices from these linear systems tend to be poorly conditioned, a problem that exacerbates with increasing $k$. This poses practical challenges, making the QNS protocol prone to input errors (e.g., measurement errors). To address this, we implement a ``random shuffle'' strategy aimed at optimizing the condition number. In this strategy, the originally prepared $m$ sets are replicated $M\gg1$ times. For each replica, the order of different sequences within the $ m$ sets is randomly shuffled. Thus, the first $N_{\mathcal{C}_{\text{W}}}$ independent sets of these $M$ copies obtained by the search solver differ from each other. We select the independent subset with the optimal condition number as the protocol of CA QNS.

While this method is effective in our specific cases, it reveals significant limitations as $L$, $Q$, and $K$ increase, necessitating a much larger linear system to learn more spectra. The corresponding $M$ and $m$ should be larger for sufficient shuffle.  Despite the availability of other condition number optimization techniques, a critical caveat in the application of QNS—and indeed, all non-perturbative QNS protocols—emerges:  As the number of CA spectra $N_{\mathcal{C}_{\text{W}}}$ grows exponentially with the complexity of control, no optimization can circumvent the fact that the  QNS control searched is essential a random control. The qubit random control ultimately hampers the effectiveness of QNS due to the concentration of expectation values, because random unitary control forms unitary 1-design Haar distribution~\cite{McClean_BP_NatCom_2018}, likely representing a fundamental bottleneck for most, if not all, non-Gaussian QNS protocols.

\section{Proof of swap symmetry in cross-spectra}
\label{app:swap_symmetry}
Here we provide some algebraic details of the swap symmetry of classical (and quantum) digital cross-spectra, which is discussed in Sec.~\ref{sec:Symmetry_analysis}. We formulate the claim of classical cross-spectra  in a Proposition as follows, and the proof also covers the quantum case.

\begin{prop}
For a classical digital spectrum containing a 2-streak in  the $k$-th order $n$-string (i.e, $\exists~n_{1}\geq n_i=n_{i+1} \geq n_{k}, ~1\leq i\leq k-1 $),  swapping the index-corresponding $q$ values ($q_i\leftrightarrow q_{i+1}$) gives another admissible spectrum for $q_i\neq q_{i+1}$ and $q\in Q$. These two spectra are filtered by equivalent control tensors, thus allowing us to define a \textit{swap symmetry} ($q_i\leftrightarrow q_{i+1}$) of spectra. It is impossible, and there is no need, to differentiate these two spectra.
\end{prop}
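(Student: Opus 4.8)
The plan is to work entirely on the control-tensor side of the CA Dyson series, Eq.~\ref{eq:dyson}, and show that the two candidate spectra are convolved against the \emph{same} control tensor, so they can only enter $\mathbb{E}[O(T)]$ through their sum. Fix a classical ($\vec{\mu}=\vec{0}$) $k$-th order spectrum $\bar{S}^{(\vec{0})}_{\vec{q}}(\vec{n})$ with a $2$-streak $n_i=n_{i+1}\equiv\hat{n}$ and $q_i\neq q_{i+1}$, and let $\vec{q}\,'$ be $\vec{q}$ with the entries $q_i$ and $q_{i+1}$ interchanged (a longer streak follows by iterating on adjacent pairs). Using the window-frame form of the Dyson term from the proofs of Propositions~\ref{prop-1} and~\ref{prop-2}, the relevant control tensor is a sum over splittings $l$ and admissible permutations $\pi\in\Pi_{l;k}$ of a $\bar{h}$-string times a $\tilde{h}$-string, with $\tilde{h}_q(n)=U_0^{\dagger}(n\tau)\,\sigma_z^{[q]}\,U_0(n\tau)$ and $\bar{h}_q(n)=-\tilde{O}^{-1}(T)\,\tilde{h}_q(n)\,\tilde{O}(T)$, and with the overall sign $(-1)^{\bar{f}^{(k)}_{\pi}(\vec{0})}\equiv1$. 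The goal is $\mathbf{T}^{(k)}_{\vec{q};\vec{0}}(\vec{n})=\mathbf{T}^{(k)}_{\vec{q}\,';\vec{0}}(\vec{n})$; once this holds the claim follows, since in $\mathbb{E}[O(T)]$ (for all $O$, $\rho_Q$, and all control parameters $P$ within fixed $\mathcal{C}_{\text{W}}$) the two spectra appear only through $\bar{S}^{(\vec{0})}_{\vec{q}}(\vec{n})+\bar{S}^{(\vec{0})}_{\vec{q}\,'}(\vec{n})$.

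First I would split the sum over $(l,\pi)$ according to whether the time-indices $i$ and $i+1$ land in the \emph{same} string or are split between $\bar{h}$ and $\tilde{h}$ (each string being fully determined by the subset $S\subseteq\{1,\dots,k\}$ of indices it contains, its internal order forced by window-monotonicity). In the same-string case the two factors are adjacent because $n_i=n_{i+1}$, and
\begin{equation*}
\tilde{h}_{q_i}(\hat{n})\,\tilde{h}_{q_{i+1}}(\hat{n})=U_0^{\dagger}(\hat{n}\tau)\,\sigma_z^{[q_i]}\sigma_z^{[q_{i+1}]}\,U_0(\hat{n}\tau)=U_0^{\dagger}(\hat{n}\tau)\,\sigma_z^{[q_{i+1}]}\sigma_z^{[q_i]}\,U_0(\hat{n}\tau)=\tilde{h}_{q_{i+1}}(\hat{n})\,\tilde{h}_{q_i}(\hat{n}),
\end{equation*}
using $[\sigma_z^{[q_i]},\sigma_z^{[q_{i+1}]}]=0$ for $q_i\neq q_{i+1}$; the same holds for $\bar{h}$ since the flanking $\tilde{O}^{\pm1}(T)$ and the two $(-1)$'s cancel pairwise. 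Hence every same-string term is \emph{individually} invariant under $q_i\leftrightarrow q_{i+1}$.

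For the split terms I would use the involution $S\mapsto S\,\triangle\,\{i,i+1\}$, which exchanges the string-membership of $i$ and $i+1$; it is fixed-point-free and bijective on the split-type configurations, preserves $l=|S|$, and—since $n_i=n_{i+1}$—maps admissible strings to admissible strings with the window value at every position unchanged. One then checks that the $q_i\leftrightarrow q_{i+1}$ relabelling of the $S$-term coincides with the $S'$-term carrying the original labels $\vec{q}$: the only $\bar{h}$-factor affected is the one at window $\hat{n}$, whose label changes exactly as the membership swap prescribes, and likewise for $\tilde{h}$. Summing over split-type $S$ therefore equals the sum over split-type $S'$, and combining with the previous paragraph yields $\mathbf{T}^{(k)}_{\vec{q};\vec{0}}(\vec{n})=\mathbf{T}^{(k)}_{\vec{q}\,';\vec{0}}(\vec{n})$, so the two spectra are filtered by equivalent control tensors and cannot be resolved, while their bound sum suffices for the dynamics.

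For the quantum case ($\vec{\mu}\neq\vec{0}$) the same two-step decomposition applies, but one must also track the nested-bracket sign $(-1)^{\bar{f}^{(k)}_{\pi}(\vec{\mu})}=(-1)^{\vec{\mu}\cdot\bm{\pi}_E}$ and the ordering operator $\widehat{P}_{\vec{q}}$ in $\expect{\mathcal{B}^{\vec{\mu}}_{\vec{q}}}$: under $S\mapsto S\,\triangle\,\{i,i+1\}$ the bit-string $\bm{\pi}_E$ flips in the slots adjacent to the streak, while $\widehat{P}_{\vec{q}}$ re-sorts the two swapped bath operators back into canonical order; one verifies that these two effects cancel, so that the product $(-1)^{\bar{f}}\,\mathbf{T}^{(k)}\,\expect{\mathcal{B}^{\vec{\mu}}_{\vec{q}}}$ is mapped to an admissible term of the swapped-label expansion with no residual relative sign. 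I expect this sign-and-ordering bookkeeping in the split-term case to be the main obstacle; the remainder is the commutator identity $[\sigma_z^{[q_i]},\sigma_z^{[q_{i+1}]}]=0$ together with an elementary bijection on permutations.
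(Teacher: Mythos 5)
Your argument for the classical case is correct and follows essentially the same route as the paper's proof in Appendix~\ref{app:swap_symmetry}: the paper likewise splits the sum into the configurations where the two streak indices sit in the same bracket (its cases (1) and (3)), where $[\sigma_z^{[q_i]},\sigma_z^{[q_{i+1}]}]=0$ makes each term individually swap-invariant, and the configuration where they are split between the $\bar h$- and $\tilde h$-strings (its case (2)), where the two membership assignments $\bm{\pi}_Q|_{(i,i+1)}=(1,0)$ and $(0,1)$ are paired off --- precisely your involution $S\mapsto S\,\triangle\,\{i,i+1\}$ --- and the pairing is harmless because $(-1)^{\bar f^{(k)}_{\pi}(\vec{0})}\equiv 1$.

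Your closing paragraph on the quantum case, however, asserts something false. The operator $\widehat{P}_{\vec q}$ is a pure index-reordering bookkeeping device and carries no sign, so there is nothing to compensate the change of $\bm{\pi}_E$ under the membership swap: the split-bracket terms of the $\vec q$- and $\vec q\,'$-expansions differ by the relative sign between $(-1)^{\bm{\mu}\cdot\bm{\pi}_E}$ evaluated at $\bm{\pi}_Q|_{(i,i+1)}=(1,0)$ and at $(0,1)$, and this is not generically $+1$. The paper works this out explicitly and finds that swap symmetry survives for quantum spectra only under the local conditions $\bm{\mu}|_{(i-1)}=\bm{\mu}|_{(i)}$ for $k-1\geq i>1$, or $\bm{\mu}|_{(1)}=0$ when $i=1$; for generic $\vec{\mu}\neq\vec 0$ the split-bracket contributions do not pair up and the two spectra are filtered by genuinely different control tensors. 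This does not affect the validity of your proof of the stated (classical) proposition, but the blanket claim of ``no residual relative sign'' for the quantum case should be withdrawn and replaced by these explicit constraints on $\vec{\mu}$.
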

\begin{proof}
 Recall the CA Dyson term
\begin{equation}
\begin{aligned}
 \mathcal{D}^{(k)}_O(T)/k! =(-i)^k \sum_{\vec{n},\vec{q}}\Bigg\{ \sum_{\vec{\mu}_{[k-1]}}  \sum_{l,\pi} (-1)^{\bar{f}^{(k)}_{\pi}(\vec{\mu})} \Big[  \prod_{j=1}^l \sum\limits_{u_j} F^{(1)}_{q_j,u_j} (n_{\pi(j)})\bar{\Lambda}_{u_j} \Big] \Big[    \prod_{j'=l+1}^{k} \sum\limits_{u_j'} F^{(1)}_{q_{j'},u_{j'}}(n_{\pi(j')}) \Lambda_{u_{j'}} \Big] \Bigg\}     \bar{S}^{\vec{\mu}_{[k-1]}}_{ \vec{q}_{[k]}} (\vec{n}); 
       \end{aligned}
\end{equation}
which includes two brackets $[\cdots]$ in the control tensor. For a 2-streak in the $n$-string, $n_i=n_{i+1} =\tilde{{n}}$, we know that both $q_i=q_{\circ}\neq q_{i+1}= q_{\bullet}$ and $q_i=q_{\bullet}\neq q_{i+1}= q_{\circ}$ are  admissible situations in Dyson series, where $q_{\circ}, q_{\bullet} \in Q$. These two situations are dual and are transformed to each other under swap $q_{\circ}, q_{\bullet}$.
We focus on situation where $n_i=n_{i+1}= \tilde{n}$ and $q_i=q_{\circ}\neq q_{i+1}= q_{\bullet}$. They appear in the Dyson series as (1) they both reside in the left bracket, (2) they are in two brackets separately, and (3) they both reside in the right bracket.  Notice that $k>1$ Dyson term always includes these three cases under all possible permutations $\{\pi\}$, and the validation of swap symmetry thrives on satisfying in all cases. 
In case (1), the above Dyson series includes $\tilde{n},q_{\circ,\bullet}$ as
\begin{equation}
\begin{aligned}
 \mathcal{D}^{(k)}_O(T)/k!  \ni (-i)^k \overline{\sum_{\vec{n},\vec{q}}}\Bigg\{&\sum_{\vec{\mu}_{[k-1]}}  \sum_{l} \overline{\sum_{\pi}} (-1)^{\bar{f}^{(k)}_{\pi}(\vec{\mu})} \Big[ (....) (\sum_u F^{(1)}_{q_{\bullet},u} (\tilde{n})\bar{\Lambda}_{u} )(\sum_v F^{(1)}_{q_{\circ},v} (\tilde{n})\bar{\Lambda}_{v} )  (....)\Big] \\
 \times& \Big[  \prod_{j'=l+1}^{k} \sum\limits_{u_j'} F^{(1)}_{q_{j'},u_{j'}}(n_{\pi(j)}) \Lambda^{[q_{j'}]}_{u_{j'}} \Big] \Bigg\}     \bar{S}^{\vec{\mu}_{[k-1]}}_{ q_1 ..q_{\bullet}q_{\circ}..q_k} (n_1,..,\tilde{n},\tilde{n},..,n_k); 
       \end{aligned}
\end{equation}
where the $\bar{\sum}$ represents reduced summation after fixing 2-streak $\tilde{n}$, $q_i =q_{\circ},~q_{i+1}=q_{\bullet}$, and $(....)$ indicates the remainders  that adapt such a fixing. We recall that  $\sum_l\bar{\sum}_{\pi}$ correspond to $\sum_{\pi\sim \bm{\pi}_Q}$, where $\bm{\pi}_Q|_{(i,i+1)}=(1,1)$. Evidently, this is because the specific allocations of $n_i,n_{i+1}$ fix  the 2-digit snippet of $\bm{\pi}_Q$ at corresponding locations. 
Recall that the toggled system Hamiltonian in the CA picture is $\bar{h}_{q}(n)=\sum_u F^{(1)}_{q,u}(n) \bar{\Lambda}_{u} =-\tilde{O}^{-1}(T) U^{\dagger}_0(n\tau) \sigma_z^{[q]} U_0(n\tau) \tilde{O}(T)$, where $U_0(n\tau)$ represents control propagator at $t=n\tau$.  It is obvious that $[\sigma^{[q_{\circ}]}_z, \sigma^{[q_{\bullet}]}_z]\equiv0$ leads to 
$$[\sum_u F^{(1)}_{q_{\bullet},u} (\tilde{n})\bar{\Lambda}_{u} ,\sum_v F^{(1)}_{q_{\circ},v} (\tilde{n})\bar{\Lambda}_{v} ]=0.$$
Therefore if we swap  $q_{\circ}$ and $q_{\bullet}$ by letting $q_{i+1}=q_{\circ}$ and $q_i=q_{\bullet}$, we will have another spectrum $\bar{S}^{\vec{\mu}_{[k-1]}}_{ q_1 ..q_{\circ}q_{\bullet}..q_k} (n_1,..,\tilde{n},\tilde{n},..,n_k)$ in dual situation that is filtered by the same control tensor. 
This analysis applies similarly in case (3) as well. Case (2) is disparate as 
\begin{equation}
\begin{aligned}
 \mathcal{D}^{(k)}_O(T)/k! \ni (-i)^k \overline{\sum_{\vec{n},\vec{q}}}\Bigg\{&\sum_{\vec{\mu}_{[k-1]}}  \sum_{l} \overline{\sum_{\pi}} (-1)^{\bar{f}^{(k)}_{\pi}(\vec{\mu})} \Big[ (....) (\sum_u F^{(1)}_{q_i=q_{\circ},u} (\tilde{n})\bar{\Lambda}_{u} )   (....)\Big] \Big[  (....) (\sum_v F^{(1)}_{q_{i+1}=q_{\bullet},v} (\tilde{n}){\Lambda}_{v} )  (....) \Big] \Bigg\}\\
 &     \bar{S}^{\vec{\mu}_{[k-1]}}_{ q_1 ..q_{\circ}q_{\bullet}..q_k} (n_1,..,\tilde{n},\tilde{n},..,n_k) \\
 + (-i)^k \overline{\sum_{\vec{n},\vec{q}}}\Bigg\{&\sum_{\vec{\mu}_{[k-1]}}  \sum_{l} \overline{\sum_{\pi}} (-1)^{\bar{f}^{(k)}_{\pi}(\vec{\mu})} \Big[ (....) (\sum_u F^{(1)}_{q_{i+1}=q_{\bullet},u} (\tilde{n})\bar{\Lambda}_{u} )   (....)\Big] \Big[  (....) (\sum_v F^{(1)}_{q_{i}=q_{\circ},v} (\tilde{n}){\Lambda}_{v} )  (....) \Big] \Bigg\}\\
 &     \bar{S}^{\vec{\mu}_{[k-1]}}_{ q_1 ..q_{\circ}q_{\bullet}..q_k} (n_1,..,\tilde{n},\tilde{n},..,n_k)
       \end{aligned}
\end{equation}
where the first two lines and the second two lines represent $\bm{\pi}_Q|_{(i,i+1)}=(1,0)$ and $\bm{\pi_Q}|_{(i,i+1)}=(0,1)$, respectively, The two remaining $(k-2)$-digit snippets of $\bm{\pi}_Q$ {are the same because all $(\cdots)$  in these two lines are correspondingly identical}. Our hope is that the entire control tensor is invariant under $q$-swap (by letting $q_i=q_{\bullet},~q_{i+1}=q_{\circ}$), while the different residences of our pairs disable the commutation trick used in other cases.  Imagine we make such a swap, 
then the new first line is generally different from the old third line because of $\bar{f}^{(k)}_{\pi}(\vec{\mu})  \equiv \bm{\mu}\cdot \bm{\pi}_E$, which have different $\bm{\pi}_Q$ snippet at $(i,i+1)$. 
To eliminate the discrepancy, we need 
\begin{equation}
    \begin{aligned}
    \bm{\mu}\cdot \bm{\pi}_E|_{\bm{\pi}_Q|_{(i,i+1)}=(1,0)}\mod 2~\equiv~  \bm{\mu}\cdot \bm{\pi}_E|_{\bm{\pi}_Q|_{(i,i+1)}=(0,1)} \mod 2
    \end{aligned}
\end{equation}
to be true,  which amounts to 
\begin{equation}
    \begin{aligned}
        &\bm{\mu}|_{(i-1)} =\bm{\mu}|_{(i)},\quad k-1\geq i>1, \\
        &\bm{\mu}|_{(i)} =(0),\quad\quad  i=1. \\
    \end{aligned}
\end{equation}
The above are the necessary and sufficient conditions of $(k-1)$-length $\bm{\mu}$ such that control tensor is invariant. The expression only suggests  ``local'' constraints, while the remaining signs on the total $\bm{\mu}$ are left arbitrary. Thus  ``global'' $\bm{\mu} \equiv \bm{0}$ is a sufficient condition that leads to swap symmetry in classical cross-spectra. 
\end{proof}

\begin{figure*}
\includegraphics[width=1\textwidth]{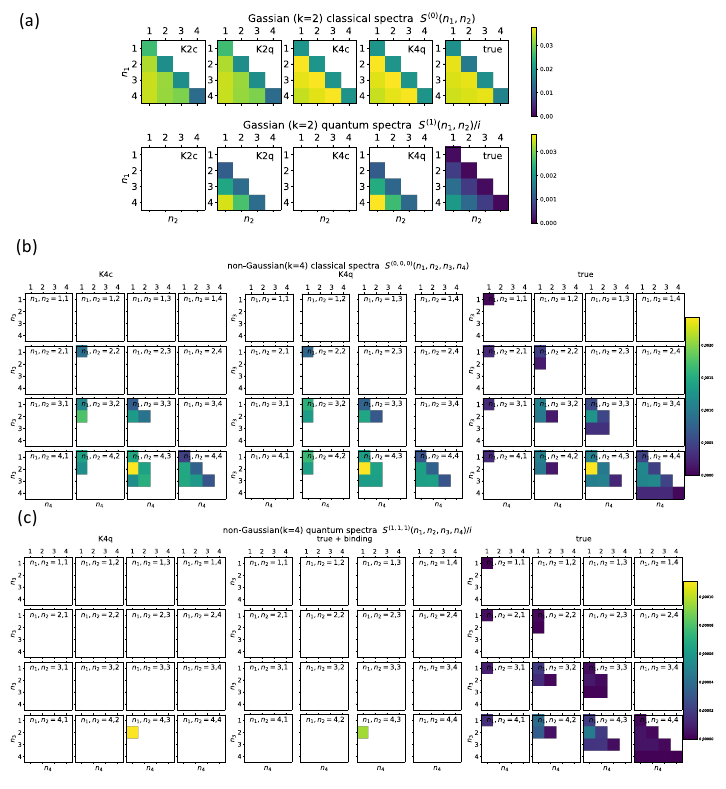}
\caption{\textbf{Reconstruction of the $L=4$ CA spectra  with different QNS protocols } 
    (a)  Gaussian $k=2$ CA classical and quantum spectral reconstruction by four protocols. Diagonal quantum spectra are dark and are non-learnable.
    (b)  Non-Gaussian $k=4$ CA classical $\bm{\mu}=(0,0,0)$ spectral reconstruction by $K4c$ and $K4q$ and true solution.
    (c)  Non-Gaussian ($k=4$) quantum spectra ($\mu=(1,1,1)$) reconstructed by  $K4q$ (left) and  true solution (right). Only one spectrum, $\bar{S}^{(1,1,1)}(4,3,2,1)$, is reconstructed, while others are either dark or bound.  The significant mismatch is mainly due to the binding symmetry presented $\bar{S}^{(1,1,1)}(4,3,2,1)$ with higher order spectra out of truncation order [$\bar{S}^{(0,0,1,1,1)}(4,4,4,3,2,1)$, $\bar{S}^{(0,0,1,1,1)}(4,3,3,3,2,1)$, $\bar{S}^{(1,0,0,1,1)}(4,3,2,2,1)$ and $\bar{S}^{(1,1,0,0,1)}(4,3,2,1,1,1)$ etc.]. We account for these spectra and show the bound form in the middle (true+binding), thus the reconstruction error is reduced. 
    In the numerical simulation of the environment, we use $\gamma =0.02$ MHz, $g=0.1$ MHz,  $T=4~\mu$s and  $\tilde{T}=5~\mu$s. }
    \label{fig:fig_case0_spectra_ALL}
\end{figure*} 

\section{Case Study 0: Comparing Gaussian and non-Gaussian, classical and quantum CA QNS protocols}

\label{app:incremental_QNS}
In this case study, we compare four CA QNS protocols designed to reconstruct single-qubit Gaussian and non-Gaussian, classical and quantum digital-CA spectra. This study extends beyond the Gaussian and non-Gaussian CA QNS protocols denoted by $K2c$ and $K4c$, which reconstruct purely classical noise for digital control up to four windows ($L=4$)~\cite{DongWZ_APL}. The extension includes two new protocols, $K2q$ and $K4q$, which also accommodate quantum noise spectral estimation. Specifically, $K2q$ is tailored to reconstruct Gaussian classical and quantum spectra, while $K4q$ extends to  classical and quantum spectra up to non-Gaussian order $k=4$. Notice that none of these four protocols are fundamental QNS that is employed in two case studies in the main text.

Inspired by~\cite{Qasim_SPAM_paper}, we study the quantum noise in a toy model using an auxiliary qubit as the environment, which couples to our system qubit as 
\begin{equation} 
H_{QE}(t) = g \sigma_z \otimes [ \beta_x(t)\tau_x  + \beta_y(t) \tau_y ], 
\end{equation}
where $g$ represents the coupling strength, $\beta_{x/y}(t)$ denote classical stochastic processes in the defined directions, and $\tau$ is the Pauli operator for the environment qubit. The bath operator $B(t) = \beta_x(t)\tau_x  + \beta_y(t) \tau_y $ is constructed by relating these two stochastic functions  to a main stochastic function: 
\begin{equation} 
\begin{aligned} 
\beta_x(t) &= \beta(t), \\
\beta_y(t) &= \beta(t+\tilde{T}), 
\end{aligned} 
\end{equation} 
where $\tilde{T}> 0$ is chosen such that $[B(t),B(t')]\neq 0$ for $t\neq t'$, and  $t \in [0, \infty)$.

We choose  $\beta(t)$ as a standard random telegraph noise (RTN) for concreteness.  The RTN process is a bi-stable signal that switches randomly between two levels, $\pm1$~\cite{Klyatskin2005,Galperin2004}. The signal at any time $t$ is described by 
\begin{equation*}
    \beta(t) =  \beta(0)(-1)^{n(0,t)},
\end{equation*}
where $\beta(t)=\pm1$ depends on the number of switches $n(0,t)$ within the interval. For a symmetric RTN, the switching rate $\gamma$ is the same for transitions $-1\rightarrow 1$ and $1\rightarrow -1$, with $n(0,t)$ following a Poisson distribution:
\begin{equation*}
    P_{n(0,t)=n} = \frac{(\gamma t)^n}{n!} e^{-\gamma t}.
\end{equation*}
For an initial state $\beta(0)=\pm1$ with equal probability, the RTN is zero-mean ($\expect{\beta(t)}_{\text{C}} \equiv 0$), and the second-order moment simplifies to
\begin{equation*}
    \expect{\beta(t_1)\beta(t_2)}_{\text{C}} = e^{-2\gamma (t_1-t_2)},
\end{equation*}
where $t_1 \geq t_2$.  Higher-moment functions for $t_1 \geq t_2 \geq \ldots \geq t_n$ obey a recursive relation, 
leading to the $k$-th order moment for zero-mean RTN as:
\begin{equation*}
    \Big\langle{\prod^k_{j=1}\beta(t_j)}\Big\rangle_{\text{C}} = e^{2\gamma \sum^{k}_{j=1} (-1)^j t_j},
\end{equation*}
for even $k$; while odd-order moments strictly vanish.

The above modeling of the environment qubit produces a non-vanishing, non-Gaussian, non-classical noise. 
By letting the initial bath qubit state be $\rho_B(0)=(\sigma_0+\sigma_z)/2$, we obtain the classical and quantum noise correlators as
\begin{equation}
    \begin{aligned}
        \expect{\mathcal{B}^{(0)}(t_1,t_2)} = \expect{\{B(t_1),B(t_2)\}}/2  &=[\expect{\beta(t_1)\beta(t_2)}_{\text{C}}+\expect{\beta(t_1+\tilde{T})\beta(t_2+\tilde{T})}_{\text{C}}]\\
        & = 2\expect{\beta(t_1-t_2)\beta(0)}_{\text{C}},\\        
        \expect{\mathcal{B}^{(1)}(t_1,t_2)} =\expect{[B(t_1),B(t_2)]}/2  &=i [\expect{\beta(|t_1-t_2-\tilde{T}|)\beta(0)}_{\text{C}}-\expect{\beta(|t_1+\tilde{T}-t_2|)\beta(0)}_{\text{C}}].
    \end{aligned}
\end{equation}
On the non-Gaussian level, we can get, for example,
\begin{equation}
    \begin{aligned}
       \expect{\mathcal{B}^{(0,0,0)}(t_1,t_2,t_3,t_4)} = \expect{\{\{\{B(t_1),B(t_2)\},B(t_3)\},B(t_4)\}}/8  &= [\expect{\beta_x(t_1)\beta_x(t_2)\beta_x(t_3)\beta_x(t_4)+\beta_y(t_1)\beta_y(t_2)\beta_x(t_3)\beta_x(t_4), \\
        &\quad+\beta_x(t_1)\beta_x(t_2)\beta_y(t_3)\beta_y(t_4)+\beta_y(t_1)\beta_x(t_2)\beta_x(t_3)\beta_y(t_4)}_{\text{C}}] \\
     \expect{ \mathcal{B}^{(1,1,1)}(t_1,t_2,t_3,t_4) }= \expect{[[[ B(t_1),B(t_2)],B(t_3)],B(t_4)]}/8 &= -i[ 
\expect{-\beta_y(t_1)\beta_x(t_2)\beta_x(t_3)\beta_x(t_4)+\beta_x(t_1)\beta_y(t_2)\beta_x(t_3)\beta_x(t_4) \\
&\quad -\beta_y(t_1)\beta_x(t_2)\beta_y(t_3)\beta_y(t_4)+\beta_x(t_1)\beta_y(t_2)\beta_y(t_3)\beta_y(t_4)}_{\text{C}}],\\
    \end{aligned}
\end{equation}
and it is easy to check that the nested commutator does not vanish for $\tilde{T}\neq0$. 

The above-mentioned non-Gaussian,  quantum noise toy model has a Lorentzian profile classical spectral distribution with a peak at zero frequency. The peak frequency can be modulated by incorporating a modulation parameter $\Omega$ within $\beta(t)$, as described in \cite{DongWZ_APL} in more detail. By incorporating this modulation parameter, the peak is shifted to $\omega = \Omega$ without affecting other aspects of the spectral profile.  

We implement four CA QNS protocols ($K2c$, $K2q$, $K4c$, and $K4q$), with the protocol details provided in Table~\ref{tab:QNS_ctrl_other3} and \ref{tab:QNS_ctrl_k4q}, to reconstruct the spectra of noise parameterized by $\gamma = 0.02$ MHz, $g = 0.1$ MHz, $T = 4~\mu$s,  $\tilde{T} = 5~\mu$s,  and $\Omega/\gamma=0$. The spectral reconstruction results are shown in Fig.~\ref{fig:fig_case0_spectra_ALL}. Since the quantum spectra $\bar{S}^{(1)}(\vec{n})$ and $\bar{S}^{(1,1,1)}(\vec{n})$ are purely imaginary, we only present their imaginary components.
In Fig.\ref{fig:fig_case0_spectra_ALL}(a), although all protocols successfully reconstruct the Gaussian classical spectra, only $K2q$ and $K4q$ can reconstruct the Gaussian quantum spectra. The dark quantum spectra (diagonal plaquettes) are irrelevant to $Q$ dynamics and are inherently non-learnable. In Fig.\ref{fig:fig_case0_spectra_ALL}(b), both $K4c$ and $K4q$ reconstruct the non-Gaussian classical spectra equally well. However, some classical spectra cannot be reconstructed due to their binding symmetry.
Figure~\ref{fig:fig_case0_spectra_ALL}(c) illustrates how $K4q$ reconstructs the non-Gaussian quantum spectrum $\bar{S}^{(1,1,1)}(4,3,2,1)$. Crucially, all other $\vec{n}$ values correspond to dark spectra, highlighting the advantage of model reduction. The spectral reconstruction exhibits significant deviations from the true values, primarily because the binding symmetry with higher-order ($k=6$) spectra plays a substantial role for this type of spectra. To verify this, we account for the bound form and present the binding fixed true values in the middle diagram, where the deviations of estimation are significantly reduced. 
It is important to note that all protocols are subject to reconstruction errors due to their insufficient truncation.

We further analyze the performance of noise-tailored, optimized gate designs using the reconstructed spectra obtained from the four protocols and plug-and-play CPMG.  These four noise-tailored designs follow the procedure in Cast Study 2 in Sec.\ref{sec:Fundamental_QNS}, and in Ref.~\cite{DongWZ_APL}. We tested the optimized control under combinations of $\Omega/ \gamma$, $\tilde{T}$, and $g/\gamma$, which modulate the ``color'', quantumness, and non-Gaussianity of the environment, respectively. The two plots in  Fig.~\ref{fig:fid_1qb_CC} show the performance of four protocols alongside  CPMG. The interplay among protocols proved to be complex and highly dependent on parameter regimes.
Our findings can be summarized as follows:
(i) The truncation order significantly affects the performance of optimized control.  As $g$ increases,  higher-order spectra enter the qubit dynamics unaccounted for by these protocols. These spectra are projected into the reconstructed spectra, leading to large deviations.  The optimized control, based on imprecise spectra, diverges unpredictably. For example, in Fig.~\ref{fig:fid_1qb_CC}(a), even though $K4q$ is expected to outperform the other three learn-based protocols because of its more precise spectral estimation,  we argue that, especially in the strong coupling regime, spectral deviation in $K4q$ is more adversarial to designing a robust control than in other protocols. However, $K4q$ has the best performance in Fig.~\ref{fig:fid_1qb_CC}(b). This indicates that it is not always higher-order truncation that is better; it is, rather, dependent on the noise parameters. (ii) Fig.~\ref{fig:fid_1qb_CC}(b) shows that CPMG loses its effectiveness at non-zero noise frequencies ($\Omega/\gamma>0$), where learning noise becomes crucial. This result supports the observation that CPMG  filters well low-frequency noise in Ref.\cite{DongWZ_APL}. (iii) We set $\tilde{T} = 5~\mu$s to produce non-vanishing quantum spectra, whose magnitude is approximately one order smaller than classical spectra. At this value, combined with the ``color'' parameter $\Omega/\gamma=80$, CPMG performance declines, while $K4q$ outperforms all other protocols. Adjusting $\tilde{T}$ could amplify the relative quantum spectrum magnitude, potentially enhancing $K4q$'s advantage over $K4c$. However, this adjustment may also improve CPMG performance to surpass $K4q$ since quantum spectra remain dark under CPMG control (not shown).
In summary, for general non-Gaussian, non-classical noise C\&C, QNS-based optimized control with insufficient truncation exhibits a complex interplay with plug-and-play CPMG performance. The best strategy for robust control design is strongly dependent on the noise parameter regime.

\begin{table*}[!htbp]
\centering
\resizebox{0.7\textwidth}{!}{\begin{tabular}{|cccc | c|| cccc|c| }
\hline$n=1$ & $n=2$ & $n=3$ & $n=4$ & $\widetilde{\rho}_Q$ & $n=1$ & $n=2$ & $n=3$ & $n=4$ & $\widetilde{\rho}_Q$ \\
\hline$\{1,0,0\}$ & $\{1,0,0\}$ & $\{1,0,0\}$ & $\{1,0,0\}$ & $\sigma_z$ &
$\{1,0,0\}$ & $\{1,0,0\}$ & $\{1,0,0\}$ & $\{1,0,0\}$ & $\sigma_y$ \\
$\{1,0,0\}$ & $\{1,0,0\}$ & $\{1,0,0\}$ & $\{0,1,0\}$ & $\sigma_z$ &
$\{1,0,0\}$ & $\{1,0,0\}$ & $\{1,0,0\}$ & $\{0,1,0\}$ & $\sigma_y$ \\
$\{1,0,0\}$ & $\{1,0,0\}$ & $\{1,0,0\}$ & $\{0,0,1\}$ & $\sigma_z$ &
$\{1,0,0\}$ & $\{1,0,0\}$ & $\{0,1,0\}$ & $\{1,0,0\}$ & $\sigma_z$ \\
$\{1,0,0\}$ & $\{1,0,0\}$ & $\{0,1,0\}$ & $\{1,0,0\}$ & $\sigma_y$ &
$\{1,0,0\}$ & $\{1,0,0\}$ & $\{0,1,0\}$ & $\{0,1,0\}$ & $\sigma_z$ \\
$\{1,0,0\}$ & $\{1,0,0\}$ & $\{0,0,1\}$ & $\{1,0,0\}$ & $\sigma_z$ &
$\{1,0,0\}$ & $\{0,1,0\}$ & $\{1,0,0\}$ & $\{1,0,0\}$ & $\sigma_z$ \\
$\{1,0,0\}$ & $\{0,1,0\}$ & $\{1,0,0\}$ & $\{1,0,0\}$ & $\sigma_y$ &
$\{1,0,0\}$ & $\{0,1,0\}$ & $\{1,0,0\}$ & $\{0,1,0\}$ & $\sigma_z$ \\
$\{1,0,0\}$ & $\{0,1,0\}$ & $\{0,1,0\}$ & $\{1,0,0\}$ & $\sigma_z$ &
$\{1,0,0\}$ & $\{0,0,1\}$ & $\{1,0,0\}$ & $\{1,0,0\}$ & $\sigma_z$ \\
\hline 
\end{tabular}}
\vspace*{4mm}
\resizebox{0.7\textwidth}{!}{
\begin{tabular}{|cccc | c|| cccc|c| }
\hline$n=1$ & $n=2$ & $n=3$ & $n=4$ & $\widetilde{\rho}_Q$ & $n=1$ & $n=2$ & $n=3$ & $n=4$ & $\widetilde{\rho}_Q$ \\
\hline
$\{1, 0, 0\}$ & $\{1, 0, 0\}$ & $\{1, 0, 0\}$ & $\{1, 0, 0\}$ & $\sigma_{z}$ & $\{1, 0, 0\}$ & $\{1, 0, 0\}$ & $\{1, 0, 0\}$ & $\{1, 0, 0\}$ & $\sigma_y$ \\
$\{1, 0, 0\}$ & $\{1, 0, 0\}$ & $\{1, 0, 0\}$ & $\{0, 1, 0\}$ &$\sigma_{z}$ & $\{1, 0, 0\}$ & $\{1, 0, 0\}$ & $\{1, 0, 0\}$ & $\{0, 1, 0\}$ & $\sigma_{y}$ \\
$\{1, 0, 0\}$ & $\{1, 0, 0\}$ & $\{1, 0, 0\}$ & $\{0, 0, 1\}$&$\sigma_{z}$ & $\{1, 0, 0\}$ & $\{1, 0, 0\}$ & $\{1, 0, 0\}$ & $\{0, 0, 1\}$ & $\sigma_{y}$ \\
$\{1, 0, 0\}$ & $\{1, 0, 0\}$ & $\{0, 1, 0\}$ & $\{1, 0, 0\}$&$\sigma_{z}$ & $\{1, 0, 0\}$ & $\{1, 0, 0\}$ & $\{0, 1, 0\}$ & $\{1, 0, 0\}$ & $\sigma_{y}$ \\
$\{1, 0, 0\}$ & $\{1, 0, 0\}$ & $\{0, 1, 0\}$ & $\{0, 1, 0\}$&$\sigma_{z}$ & $\{1, 0, 0\}$ & $\{1, 0, 0\}$ & $\{0, 1, 0\}$ & $\{0, 1, 0\}$ & $\sigma_{y}$ \\
$\{1, 0, 0\}$ & $\{1, 0, 0\}$ & $\{0, 1, 0\}$ & $\{0, 0, 1\}$&$\sigma_{z}$ & $\{1, 0, 0\}$ & $\{1, 0, 0\}$ & $\{0, 1, 0\}$ & $\{0, 0, 1\}$ & $\sigma_{y}$ \\
$\{1, 0, 0\}$ & $\{1, 0, 0\}$ & $\{0, 1, 0\}$ & $\{\frac{1}{\sqrt{2}}, \frac{1}{\sqrt{2}}, 0\}$&$\sigma_{z}$ & $\{1, 0, 0\}$ & $\{1, 0, 0\}$ & $\{0, 1, 0\}$ & $\{\frac{1}{\sqrt{2}}, \frac{1}{\sqrt{2}}, 0\}$ & $\sigma_{y}$ \\
$\{1, 0, 0\}$ & $\{1, 0, 0\}$ & $\{0, 1, 0\}$ & $\{ 0, \frac{1}{\sqrt{2}}, \frac{1}{\sqrt{2}}\}$&$\sigma_{y}$ & $\{1, 0, 0\}$ & $\{1, 0, 0\}$ & $\{0, 0, 1\}$ & $\{1, 0, 0\}$ & $\sigma_{z}$ \\
$\{1, 0, 0\}$ & $\{1, 0, 0\}$ & $\{0, 0, 1\}$ & $\{1, 0, 0\}$&$\sigma_{y}$ & $\{1, 0, 0\}$ & $\{0, 1, 0\}$ & $\{1, 0, 0\}$ & $\{1, 0, 0\}$ & $\sigma_{z}$ \\
$\{1, 0, 0\}$ & $\{0, 1, 0\}$ & $\{1, 0, 0\}$ & $\{1, 0, 0\}$&$\sigma_{y}$ & $\{1, 0, 0\}$ & $\{0, 1, 0\}$ & $\{1, 0, 0\}$ & $\{0, 1, 0\}$ & $\sigma_{z}$ \\
$\{1, 0, 0\}$ & $\{0, 1, 0\}$ & $\{1, 0, 0\}$ & $\{0, 1, 0\}$&$\sigma_{y}$ & $\{1, 0, 0\}$ & $\{0, 1, 0\}$ & $\{1, 0, 0\}$ & $\{0, 0, 1\}$ & $\sigma_{z}$ \\
$\{1, 0, 0\}$ & $\{0, 1, 0\}$ & $\{1, 0, 0\}$ & $\{0, 0, 1\}$ &$\sigma_{y}$ & $\{1, 0, 0\}$ & $\{0, 1, 0\}$ & $\{1, 0, 0\}$ & $\{\frac{1}{\sqrt{2}}, \frac{1}{\sqrt{2}}, 0\}$ & $\sigma_{z}$ \\
$\{1, 0, 0\}$ & $\{0, 1, 0\}$ & $\{1, 0, 0\}$ & $\{\frac{1}{\sqrt{2}}, \frac{1}{\sqrt{2}}, 0\}$ &$\sigma_{y}$ & $\{1, 0, 0\}$ & $\{0, 1, 0\}$ & $\{1, 0, 0\}$ & $\{\frac{1}{\sqrt{2}}, 0, \frac{1}{\sqrt{2}}\}$ & $\sigma_{y}$ \\
$\{1, 0, 0\}$ & $\{0, 1, 0\}$ & $\{1, 0, 0\}$ & $\{0, \frac{1}{\sqrt{2}}, \frac{1}{\sqrt{2}}\}$ &$\sigma_{y}$ & $\{1, 0, 0\}$ & $\{0, 1, 0\}$ & $\{0, 1, 0\}$ & $\{1, 0, 0\}$ & $\sigma_{z}$ \\
$\{1, 0, 0\}$ & $\{0, 1, 0\}$ & $\{0, 1, 0\}$ & $\{1, 0, 0\}$ &$\sigma_{y}$ & $\{1, 0, 0\}$ & $\{0, 1, 0\}$ & $\{0, 1, 0\}$ & $\{0, 1, 0\}$ & $\sigma_{z}$ \\
$\{1, 0, 0\}$ & $\{0, 1, 0\}$ & $\{0, 1, 0\}$ & $\{0, 1, 0\}$&$\sigma_{y}$ & $\{1, 0, 0\}$ & $\{0, 1, 0\}$ & $\{0, 1, 0\}$ & $\{0, 0, 1\}$ & $\sigma_{z}$ \\
$\{1, 0, 0\}$ & $\{0, 1, 0\}$ & $\{0, 0, 1\}$ & $\{1, 0, 0\}$&$\sigma_{z}$ & $\{1, 0, 0\}$ & $\{0, 1, 0\}$ & $\{0, 0, 1\}$ & $\{1, 0, 0\}$ & $\sigma_{y}$\\
$\{1, 0, 0\}$ & $\{0, 1, 0\}$ & $\{0, 0, 1\}$ & $\{0, 1, 0\}$&$\sigma_{y}$ & $\{1, 0, 0\}$ & $\{0, 1, 0\}$ & $\{\frac{1}{\sqrt{2}}, \frac{1}{\sqrt{2}}, 0\}$ & $\{1, 0, 0\}$ & $\sigma_{z}$\\
$\{1, 0, 0\}$ & $\{0, 1, 0\}$ & $\{\frac{1}{\sqrt{2}}, \frac{1}{\sqrt{2}}, 0\}$ & $\{1, 0, 0\}$&$\sigma_{y}$ & $\{1, 0, 0\}$ & $\{0, 1, 0\}$ & $\{0, \frac{1}{\sqrt{2}}, \frac{1}{\sqrt{2}}\}$ & $\{1, 0, 0\}$  &$\sigma_{y}$\\
$\{1, 0, 0\}$ & $\{0, 0, 1\}$ & $\{1, 0, 0\}$ & $\{1, 0, 0\}$ & $\sigma_{z}$ &  $\{1, 0, 0\}$ & $\{0, 0, 1\}$ & $\{1, 0, 0\}$ & $\{1, 0, 0\}$ &$\sigma_{y}$ \\
 $\{0, 1, 0\}$ & $\{1, 0, 0\}$ & $\{1, 0, 0\}$ & $\{1, 0, 0\}$ & $\sigma_{y}$ & $\{0, 1, 0\}$ & $\{1, 0, 0\}$ & $\{1, 0, 0\}$ & $\{0, 1, 0\}$ & $\sigma_{y}$ \\
 $\{0, 1, 0\}$ & $\{1, 0, 0\}$ & $\{1, 0, 0\}$ & $\{0, \frac{1}{\sqrt{2}}, \frac{1}{\sqrt{2}}\}$ &$\sigma_{y}$ &  $\{0, 1, 0\}$ & $\{1, 0, 0\}$ & $\{0, 1, 0\}$ & $\{1, 0, 0\}$  & $\sigma_{y}$\\
 $\{0, 1, 0\}$ & $\{1, 0, 0\}$ & $\{0, 0, 1\}$ & $\{0, 1, 0\}$ &$\sigma_{y}$ & $\{0, 1, 0\}$ & $\{1, 0, 0\}$ & $\{0, \frac{1}{\sqrt{2}}, \frac{1}{\sqrt{2}}\}$ & $\{1, 0, 0\}$ &$\sigma_{y}$ \\
 $\{0, 1, 0\}$ & $\{0, 0, 1\}$ & $\{1, 0, 0\}$ & $\{0, 1, 0\}$ &$\sigma_{y}$ & $\{0, 1, 0\}$ & $\{0, 0, 1\}$ & $\{0, 1, 0\}$ & $\{1, 0, 0\}$ & $\sigma_{y}$ \\
 $\{0, 1,  0\}$ & $\{0, \frac{1}{\sqrt{2}}, \frac{1}{\sqrt{2}}\}$ & $\{1, 0, 0\}$  & $\{1, 0, 0\}$  &$\sigma_{y}$ &  &   &   &  &  \\
\hline
\end{tabular}}
\vspace*{4mm}
\resizebox{0.7\textwidth}{!}{
\begin{tabular}{|cccc | c|| cccc|c| }
\hline$n=1$ & $n=2$ & $n=3$ & $n=4$ & $\widetilde{\rho}_Q$ & $n=1$ & $n=2$ & $n=3$ & $n=4$ & $\widetilde{\rho}_Q$ \\
\hline 
  $\{1, 0, 0\}$ & $\{1, 0, 0\}$ & $\{1, 0, 0\}$ & $\{1, 0, 0\}$  & $\sigma_z$ &  $\{1, 0, 0\}$ & $\{1, 0, 0\}$ & $\{1, 0, 0\}$ & $\{1, 0, 0\}$ & $\sigma_y$ \\
  $\{1, 0, 0\}$ & $\{1, 0, 0\}$ & $\{1, 0, 0\}$ & $\{0, 1, 0\}$  & $\sigma_0$ &  $\{1, 0, 0\}$ & $\{1, 0, 0\}$ & $\{1, 0, 0\}$ & $\{0, 1, 0\}$ & $\sigma_z$\\
  $\{1, 0, 0\}$ & $\{1, 0, 0\}$ & $\{1, 0, 0\}$ & $\{0, 1, 0\}$  & $\sigma_y$ &  $\{1, 0, 0\}$ & $\{1, 0, 0\}$ & $\{1, 0,  0\}$ & $\{0, 0, 1\}$  & $\sigma_z$\\
  $\{1, 0, 0\}$ & $\{1, 0, 0\}$ & $\{0, 1, 0\}$ & $\{1, 0, 0\}$  & $\sigma_0$ &  $\{1, 0, 0\}$ & $\{1, 0, 0\}$ & $\{0, 1, 0\}$ & $\{1, 0, 0\}$  & $\sigma_z$ \\
  $\{1, 0, 0\}$ & $\{1, 0, 0\}$ & $\{0, 1, 0\}$ & $\{1, 0, 0\}$  & $\sigma_y$ &  $\{1, 0, 0\}$ & $\{1, 0, 0\}$ & $\{0, 1, 0\}$ & $\{0, 1, 0\}$  & $\sigma_z$\\
  $\{1, 0, 0\}$ & $\{1, 0, 0\}$ & $\{0, 1, 0\}$ & $\{0, 0, 1\}$  & $\sigma_0$ &  $\{1, 0, 0\}$ & $\{1, 0, 0\}$ & $\{0, 0, 1\}$ & $\{1, 0, 0\}$ & $\sigma_z$\\
  $\{1, 0, 0\}$ & $\{0, 1, 0\}$ & $\{1, 0, 0\}$ & $\{1, 0, 0\}$  & $\sigma_0$ &  $\{1, 0, 0\}$ & $\{0, 1, 0\}$ & $\{1, 0, 0\}$ & $\{1, 0, 0\}$ & $\sigma_z$  \\
  $\{1, 0, 0\}$ & $\{0, 1, 0\}$ & $\{1, 0, 0\}$ & $\{1, 0, 0\}$  & $\sigma_y$ &  $\{1, 0, 0\}$ & $\{0, 1, 0\}$ & $\{1, 0, 0\}$ & $\{0, 1, 0\}$  & $\sigma_z$\\
  $\{1, 0, 0\}$ & $\{0, 1, 0\}$ & $\{1, 0, 0\}$ & $\{0, 0, 1\}$  & $\sigma_0$ &  $\{1, 0, 0\}$ & $\{0, 1, 0\}$ & $\{0, 1, 0\}$ & $\{1, 0, 0\}$  & $\sigma_z$\\
  $\{1, 0, 0\}$ & $\{0, 1, 0\}$ & $\{0, 0, 1\}$ & $\{1, 0, 0\}$  & $\sigma_0$ &  $\{1, 0, 0\}$ & $\{0, 0, 1\}$ & $\{1, 0, 0\}$ & $\{1, 0, 0\}$ & $\sigma_z$\\
  \hline
\end{tabular}}
\caption{\textbf{Regular CA QNS protocols: $K2c$, $K2q$, $K4c$}  (\textbf{top}) $K2c$ CA QNS protocol. The switching function $\{y_x(n),y_y(n),y_z(n)\}$, $n=1,2,3,4$, and initial states $\widetilde{\rho}_Q$ required for Gaussian (${K}=2$) CA QNS with $L=4$. We have  $N_{\mathcal{C}_{\text{W}}}|^{{K}=2,L=4}=14$. In all experiments, ${O}\equiv\sigma_z$.
(\textbf{middle}) $K4c$ CA QNS. $\mathcal{C}|^{{K}=4,L=4}_{\text{QNS}}$ consists of 49 control vectors $\{y_x(n),y_y(n),y_z(n)\}$ in different windows, $n=1,2,3,4$, and corresponding initial states $\widetilde{\rho}_Q$. (\textbf{bottom}) $K2q$ CA QNS. $N_{\mathcal{C}_{\text{W}}}|^{{K}=4,L=4}_{\text{QNS}}=20$.
In all experiments, ${O}\equiv\sigma_z$.  
All protocols do not {\em a priori} assume stationarity and zero mean.   }
    \label{tab:QNS_ctrl_other3}
\end{table*}

\begin{table*}[!htbp]
    \centering
    \begin{tabular}{|cccc | c|| cccc|c| }
 \hline$n=1$ & $n=2$ & $n=3$ & $n=4$ & $\widetilde{\rho}_Q$ & $n=1$ & $n=2$ & $n=3$ & $n=4$ & $\widetilde{\rho}_Q$ \\ \hline
 $\{1, 0, 0\}$ &  $\{1, 0, 0\}$ &  $\{1, 0, 0\}$ &  $\{1, 0, 0\}$ &  $\sigma_z$  &$\{1, 0,  0\}$ &  $\{1, 0, 0\}$ &  $\{1, 0, 0\}$ &  $\{1, 0, 0\}$ & $\sigma_y$ \\ 
$\{1, 0, 0\}$ &  $\{1, 0, 0\}$ &  $\{1, 0, 0\}$ &  $\{0, 1, 0\}$ &  $\sigma_0$  &$\{1, 0, 0\}$ &  $\{1, 0,  0\}$ &  $\{1, 0, 0\}$ &  $\{0, 1, 0\}$ & $\sigma_z$ \\ 
$\{1, 0, 0\}$ &  $\{1, 0, 0\}$ &  $\{1,  0, 0\}$ &  $\{0, 1, 0\}$ &  $\sigma_y$  &$\{1, 0, 0\}$ &  $\{1, 0, 0\}$ &  $\{1, 0,  0\}$ &  $\{0, 0, 1\}$ & $\sigma_z$ \\ 
$\{1, 0, 0\}$ &  $\{1, 0, 0\}$ &  $\{1, 0, 0\}$ &  $\{0, 0, 1\}$ &  $\sigma_y$  &$\{1, 0, 0\}$ &  $\{1, 0, 0\}$ &  $\{1, 0, 0\}$ &  $\{0.707, 0, 0.707\}$ & $\sigma_0$ \\ 
$\{1, 0, 0\}$ &  $\{1, 0, 0\}$ &  $\{0, 1, 0\}$ &  $\{1, 0, 0\}$ &  $\sigma_0$  &$\{1, 0, 0\}$ &  $\{1, 0, 0\}$ &  $\{0, 1, 0\}$ &  $\{1, 0,  0\}$ & $\sigma_z$ \\ 
$\{1, 0, 0\}$ &  $\{1, 0, 0\}$ &  $\{0, 1, 0\}$ &  $\{1, 0, 0\}$ &  $\sigma_y$  &$\{1, 0, 0\}$ &  $\{1, 0, 0\}$ &  $\{0, 1, 0\}$ &  $\{0, 1,  0\}$ & $\sigma_0$ \\ 
$\{1, 0, 0\}$ &  $\{1, 0, 0\}$ &  $\{0, 1, 0\}$ &  $\{0, 1, 0\}$ &  $\sigma_z$  &$\{1, 0, 0\}$ &  $\{1, 0, 0\}$ &  $\{0, 1, 0\}$ &  $\{0, 1, 0\}$ & $\sigma_y$ \\ 
$\{1, 0, 0\}$ &  $\{1, 0, 0\}$ &  $\{0, 1, 0\}$ &  $\{0, 0, 1\}$ &  $\sigma_0$  &$\{1, 0, 0\}$ &  $\{1, 0, 0\}$ &  $\{0, 1, 0\}$ &  $\{0, 0, 1\}$ & $\sigma_z$ \\ 
$\{1, 0, 0\}$ &  $\{1, 0, 0\}$ &  $\{0, 1, 0\}$ &  $\{0, 0, 1\}$ &  $\sigma_y$  &$\{1, 0, 0\}$ &  $\{1, 0, 0\}$ &  $\{0, 1, 0\}$ &  $\{0.707, 0.707, 0\}$ & $\sigma_0$ \\ 
$\{1, 0, 0\}$ &  $\{1, 0, 0\}$ &  $\{0, 1, 0\}$ &  $\{0.707, 0.707, 0\}$ &  $\sigma_z$  &$\{1, 0, 0\}$ &  $\{1, 0, 0\}$ &  $\{0, 1, 0\}$ &  $\{0.707, 0.707, 0\}$ & $\sigma_y$ \\ 
$\{1, 0, 0\}$ &  $\{1, 0, 0\}$ &  $\{0, 1, 0\}$ &  $\{0.707, 0, 0.707\}$ &  $\sigma_0$  &$\{1, 0, 0\}$ &  $\{1, 0, 0\}$ &  $\{0, 1, 0\}$ &  $\{0, 0.707, 0.707\}$ & $\sigma_y$ \\ 
$\{1, 0, 0\}$ &  $\{1, 0, 0\}$ &  $\{0, 0, 1\}$ &  $\{1, 0, 0\}$ &  $\sigma_0$  &$\{1, 0, 0\}$ &  $\{1, 0, 0\}$ &  $\{0, 0, 1\}$ &  $\{1, 0, 0\}$ & $\sigma_z$ \\ 
$\{1, 0, 0\}$ &  $\{1, 0, 0\}$ &  $\{0, 0, 1\}$ &  $\{1, 0, 0\}$ &  $\sigma_y$  &$\{1, 0, 0\}$ &  $\{1, 0,  0\}$ &  $\{0.707, 0, 0.707\}$ &  $\{1, 0, 0\}$ & $\sigma_0$ \\ 
$\{1, 0, 0\}$ &  $\{0, 1, 0\}$ &  $\{1, 0, 0\}$ &  $\{1, 0, 0\}$ &  $\sigma_0$  &$\{1, 0, 0\}$ &  $\{0, 1,  0\}$ &  $\{1, 0, 0\}$ &  $\{1, 0, 0\}$ & $\sigma_z$ \\ 
$\{1, 0, 0\}$ &  $\{0, 1, 0\}$ &  $\{1, 0, 0\}$ &  $\{1, 0, 0\}$ &  $\sigma_y$  &$\{1, 0, 0\}$ &  $\{0, 1, 0\}$ &  $\{1, 0,0\}$ &  $\{0, 1, 0\}$ & $\sigma_0$ \\ 
$\{1, 0, 0\}$ &  $\{0, 1, 0\}$ &  $\{1, 0, 0\}$ &  $\{0,1, 0\}$ &  $\sigma_z$  &$\{1, 0, 0\}$ &  $\{0, 1, 0\}$ &  $\{1, 0, 0\}$ &  $\{0, 1, 0\}$ & $\sigma_y$ \\ 
$\{1, 0, 0\}$ &  $\{0, 1, 0\}$ &  $\{1, 0, 0\}$ &  $\{0, 0,  1\}$ &  $\sigma_0$  &$\{1, 0, 0\}$ &  $\{0, 1, 0\}$ &  $\{1, 0, 0\}$ &  $\{0, 0, 1\}$ & $\sigma_z$ \\ 
$\{1, 0, 0\}$ &  $\{0, 1, 0\}$ &  $\{1, 0, 0\}$ &  $\{0, 0, 1\}$ &  $\sigma_y$  &$\{1, 0, 0\}$ &  $\{0, 1, 0\}$ &  $\{1, 0, 0\}$ &  $\{0.707, 0.707, 0\}$ & $\sigma_0$ \\ 
$\{1, 0, 0\}$ &  $\{0, 1, 0\}$ &  $\{1, 0,  0\}$ &  $\{0.707, 0.707, 0\}$ &  $\sigma_z$  &$\{1, 0, 0\}$ &  $\{0, 1, 0\}$ &  $\{1, 0, 0\}$ &  $\{0.707, 0.707, 0\}$ & $\sigma_y$ \\ 
$\{1, 0, 0\}$ &  $\{0, 1, 0\}$ &  $\{1, 0, 0\}$ &  $\{0.707, 0, 0.707\}$ &  $\sigma_0$  &$\{1, 0, 0\}$ &  $\{0, 1, 0\}$ &  $\{1, 0, 0\}$ &  $\{0.707, 0, 0.707\}$ & $\sigma_y$ \\ 
$\{1, 0, 0\}$ &  $\{0, 1, 0\}$ &  $\{1, 0, 0\}$ &  $\{0, 0.707, 0.707\}$ &  $\sigma_y$  &$\{1, 0, 0\}$ &  $\{0, 1, 0\}$ &  $\{0, 1, 0\}$ &  $\{1, 0, 0\}$ & $\sigma_0$ \\ 
$\{1, 0, 0\}$ &  $\{0, 1, 0\}$ &  $\{0, 1, 0\}$ &  $\{1, 0, 0\}$ &  $\sigma_z$  &$\{1, 0, 0\}$ &  $\{0, 1, 0\}$ &  $\{0, 1, 0\}$ &  $\{1, 0, 0\}$ & $\sigma_y$ \\ 
$\{1, 0, 0\}$ &  $\{0, 1, 0\}$ &  $\{0, 1, 0\}$ &  $\{0, 1, 0\}$ &  $\sigma_0$  &$\{1, 0, 0\}$ &  $\{0, 1, 0\}$ &  $\{0,1, 0\}$ &  $\{0, 1, 0\}$ & $\sigma_z$ \\ 
$\{1, 0, 0\}$ &  $\{0, 1, 0\}$ &  $\{0, 1, 0\}$ &  $\{0, 1, 0\}$ &  $\sigma_y$  &$\{1, 0, 0\}$ &  $\{0, 1, 0\}$ &  $\{0, 1, 0\}$ &  $\{0,0, 1\}$ & $\sigma_0$ \\ 
$\{1, 0, 0\}$ &  $\{0, 1, 0\}$ &  $\{0, 1, 0\}$ &  $\{0, 0, 1\}$ &  $\sigma_z$  &$\{1, 0, 0\}$ &  $\{0, 1, 0\}$ &  $\{0, 1, 0\}$ &  $\{0, 0, 1\}$ & $\sigma_y$ \\ 
$\{1, 0, 0\}$ &  $\{0, 1, 0\}$ &  $\{0, 1, 0\}$ &  $\{0.707, 0.707, 0\}$ &  $\sigma_z$  &$\{1, 0, 0\}$ &  $\{0, 1, 0\}$ &  $\{0, 0, 1\}$ &  $\{1, 0, 0\}$ & $\sigma_0$ \\ 
$\{1, 0, 0\}$ &  $\{0, 1, 0\}$ &  $\{0, 0, 1\}$ &  $\{1, 0, 0\}$ &  $\sigma_z$  &$\{1, 0, 0\}$ &  $\{0, 1, 0\}$ &  $\{0, 0, 1\}$ &  $\{1, 0, 0\}$ & $\sigma_y$ \\ 
$\{1, 0, 0\}$ &  $\{0, 1, 0\}$ &  $\{0, 0, 1\}$ &  $\{0, 1, 0\}$ &  $\sigma_0$  &$\{1, 0, 0\}$ &  $\{0, 1, 0\}$ &  $\{0, 0, 1\}$ &  $\{0, 1, 0\}$ & $\sigma_y$ \\ 
$\{1, 0, 0\}$ &  $\{0, 1, 0\}$ &  $\{0, 0, 1\}$ &  $\{0, 0, 1\}$ &  $\sigma_0$  &$\{1, 0, 0\}$ &  $\{0, 1, 0\}$ &  $\{0, 0, 1\}$ &  $\{0.707, 0.707, 0\}$ & $\sigma_0$ \\ 
$\{1, 0, 0\}$ &  $\{0, 1, 0\}$ &  $\{0.707, 0.707, 0\}$ &  $\{1, 0, 0\}$ &  $\sigma_0$  &$\{1, 0, 0\}$ &  $\{0, 1, 0\}$ &  $\{0.707,  0.707, 0\}$ &  $\{1, 0, 0\}$ & $\sigma_z$ \\ 
$\{1, 0, 0\}$ &  $\{0, 1, 0\}$ &  $\{0.707, 0.707, 0\}$ &  $\{1, 0, 0\}$ &  $\sigma_y$  &$\{1, 0, 0\}$ &  $\{0, 1, 0\}$ &  $\{0.707, 0.707, 0\}$ &  $\{0, 1, 0\}$ & $\sigma_z$ \\ 
$\{1, 0, 0\}$ &  $\{0, 1, 0\}$ &  $\{0.707, 0, 0.707\}$ &  $\{1, 0, 0\}$ &  $\sigma_0$  &$\{1, 0, 0\}$ &  $\{0, 1, 0\}$ &  $\{0, 0.707, 0.707\}$ &  $\{1, 0, 0\}$ & $\sigma_y$ \\ 
$\{1, 0, 0\}$ &  $\{0, 0, 1\}$ &  $\{1, 0, 0\}$ &  $\{1, 0, 0\}$ &  $\sigma_0$  &$\{1, 0, 0\}$ &  $\{0, 0, 1\}$ &  $\{1, 0, 0\}$ &  $\{1, 0, 0\}$ & $\sigma_z$ \\ 
$\{1, 0, 0\}$ &  $\{0, 0, 1\}$ &  $\{1, 0, 0\}$ &  $\{1, 0, 0\}$ &  $\sigma_y$  &$\{1, 0, 0\}$ &  $\{0, 0, 1\}$ &  $\{1, 0, 0\}$ &  $\{0, 1, 0\}$ & $\sigma_0$ \\ 
$\{1, 0, 0\}$ &  $\{0.707, 0, 0.707\}$ &  $\{1, 0, 0\}$ &  $\{1, 0, 0\}$ &  $\sigma_0$  &$\{0, 1, 0\}$ &  $\{1, 0, 0\}$ &  $\{1, 0, 0\}$ &  $\{1, 0, 0\}$ & $\sigma_y$ \\ 
$\{0, 1, 0\}$ &  $\{1, 0, 0\}$ &  $\{1, 0, 0\}$ &  $\{0, 1, 0\}$ &  $\sigma_y$  &$\{0, 1, 0\}$ &  $\{1, 0, 0\}$ &  $\{1, 0, 0\}$ &  $\{0, 0, 1\}$ & $\sigma_y$ \\ 
$\{0, 1, 0\}$ &  $\{1, 0, 0\}$ &  $\{1, 0, 0\}$ &  $\{0.707, 0, 0.707\}$ &  $\sigma_y$  &$\{0, 1, 0\}$ &  $\{1, 0, 0\}$ &  $\{1, 0, 0\}$ &  $\{0, 0.707, 0.707\}$ & $\sigma_y$ \\ 
$\{0, 1, 0\}$ &  $\{1, 0, 0\}$ &  $\{0, 1, 0\}$ &  $\{1, 0, 0\}$ &  $\sigma_y$  &$\{0, 1, 0\}$ &  $\{1, 0, 0\}$ &  $\{0, 1, 0\}$ &  $\{0,0, 1\}$ & $\sigma_y$ \\ 
$\{0, 1, 0\}$ &  $\{1, 0, 0\}$ &  $\{0, 1, 0\}$ &  $\{0.707, 0, 0.707\}$ &  $\sigma_y$  &$\{0, 1, 0\}$ &  $\{1, 0, 0\}$ &  $\{0, 0, 1\}$ &  $\{1, 0,0\}$ & $\sigma_y$ \\ 
$\{0, 1, 0\}$ &  $\{1, 0, 0\}$ &  $\{0, 0, 1\}$ &  $\{0, 1,  0\}$ &  $\sigma_y$  &$\{0, 1, 0\}$ &  $\{1, 0, 0\}$ &  $\{0, 0, 1\}$ &  $\{0, 0, 1\}$ & $\sigma_y$ \\ 
$\{0, 1, 0\}$ &  $\{1, 0, 0\}$ &  $\{0.707, 0, 0.707\}$ &  $\{1, 0, 0\}$ &  $\sigma_y$  &$\{0, 1, 0\}$ &  $\{1, 0, 0\}$ &  $\{0, 0.707, 0.707\}$ &  $\{1, 0, 0\}$ & $\sigma_y$ \\ 
$\{0, 1, 0\}$ &  $\{0, 0, 1\}$ &  $\{1, 0, 0\}$ &  $\{0, 1, 0\}$ &  $\sigma_y$  &$\{0, 1, 0\}$ &  $\{0, 0, 1\}$ &  $\{0, 1, 0\}$ &  $\{1, 0, 0\}$ & $\sigma_y$ \\ 
$\{0, 1, 0\}$ &  $\{0, 0.707, 0.707\}$ &  $\{1, 0, 0\}$ &  $\{1, 0, 0\}$ &  $\sigma_y$  &  &  &  &  &   \\ \hline
    \end{tabular}
    \caption{\textbf{Regular CA QNS protocol of $K4q$} for the same parameters as in Table~\ref{tab:QNS_ctrl_other3} }
    \label{tab:QNS_ctrl_k4q}
\end{table*}

\begin{figure}[!htbp]
  \centering
  \includegraphics[width =0.5 \textwidth]{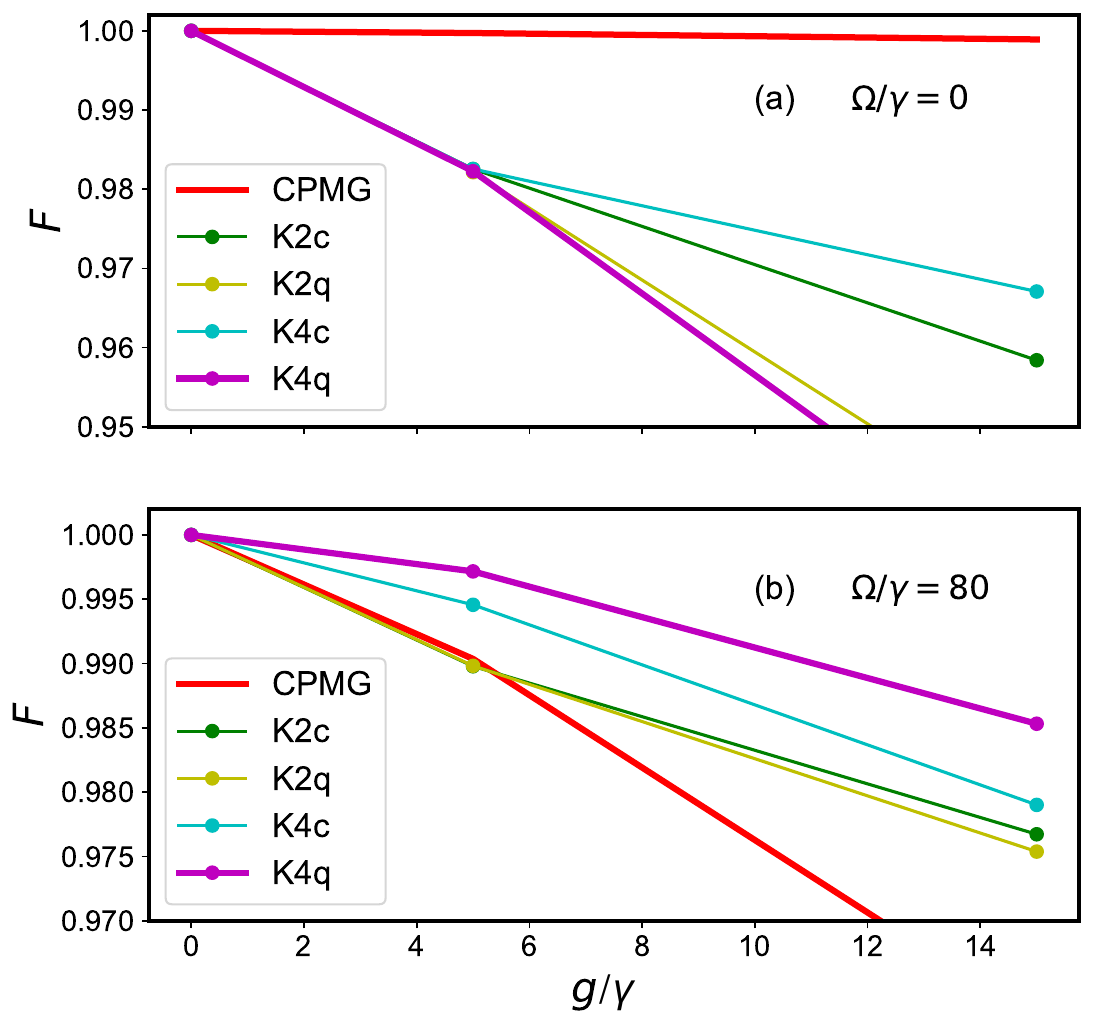}
\caption{\textbf{Noise-optimized control performance of different protocols}.  (a) The spectral peak of the noise profile at $\Omega/\gamma=0$.  (b) The spectral peak of the noise profile at $\Omega/\gamma=80$. The other noise parameters are $\gamma = 0.02$ MHz, $g = 0.1$ MHz, $T = 4~\mu$s, and $\tilde{T} = 5~\mu$s.}
\label{fig:fid_1qb_CC}
\end{figure}

\section{Derivation of QNS sample complexity bounds}
\label{app:bound}
We gave some details about deriving learning saturation bounds in the main text. Here we first briefly sketch how we derive the classical equation in Conclusion~\ref{corollary:1}. We only need to consider spectra that have no 3- (or more) streaks. 

In $k$-th order expansion, each window-frame $k$-string (which means an $n$-string with length $k$) of a learnable CA spectrum  (i) can be streak-free or (ii) includes only  2-streak.  For case (i), the combinations are given by $ \binom{L}{k}$.  For case (ii), we find the combinations are easier to handle by reducing each $k$-string to a $t$-string, keeping just one digit from the 2-streak exhaustively, such that the $t$-string is streak-free. Note that such a 2-streak contraction is only for auxiliary for analysis, rather than a legitimate contraction in control tensors.     
For each unique $t$-string, where $\ceil{k/2} \leq t <k$, it can map to multiple different $k$-string. For example, three different $k=4$-strings $(1,1,2,3)$, $(1,2,2,3)$, and $(1,2,3,3)$ map to a $t=3$-string $(1,2,3)$.  To obtain all possible $k$-strings in case (ii), we first focus on a $t$-string  and then expand backward to restore $k$-strings.  
The number of admissible $t$-string in case (ii) is $ \binom{L}{t}$, and for each unique $t$-string, the number of different $k$-string it can expand or map to is $ \binom{t}{k-t}$. Therefore, at $k$-th order, the number of learnable CA spectra corresponds to 
\begin{equation}
 \binom{L}{k} + \sum_{t = \ceil{k/2}}^{k-1}  \binom{t}{k-t}  \binom{L}{t}.
\end{equation}
The counting of quantum noise is based on the above analysis, while a factor of $2^{k-1}$ is introduced to account for all signs in nested brackets. We note that due to the complicated combinations, we are not able to account for the symmetry of dark spectra. Thus, its scaling is not as tight as in the classical case. 
The bounds $\Theta$ and $\mathcal{O}$ are not determined analytically but are instead obtained by fitting the exponent for large $L$.

The analysis of the two-qubit cases also focuses on the $n$-string first. In the $k$-th order expansion, each $k$-length $n$-string has a streak length no greater than 4, since 5-streak forms control tensor contraction. Suppose  a $n$-string has $p_j$ $j$-streak, where $j=2,3,4$. For aiding purposes, we fully contract all ($\geq2$) streaks in $n$-string to form a streak-free $t$-length $n$-string. It is evident that $k-t= p_2+2p_3+3p_4$ is a constraint. Given a $t$-length string, it can also map to multiple $k$-length strings. To that end, we sequentially select all possible 4-streaks, 3-streaks,  and then 2-streaks sequentially, with their combinations corresponding to $\binom{t}{p_4}$, $\binom{t-p_4}{p_3}$ and $\binom{t-p_4-p_3}{k-t-2p_3-3p_4}$, respectively. 

We also need to consider combinations of $q$-string in conjunction  with the above $n$-string analysis. For the 4-streak in $n$-string, we only need to include $q$-string snippet $(A,A,B,B)$ at the corresponding locations. This is because any other snippet at this location produces a dependent spectrum due to swap symmetry or binding symmetry. For example, $(A,A,A,B)$  produces a  $3$-streak in $q$-string and in $n$-string. Thus this segment behaves like a $3$-streak in a single-qubit dynamics, in which case control tensor contracts and binding symmetry arise.   For the 3-streak in $n$-string, we only need to include the $q$-string snippet $(A,A,B)$ and $(A,B,B)$ for a similar reason. For the 2-streak in $n$-string,  we need to include $(A,A)$, $(A,B)$ and $(B,B)$ [$(B,A)$ has swap symmetry with $(A,B)$]. In conclusion, at $k$-th order, all possible independent combinations are $$ \sum^k_{t=\ceil{k/4}}\binom{L}{t}  \sum_{p_4=0} ^{\floor{(k-t)/3}} \binom{t}{p_4} \sum_{p_3=0}^{\floor{k-t-3p_4}/2}   \binom{t-p_4}{p_3} 2^{p_3}  \binom{t-p_4-p_3}{k-t-2p_3-3p_4} 3^{k-t-2p_3-3p_4}.$$

\section{Details of Case Study 1: Single-qubit fundamental digital QNS protocol}
\label{app:case_1}
In Table~\ref{tab:case_1} we include the details of  $L=4$ fundamental digital QNS protocol that is truncated at saturation order $K=8$. There are  80  independent learnable CA spectra to reconstruct, which are used to predict the qubit behavior as shown in Fig.~\ref{fig:steps_QNS_recon}.

\begin{table*}  [!htbp]
    \centering
    \resizebox{0.7\textwidth}{!}{
\begin{tabular}{|cccc | c|| cccc|c| }
\hline$n=1$ & $n=2$ & $n=3$ & $n=4$ & $\widetilde{\rho}_Q$ & $n=1$ & $n=2$ & $n=3$ & $n=4$ & $\widetilde{\rho}_Q$ \\
\hline
   $\{1, 0, 0\}$ & $\{1, 0, 0\}$ & $\{1, 0, 0\}$ & $\{1, 0, 0\}$     &       $\sigma_z$ &    
   $\{1, 0, 0\}$ & $\{1, 0,  0\}$ & $\{1, 0, 0\}$ & $\{1, 0, 0\}$ &        $\sigma_y$  \\  
   $\{1, 0, 0\}$ & $\{1, 0, 0\}$ & $\{1, 0, 0\}$ & $\{0, 1, 0\}$ &        $\sigma_z$ &    
   $\{1, 0, 0\}$ & $\{1, 0, 0\}$ & $\{1, 0, 0\}$ & $\{0, 1, 0\}$ &        $\sigma_y$  \\   
   $\{1, 0, 0\}$ & $\{1, 0, 0\}$ & $\{1, 0, 0\}$ & $\{0, 0, 1\}$ &        $\sigma_z$ & 
   $\{1, 0, 0\}$ & $\{1, 0, 0\}$ & $\{1, 0,  0\}$ & $\{0, 0, 1\}$ &        $\sigma_y$ \\   
   $\{1, 0, 0\}$ & $\{1, 0, 0\}$ & $\{0, 1, 0\}$ & $\{1, 0, 0\}$ &        $\sigma_z$ &    
   $\{1,  0, 0\}$ & $\{1, 0, 0\}$ & $\{0, 1, 0\}$ & $\{1, 0, 0\}$ &       $\sigma_y$  \\
   $\{1, 0, 0\}$ & $\{1, 0, 0\}$ & $\{0,  1, 0\}$ & $\{0, 1, 0\}$ &       $\sigma_z$ &    
   $\{1, 0, 0\}$ & $\{1, 0, 0\}$ & $\{0, 1, 0\}$ & $\{0, 1,  0\}$ &        $\sigma_y$ \\
   $\{1, 0, 0\}$ & $\{1, 0, 0\}$ & $\{0, 1, 0\}$ & $\{0, 0, 1\}$ &        $\sigma_z$ &    
   $\{1, 0, 0\}$ & $\{1, 0, 0\}$ & $\{0, 1, 0\}$ & $\{0, 0, 1\}$ &        $\sigma_y$ \\
   $\{1, 0, 0\}$ & $\{1, 0, 0\}$ & $\{0, 1, 0\}$ & $\{1/\sqrt{2}, 1/\sqrt{2}, 0\}$ &   $\sigma_z$ &     
   $\{1, 0, 0\}$ & $\{1, 0, 0\}$ & $\{0, 1, 0\}$ & $\{1/\sqrt{2}, 1/\sqrt{2}, 0\}$ &      $\sigma_y$ \\ 
   $\{1, 0, 0\}$ & $\{1, 0, 0\}$ & $\{0, 1, 0\}$ & $\{0, 1/\sqrt{2}, 1/\sqrt{2}\}$ &        $\sigma_z$ &
   $\{1, 0, 0\}$ & $\{1, 0, 0\}$ & $\{0, 1, 0\}$ & $\{0, 1/\sqrt{2}, 1/\sqrt{2}\}$ &      $\sigma_y$   \\  
   $\{1, 0, 0\}$ & $\{1, 0, 0\}$ & $\{0, 0, 1\}$ & $\{1, 0, 0\}$ &        $\sigma_z$ & 
   $\{1, 0, 0\}$ & $\{1, 0, 0\}$ & $\{0, 0, 1\}$ & $\{1, 0, 0\}$ &        $\sigma_y$ \\  
   $\{1, 0, 0\}$ & $\{0, 1, 0\}$ & $\{1, 0, 0\}$ & $\{1, 0, 0\}$ &        $\sigma_z$ & 
   $\{1, 0, 0\}$ & $\{0, 1, 0\}$ & $\{1, 0, 0\}$ & $\{1, 0, 0\}$ &        $\sigma_y$ \\
   $\{1, 0, 0\}$ & $\{0, 1, 0\}$ & $\{1, 0, 0\}$ & $\{0, 1, 0\}$ &        $\sigma_z$ &    
   $\{1, 0, 0\}$ & $\{0, 1, 0\}$ & $\{1, 0, 0\}$ & $\{0, 1, 0\}$ &        $\sigma_y$ \\
   $\{1, 0, 0\}$ & $\{0, 1, 0\}$ & $\{1, 0, 0\}$ & $\{0, 0, 1\}$ &        $\sigma_z$ &  
   $\{1, 0, 0\}$ & $\{0, 1, 0\}$ & $\{1, 0, 0\}$ & $\{0, 0, 1\}$ &        $\sigma_y$  \\
   $\{1, 0, 0\}$ & $\{0, 1, 0\}$ & $\{1, 0, 0\}$ & $\{1/\sqrt{2}, 1/\sqrt{2}, 0\}$ &     $\sigma_z$ &       
   $\{1, 0, 0\}$ & $\{0, 1, 0\}$ & $\{1, 0, 0\}$ & $\{1/\sqrt{2}, 1/\sqrt{2}, 0\}$ &        $\sigma_y$ \\    
   $\{1, 0, 0\}$ & $\{0, 1, 0\}$ & $\{1, 0, 0\}$ & $\{1/\sqrt{2}, 0, 1/\sqrt{2}\}$ &       $\sigma_z$ &     
   $\{1, 0, 0\}$ & $\{0, 1, 0\}$ & $\{1, 0, 0\}$ & $\{1/\sqrt{2}, 0, 1/\sqrt{2}\}$ &       $\sigma_y$ \\   
   $\{1, 0, 0\}$ & $\{0, 1, 0\}$ & $\{1, 0, 0\}$ & $\{0, 1/\sqrt{2}, 1/\sqrt{2}\}$ &        $\sigma_z$ & 
   $\{1, 0, 0\}$ & $\{0, 1, 0\}$ & $\{1, 0, 0\}$ & $\{0, 1/\sqrt{2}, 1/\sqrt{2}\}$ &        $\sigma_y$ \\ 
   $\{1, 0, 0\}$ & $\{0, 1, 0\}$ & $\{0, 1, 0\}$ & $\{1, 0, 0\}$ &        $\sigma_z$ &    
   $\{1, 0, 0\}$ & $\{0, 1, 0\}$ & $\{0, 1, 0\}$ & $\{1, 0,  0\}$ &        $\sigma_y$ \\    
   $\{1, 0, 0\}$ & $\{0, 1, 0\}$ & $\{0, 1, 0\}$ & $\{0, 1, 0\}$ &        $\sigma_z$ & 
   $\{1, 0, 0\}$ & $\{0, 1, 0\}$ & $\{0, 1, 0\}$ & $\{0, 1, 0\}$ &        $\sigma_y$  \\
   $\{1, 0, 0\}$ & $\{0, 1, 0\}$ & $\{0, 1, 0\}$ & $\{0, 0, 1\}$ &        $\sigma_z$ &    
   $\{1, 0, 0\}$ & $\{0, 1, 0\}$ & $\{0, 1, 0\}$ & $\{0, 0, 1\}$ &        $\sigma_y$  \\
   $\{1, 0, 0\}$ & $\{0, 1, 0\}$ & $\{0, 1, 0\}$ & $\{1/\sqrt{2}, 1/\sqrt{2}, 0\}$ &      $\sigma_z$ &    
   $\{1, 0, 0\}$ & $\{0, 1, 0\}$ & $\{0, 1, 0\}$ & $\{1/\sqrt{2}, 1/\sqrt{2}, 0\}$ &      $\sigma_y$ \\      
   $\{1, 0, 0\}$ & $\{0, 1, 0\}$ & $\{0, 1, 0\}$ & $\{0, 1/\sqrt{2}, 1/\sqrt{2}\}$ &      $\sigma_z$ &  
   $\{1, 0, 0\}$ & $\{0, 1, 0\}$ & $\{0, 1, 0\}$ & $\{0, 1/\sqrt{2}, 1/\sqrt{2}\}$ &      $\sigma_y$ \\  
   $\{1, 0, 0\}$ & $\{0, 1, 0\}$ & $\{0, 0, 1\}$ & $\{1, 0, 0\}$ &      $\sigma_z$  &
   $\{1, 0, 0\}$ & $\{0, 1, 0\}$ & $\{0, 0, 1\}$ & $\{1, 0, 0\}$ &      $\sigma_y$ \\   
   $\{1, 0, 0\}$ & $\{0, 1, 0\}$ & $\{0, 0, 1\}$ & $\{0, 1, 0\}$ &      $\sigma_z$ &   
   $\{1, 0, 0\}$ & $\{0, 1, 0\}$ & $\{0, 0, 1\}$ & $\{0, 1, 0\}$ &      $\sigma_y$ \\   
   $\{1, 0, 0\}$ & $\{0, 1, 0\}$ & $\{0, 0, 1\}$ & $\{0, 0, 1\}$ &      $\sigma_z$ &   
   $\{1, 0, 0\}$ & $\{0, 1, 0\}$ & $\{0, 0, 1\}$ & $\{0, 0,  1\}$ &     $\sigma_y$ \\   
   $\{1, 0, 0\}$ & $\{0, 1, 0\}$ & $\{0, 0, 1\}$ & $\{1/\sqrt{2}, 0, 1/\sqrt{2}\}$ &  $\sigma_z$ &       
   $\{1, 0, 0\}$ & $\{0, 1, 0\}$ & $\{0, 0, 1\}$ & $\{1/\sqrt{2}, 0, 1/\sqrt{2}\}$ &     $\sigma_y$ \\   
   $\{1, 0, 0\}$ & $\{0, 1, 0\}$ & $\{1/\sqrt{2}, 1/\sqrt{2}, 0\}$ & $\{1, 0,0\}$ &      $\sigma_z$ &  
   $\{1, 0, 0\}$ & $\{0, 1, 0\}$ & $\{1/\sqrt{2}, 1/\sqrt{2}, 0\}$ & $\{1, 0,0\}$ &      $\sigma_y$ \\  
   $\{1, 0, 0\}$ & $\{0, 1, 0\}$ & $\{0, 1/\sqrt{2}, 1/\sqrt{2}\}$ & $\{1, 0,0\}$ &      $\sigma_z$ &    
   $\{1, 0, 0\}$ & $\{0, 1, 0\}$ & $\{0, 1/\sqrt{2}, 1/\sqrt{2}\}$ & $\{1, 0,0\}$ &       $\sigma_y$ \\   
   $\{1, 0, 0\}$ & $\{0, 0, 1\}$ & $\{1, 0, 0\}$ & $\{1, 0, 0\}$ &        $\sigma_z$ & 
   $\{1, 0, 0\}$ & $\{0, 0, 1\}$ & $\{1, 0, 0\}$ & $\{1, 0, 0\}$ &        $\sigma_y$ \\  
   $\{0, 1, 0\}$ & $\{1, 0, 0\}$ & $\{1, 0, 0\}$ & $\{1, 0, 0\}$ &        $\sigma_y$ & 
   $\{0, 1, 0\}$ & $\{1, 0, 0\}$ & $\{1, 0, 0\}$ & $\{0, 1, 0\}$ &        $\sigma_y$  \\
   $\{0, 1, 0\}$ & $\{1, 0, 0\}$ & $\{1, 0, 0\}$ & $\{0, 0, 1\}$ &        $\sigma_y$  &
   $\{0, 1, 0\}$ & $\{1, 0, 0\}$ & $\{1, 0,  0\}$ & $\{0, 1/\sqrt{2}, 1/\sqrt{2}\}$ &   $\sigma_y$ \\       
   $\{0, 1, 0\}$ & $\{1, 0, 0\}$ & $\{0, 1, 0\}$ & $\{1, 0, 0\}$ &      $\sigma_y$ &   
   $\{0, 1, 0\}$ & $\{1, 0, 0\}$ & $\{0, 1, 0\}$ & $\{0, 1, 0\}$ &        $\sigma_y$ \\ 
   $\{0, 1, 0\}$ & $\{1, 0, 0\}$ & $\{0, 1, 0\}$ & $\{0, 0, 1\}$ &        $\sigma_y$ &   
   $\{0, 1, 0\}$ & $\{1, 0, 0\}$ & $\{0, 1, 0\}$ & $\{1/\sqrt{2}, 1/\sqrt{2}, 0\}$ &      $\sigma_y$ \\     
   $\{0, 1, 0\}$ & $\{1, 0, 0\}$ & $\{0, 1, 0\}$ & $\{1/\sqrt{2}, 0, 1/\sqrt{2}\}$ &     $\sigma_y$ &      
   $\{0, 1, 0\}$ & $\{1, 0, 0\}$ & $\{0, 1,  0\}$ & $\{0, 1/\sqrt{2}, 1/\sqrt{2}\}$ &        $\sigma_y$ \\   
   $\{0, 1, 0\}$ & $\{1, 0, 0\}$ & $\{0, 0,  1\}$ & $\{1, 0, 0\}$ &        $\sigma_y$ &   
   $\{0, 1, 0\}$ & $\{1, 0, 0\}$ & $\{0, 0, 1\}$ & $\{0, 1, 0\}$ &        $\sigma_y$ \\   
   $\{0, 1, 0\}$ & $\{1, 0, 0\}$ & $\{0, 0, 1\}$ & $\{0, 0, 1\}$ &        $\sigma_y$ &   
   $\{0, 1, 0\}$ & $\{1, 0, 0\}$ & $\{1/\sqrt{2}, 0, 1/\sqrt{2}\}$ & $\{0, 1, 0\}$ &      $\sigma_y$ \\     
   $\{0, 1, 0\}$ & $\{1, 0, 0\}$ & $\{0, 1/\sqrt{2}, 1/\sqrt{2}\}$ & $\{1, 0, 0\}$ &        $\sigma_y$ &   
   $\{0, 1, 0\}$ & $\{0, 1, 0\}$ & $\{1, 0, 0\}$ & $\{1, 0, 0\}$ &        $\sigma_y$ \\   
   $\{0, 1, 0\}$ & $\{0, 1, 0\}$ & $\{1, 0, 0\}$ & $\{0, 1, 0\}$ &        $\sigma_y$ &   
   $\{0, 1, 0\}$ & $\{0, 1, 0\}$ & $\{1, 0, 0\}$ & $\{0, 0, 1\}$ &        $\sigma_y$ \\   
   $\{0, 1, 0\}$ & $\{0, 1, 0\}$ & $\{1, 0, 0\}$ & $\{0, 1/\sqrt{2}, 1/\sqrt{2}\}$ &      $\sigma_y$ &     
   $\{0, 1, 0\}$ & $\{0, 1, 0\}$ & $\{0, 1, 0\}$ & $\{1, 0, 0\}$ &        $\sigma_y$ \\   
   $\{0, 1, 0\}$ & $\{0, 1, 0\}$ & $\{0, 1, 0\}$ & $\{0, 1/\sqrt{2}, 1/\sqrt{2}\}$ &      $\sigma_y$ &     
   $\{0, 1, 0\}$ & $\{0, 1, 0\}$ & $\{0, 0, 1\}$ & $\{0, 1, 0\}$  &        $\sigma_y$ \\   
   $\{0, 1, 0\}$ & $\{0, 1, 0\}$ & $\{0, 1/\sqrt{2}, 1/\sqrt{2}\}$ & $\{1, 0, 0\}$ &       $\sigma_y$ &    
   $\{0, 1, 0\}$ & $\{0, 0, 1\}$ & $\{1, 0, 0\}$ & $\{0, 1, 0\}$ &        $\sigma_y$ \\   
   $\{0, 1, 0\}$ & $\{0, 0, 1\}$ & $\{0, 1, 0\}$ & $\{1, 0, 0\}$&        $\sigma_y$ &   
   $\{0, 1, 0\}$ & $\{0, 1/\sqrt{2}, 1/\sqrt{2}\}$ & $\{1, 0, 0\}$ & $\{1, 0, 0\}$&       $\sigma_y$ \\    
\hline
\end{tabular}
}
\caption{\textbf{Single-qubit fundamental digital QNS protocol ($L=4$)}  The $L=4$ protocol consists of 80 control vectors $\{y_x(n),y_y(n),y_z(n)\}$ in different windows, $n=1,2,3,4$, and corresponding initial states $\widetilde{\rho}_Q$. The initial states here can be inferred from combinations of trace-1 physical initial states; e.g, $\widetilde{\rho}_Q=\sigma_z=1/2(\mathbb{I}_2+\sigma_z) - 1/2(\mathbb{I}_2-\sigma_z).$  In all experiments, ${O}\equiv\sigma_z$.  The protocol does not {\em a priori} assume stationarity and zero mean.    }
    \label{tab:case_1}
\end{table*}

\section{Details of Case Study 2: Two-qubit fundamental digital QNS protocol}
\label{app:case_2}

Here we provide the details of two-qubit fundamental digital QNS protocol with $L=2$ and $K=8$. Due to the complexity of two-qubit switching functions, we present the Hamiltonian-level control parameters obtained through $KAK$ decomposition (see main text) rather than  $y_{[q],u}(t)$. The control parameters of the two-qubit digital QNS protocol are shown in  Table.~\ref{tab:2qb_fundamental_QNS}.

\begin{table}[!htbp]
\centering
\resizebox{1.0\columnwidth}{!}{%
\begin{tabular}{|rrrrrrrrrrrrrrr|r||rrrrrrrrrrrrrrr|r| }
\toprule
 $\theta_A(1)$ &  $\alpha_A(1)$ &  $\phi_A(1)$ &  $\theta_B(1)$ &  $\alpha_B(1)$ &  $\phi_B(1)$ &  $\Theta(1)$ &  $\rho(1)$ &  $\omega(1)$ &  $\theta^{\prime}_A(1)$ &  $\alpha^{\prime}_A(1)$ &  $\phi^{\prime}_A(1)$ &  $\theta^{\prime}_B(1)$ &  $\alpha^{\prime}_B(1)$ &  $\phi^{\prime}_B(1)$ & $\widetilde{\rho}_Q$ &
 $\theta_A(1)$ &  $\alpha_A(1)$ &  $\phi_A(1)$ &  $\theta_B(1)$ &  $\alpha_B(1)$ &  $\phi_B(1)$ &  $\Theta(1)$ &  $\rho(1)$ &  $\omega(1)$ &  $\theta^{\prime}_A(1)$ &  $\alpha^{\prime}_A(1)$ &  $\phi^{\prime}_A(1)$ &  $\theta^{\prime}_B(1)$ &  $\alpha^{\prime}_B(1)$ &  $\phi^{\prime}_B(1)$ & $\widetilde{\rho}_Q$\\
 $\theta_A(2)$ &  $\alpha_A(2)$ &  $\phi_A(2)$ &  $\theta_B(2)$ &  $\alpha_B(2)$ &  $\phi_B(2)$ &  $\Theta(2)$ &  $\rho(2)$ &  $\omega(2)$ &  $\theta^{\prime}_A(2)$ &  $\alpha^{\prime}_A(2)$ &  $\phi^{\prime}_A(2)$ &  $\theta^{\prime}_B(2)$ &  $\alpha^{\prime}_B(2)$ &  $\phi^{\prime}_B(2)$ &  
 & $\theta_A(2)$ &  $\alpha_A(2)$ &  $\phi_A(2)$ &  $\theta_B(2)$ &  $\alpha_B(2)$ &  $\phi_B(2)$ &  $\Theta(2)$ &  $\rho(2)$ &  $\omega(2)$ &  $\theta^{\prime}_A(2)$ &  $\alpha^{\prime}_A(2)$ &  $\phi^{\prime}_A(2)$ &  $\theta^{\prime}_B(2)$ &  $\alpha^{\prime}_B(2)$ &  $\phi^{\prime}_B(2)$ & \\
\midrule
0.342& -0.102& 0.934& 0.0& 0.0& 0.0& 0.0& 0.0& 0.0& 0.0& 0.0& 0.0& 0.0& 0.0& 0.0& $\Lambda_{7}$ & 0.414& 0.22& 0.455& -0.15& 0.014& -0.263& 0.114& 0.049& -0.395& 0.387& 0.172& -0.216& -0.098& 0.269& -0.041 & $\Lambda_{2}$  \\
0.988& 0.022& 0.15& 0.0& 0.0& 0.0& 0.0& 0.0& 0.0& 0.0& 0.0& 0.0& 0.0& 0.0& 0.0  &   & -0.003& -0.123& -0.323& 0.255& -0.32& 0.091& -0.032& -0.083& -0.205& -0.03& 0.013& -0.087& -0.741& 0.295& -0.106  &  \\ \hline 
0.06& 0.474& -0.399& 0.0& 0.196& -0.492& -0.556& 0.0& -0.09& -0.013& -0.029& 0.118& 0.0& 0.0& 0.0& $\Lambda_{9}$ &  -0.008& 0.562& 0.827& 0.0& 0.0& 0.0& 0.0& 0.0& 0.0& 0.0& 0.0& 0.0& 0.0& 0.0& 0.0 & $\Lambda_{11}$  \\
0.347& -0.065& 0.159& -0.315& -0.06& -0.301& 0.007& -0.433& 0.141& 0.249& -0.206& 0.174& 0.242& 0.073& -0.5  &    &  0.0& 0.0& 1.0& 0.0& 0.0& 0.0& 0.0& 0.0& 0.0& 0.0& 0.0& 0.0& 0.0& 0.0& 0.0  &   \\ \hline 
0.71& -0.379& 0.103& 0.0& 0.0& 0.15& 0.554& 0.0& 0.016& 0.0& -0.11& -0.023& 0.0& 0.0& 0.0& $\Lambda_{2}$  &  -0.721& -0.026& -0.693& 0.0& 0.0& 0.0& 0.0& 0.0& 0.0& 0.0& 0.0& 0.0& 0.0& 0.0& 0.0 & $\Lambda_{7}$  \\
0.0& 0.0& 1.0& 0.0& 0.0& 0.0& 0.0& 0.0& 0.0& 0.0& 0.0& 0.0& 0.0& 0.0& 0.0  &    &  0.218& -0.19& -0.189& 0.0& -0.615& -0.638& -0.066& 0.0& 0.039& -0.047& 0.091& -0.282& 0.0& 0.0& 0.0  &   \\ \hline
0.0& 0.208& 0.295& -0.079& 0.007& -0.284& 0.201& -0.026& -0.032& 0.154& -0.109& -0.027& 0.01& 0.686& -0.484& $\Lambda_{10}$  &   0.558& 0.744& 0.369& 0.0& 0.0& 0.0& 0.0& 0.0& 0.0& 0.0& 0.0& 0.0& 0.0& 0.0& 0.0 & $\Lambda_{15}$ \\
-0.186& 0.965& 0.184& 0.0& 0.0& 0.0& 0.0& 0.0& 0.0& 0.0& 0.0& 0.0& 0.0& 0.0& 0.0  &   &  -0.532& 0.182& 0.827& 0.0& 0.0& 0.0& 0.0& 0.0& 0.0& 0.0& 0.0& 0.0& 0.0& 0.0& 0.0  &   \\ \hline
0.0& 0.0& 1.0& 0.0& 0.0& 0.0& 0.0& 0.0& 0.0& 0.0& 0.0& 0.0& 0.0& 0.0& 0.0& $\Lambda_{2}$  &   0.839& 0.516& 0.171& 0.0& 0.0& 0.0& 0.0& 0.0& 0.0& 0.0& 0.0& 0.0& 0.0& 0.0& 0.0 & $\Lambda_{10}$  \\
-0.35& 0.292& 0.323& 0.049& -0.009& -0.063& 0.047& 0.105& -0.108& 0.028& -0.142& -0.022& -0.544& -0.008& -0.582  & &  0.0& 0.588& -0.809& 0.0& 0.0& 0.0& 0.0& 0.0& 0.0& 0.0& 0.0& 0.0& 0.0& 0.0& 0.0  &   \\ \hline
0.204& -0.701& 0.683& 0.0& 0.0& 0.0& 0.0& 0.0& 0.0& 0.0& 0.0& 0.0& 0.0& 0.0& 0.0& $\Lambda_{14}$  &  0.203& -0.311& -0.928& 0.0& 0.0& 0.0& 0.0& 0.0& 0.0& 0.0& 0.0& 0.0& 0.0& 0.0& 0.0 & $\Lambda_{11}$ \\
-0.226& 0.514& 0.827& 0.0& 0.0& 0.0& 0.0& 0.0& 0.0& 0.0& 0.0& 0.0& 0.0& 0.0& 0.0  &  & 0.067& -0.535& 0.842& 0.0& 0.0& 0.0& 0.0& 0.0& 0.0& 0.0& 0.0& 0.0& 0.0& 0.0& 0.0  &   \\ \hline 
0.559& 0.532& -0.636& 0.0& 0.0& 0.0& 0.0& 0.0& 0.0& 0.0& 0.0& 0.0& 0.0& 0.0& 0.0& $\Lambda_{10}$  & -0.176& -0.107& -0.417& 0.0& 0.133& 0.792& -0.26& 0.0& -0.123& -0.019& 0.057& -0.228& 0.0& 0.0& 0.0 & $\Lambda_{5}$ \\
0.905& 0.294& -0.309& 0.0& 0.0& 0.0& 0.0& 0.0& 0.0& 0.0& 0.0& 0.0& 0.0& 0.0& 0.0  &   &  -0.86& -0.476& -0.184& 0.0& 0.0& 0.0& 0.0& 0.0& 0.0& 0.0& 0.0& 0.0& 0.0& 0.0& 0.0  &   \\ \hline
-0.302& -0.115& 0.815& -0.033& -0.348& -0.071& -0.139& -0.045& -0.075& 0.163& -0.004& -0.076& 0.193& -0.066& 0.062& $\Lambda_{15}$  &  0.559& -0.182& -0.809& 0.0& 0.0& 0.0& 0.0& 0.0& 0.0& 0.0& 0.0& 0.0& 0.0& 0.0& 0.0 & $\Lambda_{11}$ \\
 0.0& 0.0& 1.0& 0.0& 0.0& 0.0& 0.0& 0.0& 0.0& 0.0& 0.0& 0.0& 0.0& 0.0& 0.0  &   &  -0.866& 0.283& 0.413& 0.0& 0.0& 0.0& 0.0& 0.0& 0.0& 0.0& 0.0& 0.0& 0.0& 0.0& 0.0  &   \\ \hline
 0.213& -0.718& -0.128& -0.023& 0.305& 0.161& -0.397& -0.053& -0.15& -0.095& 0.285& 0.038& -0.028& -0.037& 0.162& $\Lambda_{14}$ &  0.476& 0.631& 0.391& -0.017& -0.324& 0.238& 0.011& 0.025& 0.111& 0.013& -0.156& -0.078& 0.105& -0.048& -0.051 & $\Lambda_{12}$ \\ 
 -0.905& -0.294& 0.309& 0.0& 0.0& 0.0& 0.0& 0.0& 0.0& 0.0& 0.0& 0.0& 0.0& 0.0& 0.0  &   &  -0.497& 0.494& 0.713& 0.0& 0.0& 0.0& 0.0& 0.0& 0.0& 0.0& 0.0& 0.0& 0.0& 0.0& 0.0  &   \\ \hline
 -0.765& 0.339& 0.548& 0.0& 0.0& 0.0& 0.0& 0.0& 0.0& 0.0& 0.0& 0.0& 0.0& 0.0& 0.0& $\Lambda_{10}$ & 0.354& 0.413& 0.839& 0.0& 0.0& 0.0& 0.0& 0.0& 0.0& 0.0& 0.0& 0.0& 0.0& 0.0& 0.0 & $\Lambda_{9}$ \\
 -0.033& 0.002& -0.999& 0.0& 0.0& 0.0& 0.0& 0.0& 0.0& 0.0& 0.0& 0.0& 0.0& 0.0& 0.0  &   & -0.693& 0.51& 0.192& 0.0& -0.106& -0.118& -0.07& 0.0& -0.071& -0.23& 0.356& 0.084& 0.0& 0.0& 0.0  &   \\ \hline
 0.344& -0.479& -0.48& 0.023& -0.317& 0.16& -0.387& -0.053& -0.164& 0.14& -0.257& 0.038& -0.04& 0.101& -0.129& $\Lambda_{2}$  &  0.964& -0.068& 0.103& 0.0& 0.0& 0.15& 0.1& -0.022& -0.005& 0.0& 0.046& 0.007& -0.015& 0.0& 0.142 & $\Lambda_{3}$ \\
-0.559& -0.769& 0.309& 0.0& 0.0& 0.0& 0.0& 0.0& 0.0& 0.0& 0.0& 0.0& 0.0& 0.0& 0.0  &   & -0.203& 0.294& 0.934& 0.0& 0.0& 0.0& 0.0& 0.0& 0.0& 0.0& 0.0& 0.0& 0.0& 0.0& 0.0  &   \\ \hline
-0.312& 0.429& -0.172& 0.091& 0.186& 0.311& 0.437& -0.085& 0.203& 0.317& 0.423& 0.023& 0.014& -0.056& -0.164& $\Lambda_{5}$ &  0.731& -0.205& 0.03& 0.023& -0.048& -0.172& -0.014& 0.008& 0.131& 0.516& 0.317& -0.065& -0.001& 0.003& 0.035 & $\Lambda_{13}$ \\
-0.203& 0.294& 0.934& 0.0& 0.0& 0.0& 0.0& 0.0& 0.0& 0.0& 0.0& 0.0& 0.0& 0.0& 0.0  & & 0.0& 0.588& -0.809& 0.0& 0.0& 0.0& 0.0& 0.0& 0.0& 0.0& 0.0& 0.0& 0.0& 0.0& 0.0  &   \\ \hline
-0.203& 0.294& 0.934& 0.0& 0.0& 0.0& 0.0& 0.0& 0.0& 0.0& 0.0& 0.0& 0.0& 0.0& 0.0& $\Lambda_{6}$  & 0.833& -0.075& 0.548& 0.0& 0.0& 0.0& 0.0& 0.0& 0.0& 0.0& 0.0& 0.0& 0.0& 0.0& 0.0 & $\Lambda_{5}$ \\
-0.184& -0.461& -0.306& 0.0& 0.0& 0.447& -0.673& 0.0& 0.047& 0.0& -0.029& 0.069& 0.0& 0.0& 0.0  &  & 0.325& 0.156& -0.508& 0.037& -0.026& 0.499& 0.137& 0.013& 0.038& 0.376& 0.14& 0.057& 0.009& -0.403& -0.119  &   \\ \hline
0.275& 0.166& -0.255& 0.003& -0.096& 0.52& 0.236& 0.227& -0.103& -0.334& -0.328& -0.334& -0.071& -0.221& -0.22& $\Lambda_{15}$  &  -0.105& -0.754& -0.649& 0.0& 0.0& 0.0& 0.0& 0.0& 0.0& 0.0& 0.0& 0.0& 0.0& 0.0& 0.0 & $\Lambda_{11}$  \\
-0.305& 0.952& -0.031& 0.0& 0.0& 0.0& 0.0& 0.0& 0.0& 0.0& 0.0& 0.0& 0.0& 0.0& 0.0  &  &  -0.747& 0.512& 0.424& 0.0& 0.0& 0.0& 0.0& 0.0& 0.0& 0.0& 0.0& 0.0& 0.0& 0.0& 0.0  &   \\ \hline 
0.0& 0.0& 1.0& 0.0& 0.0& 0.0& 0.0& 0.0& 0.0& 0.0& 0.0& 0.0& 0.0& 0.0& 0.0& $\Lambda_{7}$  &   0.206& -0.077& 0.976& 0.0& 0.0& 0.0& 0.0& 0.0& 0.0& 0.0& 0.0& 0.0& 0.0& 0.0& 0.0 & $\Lambda_{4}$  \\
-0.294& -0.103& 0.815& -0.091& -0.098& 0.057& -0.028& -0.03& 0.179& -0.365& 0.018& 0.031& -0.116& -0.18& -0.065  & &  0.072& 0.094& -0.31& 0.0& -0.281& 0.112& -0.031& 0.0& 0.208& 0.583& 0.225& 0.603& 0.0& 0.0& 0.0  &   \\ \hline
0.217& 0.284& 0.934& 0.0& 0.0& 0.0& 0.0& 0.0& 0.0& 0.0& 0.0& 0.0& 0.0& 0.0& 0.0& $\Lambda_{13}$ &  0.217& 0.284& 0.934& 0.0& 0.0& 0.0& 0.0& 0.0& 0.0& 0.0& 0.0& 0.0& 0.0& 0.0& 0.0 & $\Lambda_{11}$ \\
-0.466& -0.533& 0.127& 0.036& 0.264& -0.106& 0.522& -0.085& 0.182& -0.103& 0.238& -0.061& -0.098& 0.08& -0.021  &  &  -0.466& -0.533& 0.127& 0.036& 0.264& -0.106& 0.522& -0.085& 0.182& -0.103& 0.238& -0.061& -0.098& 0.08& -0.021  &   \\ \hline
-0.093& 0.129& 0.052& 0.549& 0.033& 0.117& -0.231& -0.049& -0.247& -0.308& 0.32& 0.361& -0.084& -0.22& 0.395& $\Lambda_{7}$  &  0.186& 0.965& -0.184& 0.0& 0.0& 0.0& 0.0& 0.0& 0.0& 0.0& 0.0& 0.0& 0.0& 0.0& 0.0 & $\Lambda_{14}$ \\
-0.559& -0.769& 0.309& 0.0& 0.0& 0.0& 0.0& 0.0& 0.0& 0.0& 0.0& 0.0& 0.0& 0.0& 0.0  &  & 0.203& -0.294& 0.934& 0.0& 0.0& 0.0& 0.0& 0.0& 0.0& 0.0& 0.0& 0.0& 0.0& 0.0& 0.0  &   \\ \hline
0.203& -0.294& 0.934& 0.0& 0.0& 0.0& 0.0& 0.0& 0.0& 0.0& 0.0& 0.0& 0.0& 0.0& 0.0& $\Lambda_{14}$  &  -0.031& 0.99& 0.139& 0.0& 0.0& 0.0& 0.0& 0.0& 0.0& 0.0& 0.0& 0.0& 0.0& 0.0& 0.0 & $\Lambda_{6}$  \\
 0.372& -0.135& 0.133& 0.0& -0.246& -0.736& -0.064& 0.237& 0.082& 0.068& -0.159& 0.327& -0.059& -0.049& 0.116  &  &  0.056& 0.02& 0.156& 0.35& 0.368& 0.232& -0.162& -0.12& -0.394& -0.017& 0.144& 0.545& -0.323& -0.153& 0.136  &   \\ \hline 
 -0.242& -0.175& 0.092& 0.119& -0.047& 0.299& 0.446& 0.278& -0.026& -0.22& -0.488& 0.198& 0.065& 0.129& 0.417& $\Lambda_{2}$ &  0.488& -0.671& 0.269& 0.035& 0.313& 0.161& -0.165& -0.01& 0.06& -0.05& -0.234& -0.094& 0.108& 0.065& -0.035 & $\Lambda_{15}$  \\
 -0.201& -0.297& -0.933& 0.0& 0.0& 0.0& 0.0& 0.0& 0.0& 0.0& 0.0& 0.0& 0.0& 0.0& 0.0  & &  -0.057& -0.017& 0.156& 0.0& -0.635& 0.264& -0.204& 0.0& -0.048& -0.142& -0.033& 0.659& 0.0& 0.0& 0.0  &   \\ \hline 
 0.068& -0.098& -0.31& 0.0& 0.275& -0.094& 0.089& 0.0& 0.247& 0.595& -0.134& -0.603& 0.0& 0.0& 0.0& $\Lambda_{2}$ &  0.337& 0.45& -0.827& 0.0& 0.0& 0.0& 0.0& 0.0& 0.0& 0.0& 0.0& 0.0& 0.0& 0.0& 0.0 & $\Lambda_{14}$ \\
 -0.008& -0.562& 0.827& 0.0& 0.0& 0.0& 0.0& 0.0& 0.0& 0.0& 0.0& 0.0& 0.0& 0.0& 0.0  &  &  0.0& 0.0& -0.602& 0.28& -0.146& -0.334& 0.0& 0.167& -0.198& 0.427& 0.0& 0.213& 0.347& 0.103& 0.0  &   \\ \hline 
 0.414& 0.22& 0.455& -0.15& 0.014& -0.263& 0.114& 0.049& -0.395& 0.387& 0.172& -0.216& -0.098& 0.269& -0.041& $\Lambda_{4}$ &  0.0& 0.433& 0.487& -0.176& 0.113& 0.34& -0.302& -0.243& -0.189& -0.147& 0.13& 0.097& -0.265& 0.25& -0.222 & $\Lambda_{10}$ \\
-0.003& -0.123& -0.323& 0.255& -0.32& 0.091& -0.032& -0.083& -0.205& -0.03& 0.013& -0.087& -0.741& 0.295& -0.106  &  &  -0.369& -0.408& -0.097& -0.107& 0.067& 0.011& -0.299& -0.035& 0.011& -0.051& 0.174& 0.036& 0.207& -0.018& -0.71  &   \\ \hline 
0.008& 0.562& 0.827& 0.0& 0.0& 0.0& 0.0& 0.0& 0.0& 0.0& 0.0& 0.0& 0.0& 0.0& 0.0& $\Lambda_{5}$ & 0.731& -0.205& 0.03& 0.023& -0.048& -0.172& -0.014& 0.008& 0.131& 0.516& 0.317& -0.065& -0.001& 0.003& 0.035 & $\Lambda_{8}$ \\
-0.924& -0.156& -0.349& 0.0& 0.0& 0.0& 0.0& 0.0& 0.0& 0.0& 0.0& 0.0& 0.0& 0.0& 0.0  &  & 0.0& 0.588& -0.809& 0.0& 0.0& 0.0& 0.0& 0.0& 0.0& 0.0& 0.0& 0.0& 0.0& 0.0& 0.0  &   \\ \hline
-0.284& -0.406& -0.182& -0.003& -0.22& -0.132& 0.638& -0.186& -0.235& 0.085& 0.178& 0.273& -0.162& 0.056& 0.128& $\Lambda_{13}$  & 0.488& 0.778& -0.395& 0.0& 0.0& 0.0& 0.0& 0.0& 0.0& 0.0& 0.0& 0.0& 0.0& 0.0& 0.0 & $\Lambda_{5}$  \\
-0.5& -0.505& 0.703& 0.0& 0.0& 0.0& 0.0& 0.0& 0.0& 0.0& 0.0& 0.0& 0.0& 0.0& 0.0  &    & -0.709& 0.59& 0.386& 0.0& 0.0& 0.0& 0.0& 0.0& 0.0& 0.0& 0.0& 0.0& 0.0& 0.0& 0.0  &   \\ \hline 
0.0& 0.0& 1.0& 0.0& 0.0& 0.0& 0.0& 0.0& 0.0& 0.0& 0.0& 0.0& 0.0& 0.0& 0.0& $\Lambda_{15}$ & -0.95& 0.254& -0.181& 0.0& 0.0& 0.0& 0.0& 0.0& 0.0& 0.0& 0.0& 0.0& 0.0& 0.0& 0.0 & $\Lambda_{10}$ \\
0.345& -0.476& -0.809& 0.0& 0.0& 0.0& 0.0& 0.0& 0.0& 0.0& 0.0& 0.0& 0.0& 0.0& 0.0  &   & 0.0& 0.0& 1.0& 0.0& 0.0& 0.0& 0.0& 0.0& 0.0& 0.0& 0.0& 0.0& 0.0& 0.0& 0.0  &   \\ \hline 
-0.21& 0.682& 0.7& 0.0& 0.0& 0.0& 0.0& 0.0& 0.0& 0.0& 0.0& 0.0& 0.0& 0.0& 0.0& $\Lambda_{14}$  & 0.0& -0.951& 0.309& 0.0& 0.0& 0.0& 0.0& 0.0& 0.0& 0.0& 0.0& 0.0& 0.0& 0.0& 0.0 & $\Lambda_{13}$  \\
0.364& 0.223& -0.237& 0.0& -0.189& 0.275& -0.031& 0.0& 0.485& -0.292& 0.47& 0.33& 0.0& 0.0& 0.0  &   & 0.148& -0.567& -0.182& 0.0& 0.0& 0.167& -0.52& 0.0& -0.361& 0.0& -0.293& -0.33& 0.0& 0.0& 0.0  &   \\ \hline 
-0.008& 0.357& 0.934& 0.0& 0.0& 0.0& 0.0& 0.0& 0.0& 0.0& 0.0& 0.0& 0.0& 0.0& 0.0& $\Lambda_{6}$ & -0.008& -0.032& -0.999& 0.0& 0.0& 0.0& 0.0& 0.0& 0.0& 0.0& 0.0& 0.0& 0.0& 0.0& 0.0 & $\Lambda_{10}$ \\
-0.285& -0.111& -0.309& 0.0& -0.048& -0.553& 0.243& 0.0& 0.279& -0.223& -0.177& -0.533& 0.0& 0.0& 0.0  &  & -0.14& -0.759& -0.636& 0.0& 0.0& 0.0& 0.0& 0.0& 0.0& 0.0& 0.0& 0.0& 0.0& 0.0& 0.0  &   \\ \hline
-0.21& 0.682& 0.7& 0.0& 0.0& 0.0& 0.0& 0.0& 0.0& 0.0& 0.0& 0.0& 0.0& 0.0& 0.0& $\Lambda_{1}$ &  0.71& -0.379& 0.103& 0.0& 0.0& 0.15& 0.554& 0.0& 0.016& 0.0& -0.11& -0.023& 0.0& 0.0& 0.0 & $\Lambda_{14}$  \\
 0.364& 0.223& -0.237& 0.0& -0.189& 0.275& -0.031& 0.0& 0.485& -0.292& 0.47& 0.33& 0.0& 0.0& 0.0  &   & 0.0& 0.0& 1.0& 0.0& 0.0& 0.0& 0.0& 0.0& 0.0& 0.0& 0.0& 0.0& 0.0& 0.0& 0.0  &   \\ \hline 
 0.0& 0.0& 1.0& 0.0& 0.0& 0.0& 0.0& 0.0& 0.0& 0.0& 0.0& 0.0& 0.0& 0.0& 0.0& $\Lambda_{5}$  & -0.559& -0.182& -0.809& 0.0& 0.0& 0.0& 0.0& 0.0& 0.0& 0.0& 0.0& 0.0& 0.0& 0.0& 0.0 & $\Lambda_{6}$ \\
 -0.217& 0.801& -0.559& 0.0& 0.0& 0.0& 0.0& 0.0& 0.0& 0.0& 0.0& 0.0& 0.0& 0.0& 0.0  &  & 0.532& 0.182& 0.827& 0.0& 0.0& 0.0& 0.0& 0.0& 0.0& 0.0& 0.0& 0.0& 0.0& 0.0& 0.0  &   \\ \hline 
 -0.186& -0.965& -0.184& 0.0& 0.0& 0.0& 0.0& 0.0& 0.0& 0.0& 0.0& 0.0& 0.0& 0.0& 0.0& $\Lambda_9$  & -0.37& -0.006& -0.18& -0.328& -0.136& 0.34& 0.267& -0.196& 0.094& -0.568& -0.173& -0.25& 0.104& -0.001& -0.214 & $\Lambda_{15}$ \\
 0.823& -0.042& -0.306& 0.013& -0.037& -0.27& -0.061& 0.056& 0.022& -0.154& 0.081& -0.038& 0.02& -0.321& 0.099  & & -0.476& 0.431& -0.767& 0.0& 0.0& 0.0& 0.0& 0.0& 0.0& 0.0& 0.0& 0.0& 0.0& 0.0& 0.0  &   \\ \hline
 -0.905& 0.294& 0.309& 0.0& 0.0& 0.0& 0.0& 0.0& 0.0& 0.0& 0.0& 0.0& 0.0& 0.0& 0.0& $\Lambda_{13}$ & 0.0& 0.294& -0.095& -0.407& 0.081& 0.086& 0.266& 0.037& 0.136& -0.17& -0.522& 0.268& 0.104& 0.155& 0.476 & $\Lambda_{9}$  \\
 0.0& 0.0& 1.0& 0.0& 0.0& 0.0& 0.0& 0.0& 0.0& 0.0& 0.0& 0.0& 0.0& 0.0& 0.0  &   & 0.105& -0.301& -0.331& 0.081& -0.089& -0.221& 0.173& 0.111& 0.478& 0.482& -0.288& -0.189& 0.243& 0.189& -0.095  &   \\ \hline
 0.0& -0.588& -0.809& 0.0& 0.0& 0.0& 0.0& 0.0& 0.0& 0.0& 0.0& 0.0& 0.0& 0.0& 0.0& $\Lambda_{9}$  & 0.689& 0.052& 0.626& 0.037& -0.064& 0.105& 0.062& 0.035& 0.053& -0.18& -0.044& 0.01& 0.057& 0.247& -0.083 & $\Lambda_{14}$ \\
-0.284& -0.67& -0.359& 0.008& -0.196& -0.084& 0.312& -0.018& -0.161& -0.093& 0.301& -0.013& 0.109& 0.08& -0.235  & &-0.412& 0.041& -0.135& 0.116& 0.115& 0.183& -0.294& 0.151& -0.046& 0.513& 0.296& -0.498& 0.013& 0.199& 0.022  &   \\ \hline
0.689& 0.052& 0.626& 0.037& -0.064& 0.105& 0.062& 0.035& 0.053& -0.18& -0.044& 0.01& 0.057& 0.247& -0.083& $\Lambda_{2}$  & -0.261& 0.126& -0.429& 0.0& -0.145& -0.323& -0.007& 0.418& -0.151& 0.065& 0.111& 0.136& 0.465& -0.201& -0.342 & $\Lambda_{5}$ \\
-0.412& 0.041& -0.135& 0.116& 0.115& 0.183& -0.294& 0.151& -0.046& 0.513& 0.296& -0.498& 0.013& 0.199& 0.022  & & -0.006& -0.128& 0.138& -0.102& -0.088& 0.413& 0.377& 0.526& -0.033& -0.001& -0.003& 0.1& 0.085& 0.426& 0.397  &   \\ \hline
0.537& -0.166& 0.827& 0.0& 0.0& 0.0& 0.0& 0.0& 0.0& 0.0& 0.0& 0.0& 0.0& 0.0& 0.0& $\Lambda_{10}$ & 0.532& -0.182& 0.827& 0.0& 0.0& 0.0& 0.0& 0.0& 0.0& 0.0& 0.0& 0.0& 0.0& 0.0& 0.0 & $\Lambda_{10}$ \\
0.851& 0.311& -0.423& 0.0& 0.0& 0.0& 0.0& 0.0& 0.0& 0.0& 0.0& 0.0& 0.0& 0.0& 0.0  &  & 0.342& 0.102& 0.934& 0.0& 0.0& 0.0& 0.0& 0.0& 0.0& 0.0& 0.0& 0.0& 0.0& 0.0& 0.0  &   \\ \hline
-0.266& 0.201& 0.719& 0.0& 0.13& -0.283& 0.127& 0.0& -0.038& 0.479& -0.148& 0.072& 0.0& 0.0& 0.0& $\Lambda_{6}$  & 0.718& -0.429& 0.548& 0.0& 0.0& 0.0& 0.0& 0.0& 0.0& 0.0& 0.0& 0.0& 0.0& 0.0& 0.0 & $\Lambda_{7}$ \\
0.203& 0.797& -0.568& 0.0& 0.0& 0.0& 0.0& 0.0& 0.0& 0.0& 0.0& 0.0& 0.0& 0.0& 0.0  &  & 0.975& -0.121& -0.184& 0.0& 0.0& 0.0& 0.0& 0.0& 0.0& 0.0& 0.0& 0.0& 0.0& 0.0& 0.0  &   \\ \hline
0.368& 0.749& -0.339& 0.0& -0.013& -0.032& -0.085& 0.0& 0.234& -0.235& -0.265& -0.031& 0.0& 0.0& 0.0& $\Lambda_{1}$ & -0.311& -0.259& -0.625& 0.0& -0.181& -0.396& 0.254& -0.095& -0.006& -0.121& 0.053& -0.031& 0.019& 0.371& -0.163 & $\Lambda_{12}$ \\
0.0& 0.0& 1.0& 0.0& 0.0& 0.0& 0.0& 0.0& 0.0& 0.0& 0.0& 0.0& 0.0& 0.0& 0.0  &  & 0.07& 0.817& -0.572& 0.0& 0.0& 0.0& 0.0& 0.0& 0.0& 0.0& 0.0& 0.0& 0.0& 0.0& 0.0  &   \\ \hline
-0.37& -0.006& -0.18& -0.328& -0.136& 0.34& 0.267& -0.196& 0.094& -0.568& -0.173& -0.25& 0.104& -0.001& -0.214& $\Lambda_{8}$  & 0.167& 0.794& 0.478& 0.029& 0.133& -0.071& 0.072& -0.01& 0.192& 0.062& -0.171& 0.046& -0.043& 0.059& -0.083 & $\Lambda_{10}$  \\
-0.476& 0.431& -0.767& 0.0& 0.0& 0.0& 0.0& 0.0& 0.0& 0.0& 0.0& 0.0& 0.0& 0.0& 0.0  & & 0.549& -0.728& -0.412& 0.0& 0.0& 0.0& 0.0& 0.0& 0.0& 0.0& 0.0& 0.0& 0.0& 0.0& 0.0  &   \\ \hline
0.0& 0.0& 1.0& 0.0& 0.0& 0.0& 0.0& 0.0& 0.0& 0.0& 0.0& 0.0& 0.0& 0.0& 0.0& $\Lambda_{11}$ & 0.269& -0.119& 0.139& -0.167& 0.115& 0.448& 0.161& -0.39& 0.132& 0.179& -0.103& -0.278& -0.255& -0.52& 0.048 & $\Lambda_{9}$  \\
-0.294& -0.103& 0.815& -0.091& -0.098& 0.057& -0.028& -0.03& 0.179& -0.365& 0.018& 0.031& -0.116& -0.18& -0.065  & & -0.281& -0.661& -0.696& 0.0& 0.0& 0.0& 0.0& 0.0& 0.0& 0.0& 0.0& 0.0& 0.0& 0.0& 0.0  &   \\ \hline
0.0& 0.433& 0.487& -0.176& 0.113& 0.34& -0.302& -0.243& -0.189& -0.147& 0.13& 0.097& -0.265& 0.25& -0.222& $\Lambda_{1}$ & 0.75& -0.131& -0.649& 0.0& 0.0& 0.0& 0.0& 0.0& 0.0& 0.0& 0.0& 0.0& 0.0& 0.0& 0.0 & $\Lambda_{7}$  \\
-0.369& -0.408& -0.097& -0.107& 0.067& 0.011& -0.299& -0.035& 0.011& -0.051& 0.174& 0.036& 0.207& -0.018& -0.71  &   & 0.07& 0.817& -0.572& 0.0& 0.0& 0.0& 0.0& 0.0& 0.0& 0.0& 0.0& 0.0& 0.0& 0.0& 0.0  &   \\ \hline
0.905& 0.294& -0.309& 0.0& 0.0& 0.0& 0.0& 0.0& 0.0& 0.0& 0.0& 0.0& 0.0& 0.0& 0.0& $\Lambda_{4}$  & 0.008& -0.317& -0.462& -0.056& -0.034& -0.47& 0.322& 0.017& 0.054& -0.433& 0.298& -0.087& 0.033& 0.218& -0.149 & $\Lambda_{8}$  \\
 0.264& -0.293& -0.234& 0.0& 0.068& -0.122& 0.231& -0.095& -0.105& 0.083& -0.223& 0.031& -0.324& 0.255& -0.685  & & 0.193& 0.107& -0.041& 0.0& -0.053& 0.124& 0.073& 0.0& 0.298& -0.274& 0.683& 0.543& 0.0& 0.0& 0.0  &   \\ \hline
 -0.104& -0.037& -0.288& 0.0& 0.693& 0.263& -0.283& -0.287& -0.054& 0.144& 0.001& -0.395& 0.039& -0.104& -0.001& $\Lambda_{9}$  & 0.004& -0.352& 0.936& 0.0& 0.0& 0.0& 0.0& 0.0& 0.0& 0.0& 0.0& 0.0& 0.0& 0.0& 0.0 & $\Lambda_{15}$ \\
 0.0& 0.0& 1.0& 0.0& 0.0& 0.0& 0.0& 0.0& 0.0& 0.0& 0.0& 0.0& 0.0& 0.0& 0.0  &   & 0.0& 0.0& 1.0& 0.0& 0.0& 0.0& 0.0& 0.0& 0.0& 0.0& 0.0& 0.0& 0.0& 0.0& 0.0  &   \\ \hline
\bottomrule
\end{tabular}%
}
\caption{\textbf{Two-qubit fundamental digital QNS protocol} The $L = 2$ protocol comprises 80 control sequences and corresponding initial states $\widetilde{\rho}_Q$. These initial states can be derived from combinations of trace-1 physical states. In all experiments, the observable is $O = \sigma_z \otimes \sigma_z$.}
\label{tab:2qb_fundamental_QNS}
\end{table}

\clearpage
\newpage
\bibliography{main.bib}
\end{document}